\outer\long\def\BLURB#1{}
\def\real{\Vdash}
\def\limp{\Rightarrow}
\def\liff{\Leftrightarrow}
\def\<{\langle}
\def\>{\rangle}
\def\fa{\forall}
\def\ex{\exists}
\def\ra{\rightarrow}
\def\lra{\longrightarrow}  
\def\la{\leftarrow}
\def\Dom{\mathop{\mathrm{Dom}}\nolimits}
\def\Cod{\mathop{\mathrm{Cod}}\nolimits}
\def\Clos{\mathop{\mathrm{Clos}}\nolimits}
\def\CST{\texttt{cst}}
\def\FUN{\texttt{fun}}
\def\PRE{\texttt{pre}}
\def\P{\mathfrak{P}}
\def\declaresymbols#1#2{\vspace{12pt}\par\noindent
  \begin{tabular}{|>{$}p{20mm}<{$}|>{$}p{25mm}<{$}|p{63mm}|}
   \hline \multicolumn{3}{|l|}{\textbf{#1}} \\\hline #2\hline
  \end{tabular}\par
}
\def\M{\mathcal{M}}
\def\incirc{\mathrel{{\in}^{\circ}}}
\def\memcirc{\mathop{{\mem}^{\circ}}}
\def\Cl{\mathop{\mathrm{Cl}}}
\def\LBrc{{\{\mskip-4.5mu|}}
\def\RBrc{{|\mskip-4.5mu\}}}
\def\Nat{\mathit{Nat}}
\def\Pred{\mathit{Pred}}
\def\Null{\mathit{Null}}
\def\N{\mathbb{N}}
\def\Zermod{\mathsf{IZ}^{\mathrm{mod}}}
\def\Zst{\mathsf{IZ}^{\mathop{\mathrm{st}}}}
\def\ZF{\mathsf{ZF}}
\def\Zclass{\mathsf{IZ}^{\mathop{\mathrm{class}}}}
\def\Zskol{\mathsf{IZ}^{\mathop{\mathrm{skol}}}}
\def\ZskolS{\mathsf{IZ}^{\mathop{\mathrm{skol2}}}}
\def\Reif{\mathop{\mathrm{Reif}}\nolimits}
\def\Collapse{\mathop{\mathrm{Collapse}}\nolimits}
\def\ISeg{\mathop{\mathrm{ISeg}}\nolimits}
\def\Rgraph{\mathop{\mathrm{Rgraph}}\nolimits}
\def\Pgraph{\mathop{\mathrm{Pgraph}}\nolimits}
\def\Sgraph{\mathop{\mathrm{Sgraph}}\nolimits}
\def\Spgraph{\mathop{\mathrm{Spgraph}}\nolimits}
\def\Graph{\mathop{\mathrm{Graph}}\nolimits}
\def\Int#1{\llbracket#1\rrbracket}
\def\root{\mathop{\mathrm{root}}}
\def\rel{\mathop{\mathrm{rel}}}
\def\mem{\mathop{\mathrm{mem}}}
\def\Empty{\mathop{\mathit{Empty}}}
\def\Succ{\mathop{\mathit{Succ}}}
\def\Ind{\mathop{\mathit{Ind}}}
\def\Set{\mathop{\mathrm{Set}}}
\def\Class{\mathop{\mathrm{Class}}}
\def\Car#1{\overline{#1}}
\def\tlor{\mathrel{\tilde{\lor}}}
\def\cast#1{\lfloor#1\rfloor}
\newcommand\couic[1]{}
\newcommand\interp[1]{#1^*}
\newbox\tempa
\newbox\tempb
\newdimen\tempc
\def\mud#1{\hfil $\displaystyle{\mathstrut #1}$\hfil}
\def\rig#1{\hfil $\displaystyle{#1}$}
\def\irulehelp#1#2#3{\setbox\tempa=\hbox{$\displaystyle{\mathstrut #2}$}%
                        \setbox\tempb=\vbox{\halign{##\cr
        \mud{#1}\cr
        \noalign{\vskip\the\lineskip}%
        \noalign{\hrule height 0pt}%
        \rig{\vbox to 0pt{\vss\hbox to 0pt{${\; #3}$\hss}\vss}}\cr
        \noalign{\hrule}%
        \noalign{\vskip\the\lineskip}%
        \mud{\copy\tempa}\cr}}%
                      \tempc=\wd\tempb
                      \advance\tempc by \wd\tempa
                      \divide\tempc by 2 }
\def\irule#1#2#3{{\irulehelp{#1}{#2}{#3}%
                     \hbox to \wd\tempa{\hss \box\tempb \hss}}}
\def\Proof{\mathrm{Proof}}
\def\CR{\mathcal{CR}}
\def\SN{\mathrm{SN}}
\begin{document}

\title{Cut elimination for Zermelo set theory}
\author{Gilles Dowek\inst{1} \and Alexandre Miquel\inst{2}}
\institute{
  {\'E}cole polytechnique and INRIA\\
  LIX, {\'E}cole polytechnique, 91128 Palaiseau Cedex, France \\
  \email{Gilles.Dowek@polytechnique.edu} \and
  Universit{\'e} Paris~7,\\
  PPS, 175 Rue du Chevaleret, 75013 Paris, France \\
  \email{Alexandre.Miquel@pps.jussieu.fr} }
\pagestyle{headings} 

\maketitle

\begin{abstract}
  We show how to express intuitionistic Zermelo set theory in
  deduction modulo (\emph{i.e.}\ by replacing its axioms by rewrite
  rules) in such a way that the corresponding notion of proof enjoys
  the normalization property.
  To do so, we first rephrase set theory as a theory of pointed graphs
  (following a paradigm due to P.~Aczel) by interpreting set-theoretic
  equality as bisimilarity, and show that in this setting, Zermelo's
  axioms can be decomposed into graph-theoretic primitives that can be
  turned into rewrite rules.  We then show that the theory we obtain
  in deduction modulo is a conservative extension of (a minor
  extension of) Zermelo set theory.
  Finally, we prove the normalization of the intuitionistic fragment
  of the theory.
\end{abstract}

The cut elimination theorem is a central result in proof theory that
has many corollaries such as the disjunction property and the witness
property for constructive proofs, the completeness of various proof
search methods and the decidability of some fragments of predicate
logic, as well as some independence results.

However, most of these corollaries hold for pure predicate logic and
do not generally extend when we add axioms, because the property that
cut-free proofs end with an introduction rule does not generalize in
the presence of axioms.
Thus, extensions of the normalization theorem have been proved for
some axiomatic theories, for instance arithmetic, simple type
theory~\cite{Girard70,Girard72} or the so-called stratified
foundations~\cite{Crabbe91}.  There are several ways to extend
normalization to axiomatic theories: the first is to consider a
special form of cut corresponding to a given axiom, typically the
induction axiom.  A second is to transform axioms into deduction
rules, typically the $\beta$-equivalence axiom. A third way is to
replace axioms by computation rules and consider deduction rules
modulo the congruence generated by these computation rules
\cite{DHK,DowekWerner}.

Unfortunately, extending the normalization theorem to set theory has
always appeared to be difficult or even impossible: a counter example,
due to M.~Crabb\'e~\cite{Crabbe74} shows that normalization does not
hold when we replace the axioms of set theory by the obvious deduction
rules, and in particular the Restricted Comprehension axiom by a
deduction rule allowing to deduce the formula $a \in b \land P(x \la
a)$ from $a \in \{x \in b~|~P\}$ and vice-versa. In the same way,
normalization fails if we replace the comprehension axiom by a
computation rule rewriting $a \in \{x \in b~|~P\}$ to $a \in b \land
P(x \la a)$.  Calling $C$ the set $\{x \in A~|~\neg x \in x\}$ the
formula $C \in C$ rewrites to $C \in A \land \neg C \in C$ and it is
not difficult to check that the formula $\neg C \in A$.
This counterexample raises the following question: is the failure of
normalization an artifact of this particular formulation of set
theory, or do all formulations of this theory have a similar property?

More recently, interpretations of set theory in type theory have
been proposed~\cite{MiqPhD,Miq03,Miq04} that follow P.~Aczel's ``sets
as pointed graphs'' paradigm~\cite{Acz88} by interpreting sets as
pointed graphs and extensional equality as bisimilarity. One
remarkable feature about these translations is that they express set
theory in a framework that enjoys normalization. Another is that
although the formul{\ae}
$a \in \{x \in b~|~P\}$ and $a \in b \land P(x \la a)$
are provably equivalent, their proofs are different.
This suggests that the failure of
normalization for set theory is not a property of the theory itself,
but of some particular way to transform the axioms into deduction or
computation rules.

In the type theoretic interpretation of set theory where sets are
translated as pointed graphs, the membership relation~$\in$ is no
longer primitive, but defined in terms of other atomic relations such
as the ternary relation $x~\eta_{a}~y$ expressing that two nodes~$x$
and~$y$ are connected by an edge in a pointed graph~$a$.

In this paper, we aim at building a theory of pointed graphs
in predicate logic---that we call $\Zermod$---which is expressive
enough to encode set theory in a conservative way.
For that, we start from a simple extension of intuitionistic Zermelo
set theory (without foundation) called $\Zst$, namely, Zermelo set
theory with the axioms of Strong Extensionality and Transitive
Closure.

Instead of expressing the theory~$\Zermod$ with axioms, we
shall directly express it with computation rules.
It is well-known~\cite{DHK} that any theory expressed with computation
rules can also be expressed with axioms, replacing every computation
rule of the form $l \lra r$ by the axiom $l=r$ when~$l$ and~$r$ are
terms, or by the axiom $l \liff r$ when~$l$ and~$r$ are formul{\ae}.
Expressing this theory with rewrite rules instead of axioms makes the
normalization theorem harder to prove but is a key element for the
cut-free proofs to end with an introduction rule.
To prove our normalization theorem, we shall use two main
ingredients: reducibility candidates as introduced by J.-Y. Girard
\cite{Girard70} to prove normalization for higher-order logic, and
the forcing/realizability method, following~\cite{Crabbe91,realizmod}.

\section{Deduction modulo}

In deduction modulo, the notions of language, term and formula are
that of first-order predicate logic. But, a theory is formed with a set of
axioms $\Gamma$ \emph{and a congruence $\equiv$} defined on
formul{\ae}. Such a congruence may be defined by a rewrite systems on
terms and on formul{\ae}.  Then, the deduction
rules take this congruence into account. For instance, the
\emph{modus ponens} is not stated as usual
$$\irule{A \limp B~~~A}{B}{}$$
as the first premise need not be exactly $A \limp B$ but may be
only congruent to this formula, hence it is stated
$$\irule{C~~~A}{B}{\mbox{if $C \equiv A \limp B$}}$$
All the rules of natural deduction may be stated in a
similar way. See, for instance, \cite{DowekWerner} for a complete presentation.

For example, arithmetic can be defined by a congruence defined
by the following rewrite rules
$$\begin{array}{r@{~}c@{~}l}
  0 + y &\lra& y \\ S(x) + y &\lra& S(x+y) \\
\end{array}\qquad\begin{array}{r@{~}c@{~}l}
  0 \times y &\lra& 0 \\
  S(x) \times y &\lra& x \times y + y \\
\end{array}$$
and some axioms, including the identity axiom $\fa x~(x = x)$. 
In this theory, we can prove that the number $4$ is even 
$$\irule{\irule{\irule{} 
    {\Gamma \vdash \forall x~x = x}
    {\mbox{axiom}}} 
  {\Gamma \vdash 2 \times 2 = 4}
  {\mbox{$(x,x = x,4)$ $\forall$-elim}}} 
{\Gamma \vdash \exists x~2 \times x = 4} 
{\mbox{$(x,2 \times x = 4,2)$ $\exists$-intro}}\qquad\qquad\qquad$$
Substituting the term $2$ for the variable $x$ in the formula
$2\times x=4$ yields $2 \times 2 = 4$, that is congruent to $4=4$.
The transformation of one formula into the other, that requires
several proof steps in usual formulation of arithmetic, is dropped
from the proof in deduction modulo.

Deduction modulo allows rules rewriting terms to terms, but also 
atomic formul{\ae} to arbitrary ones. For instance
$$x \times y = 0 \lra x = 0 \lor y = 0$$

When we take the rewrite rules above, the axioms of addition 
and multiplication are not needed anymore as, for example, the 
formula $\forall y~0 + y = y$ is congruent to the axiom
$\forall y~y = y$. Thus, rewrite rules replace axioms. 

This equivalence between rewrite rules and axioms is expressed by
the \emph{equivalence lemma}, which says that for every
congruence~$\equiv$ we can find a theory $\mathcal{T}$ such that
$\Gamma\,{\vdash}\,A$ is provable in deduction modulo the congruence
$\equiv$ if and only if $\mathcal{T},\Gamma\,{\vdash}\,A$ is provable
in ordinary first-order predicate logic~\cite{DHK}.
Hence, deduction modulo is not a true extension of predicate logic,
but rather an alternative formulation of predicate logic. Of course,
the provable formul{\ae} are the same in both cases, but the proofs
are very different.

\section{Variations on axiomatic Set Theory}

In this section, we define the theory $\Zst$.
This theory is Zermelo set theory extended with two axioms: the Strong
Extensionality axiom (which replaces the standard Extensionality axiom
of set theory) and the Transitive Closure axiom.
In Zermelo-Fraenkel set theory with the Foundation axiom, the
Strong Extensionality axiom can be derived from the Foundation axiom,
but it is weaker.
Similarly, the Transitive Closure axiom is a consequence of the
Replacement scheme, but it is weaker.

Since the theory $\Zst$ is expressed in the standard existential way,
we also define two conservative extensions of~$\Zst$ plus a
non-conservative extension.
The first extension of~$\Zst$ is a theory called~$\Zclass$ obtained
by adding a conservative notion of class \emph{{\`a} la} Von
Neumann-Bernays-G\"odel.
The second extension, called $\Zskol$, is built from the latter by
adding Skolem symbols to denote sets and class, including notations to
denote sets and class defined by comprehension.
As we shall see in section~\ref{sec:TranslZermodZst}, such a
conservative extension of the language of set theory is convenient
to define the translation which maps formul{\ae} of the
language~$\Zermod$ of pointed graphs back to the language of set
theory.

The final extension, called $\ZskolS$, is an extension of $\Zskol$
with impredicative classes that will be used in
section~\ref{sec:Normalization}.
This extension of $\Zskol$ is nothing but a skolemized presentation of
second order Zermelo set theory with strong extensionality and
transitive closure.

\subsection{The theory $\Zst$}

\begin{definition}[The theory $\Zst$]  The theory $\Zst$ is expressed
  in predicate logic.  Its language is the language of first-order
  predicate logic formed with two binary predicate symbols $=$ and
  $\in$, and its axioms are given in Table~\ref{tbl:ZstAxioms}.
\end{definition}
We use the standard abbreviations:
$$\begin{array}{r@{~~}c@{~~}l}
  a \subseteq b & \equiv & \fa x\ (x \in a \limp x \in b)\\
  \Empty(a) & \equiv & \fa x\ \neg (x \in a)\\
  \Succ(a,b) & \equiv &
  \fa x\ (x \in b \liff (x \in a \lor x=a)) \\
  \Ind(c) &\equiv&
  \fa a\ (\Empty(a) \limp a \in c)  \land 
  \fa a\ (a \in c \limp \fa b~(\Succ(a,b) \limp b \in c))\\
  \Nat(a) &\equiv& \fa b\ (\Ind(b) \limp a \in b) 
\end{array}$$

\begin{table}[tp]
  $\begin{array}{l>{\quad}l}
    \hline\hline\\[-3pt]
    (\text{Reflexivity}) & \fa x\ (x=x) \\[6pt]
    (\text{Equ. Compat.}) & \fa x \fa x' \fa y~
    (x=x' \land  x=y\limp x'=y) \\[6pt]
    (\text{Mem. Left Compat.}) & \fa x \fa x' \fa y~
    (x=x' \land  x\in y\limp x'\in y) \\[6pt]
    (\text{Mem. Right Compat.}) & \fa x \fa y \fa y'~
    (y=y' \land  x\in y\limp x\in y') \\[6pt]
    (\text{Strong Extensionality}) &
    \fa x_1\cdots\fa x_n\fa a\fa b~ \\
    &\quad(R(a,b) \\
    &\quad\hphantom{(}{ \land }~\fa x\fa x'\fa y~
    (x'\in x \land  R(x,y)\limp\ex y'~(y'\in y \land  R(x',y'))) \\
    &\quad\hphantom{(}{ \land }~\fa y\fa y'\fa x~
    (y'\in y \land  R(x,y)\limp\ex x'~(x'\in x \land  R(x',y'))) \\
    &\quad\hphantom{(}{\limp}~a=b) \\
    \multicolumn{2}{l}
    {\qquad\text{for each formula $R(x,y)$ whose free variables
	are among $x_1,\ldots,x_n$, $x$ and $y$}} \\[6pt]
    (\text{Pairing}) & \fa a\fa b\ex e\fa x~
    (x\in e\liff x=a\lor x=b) \\[6pt]
    (\text{Union}) & \fa a\ex e\fa x ~
    (x\in e~\liff~\ex y~(x\in y \land  y\in a)) \\[6pt]
    (\text{Powerset}) & \fa a\ex e\fa x~
    (x\in e\liff x\subseteq a) \\[6pt]
    (\text{Restr. Comprehension}) &
    \fa x_1\cdots\fa x_n\fa a\ex e\fa x~
    (x\in e\liff x\in a \land  P(x)) \\
    \multicolumn{2}{l}
    {\qquad\text{for each formula $P(x)$ whose free variables
	are among $x_1,\ldots,x_n$, $a$ and $x$}} \\[6pt]
    (\text{Infinity}) & \ex e~\Ind(e) \\[6pt]
    (\text{Transitive closure}) & \fa a\ex e~(a\subseteq e \land 
    \fa x\fa y(x\in y \land  y\in e\limp x\in e)) \\[6pt]
    \hline\hline
  \end{array}$
  \caption{Axioms of the theory $\Zst$}
  \label{tbl:ZstAxioms}
\end{table}

Notice that in~$\Zst$ the standard formulation of the Extensionality
axiom is a consequence of the axiom of Strong Extensionality:
\begin{proposition} --- In $\Zst$, the following formula is provable:
  $$\fa a~\fa b~(\fa x~(x\in a\liff x\in b)~\limp~a=b)\,.
  \leqno(\text{Extensionality})$$
\end{proposition}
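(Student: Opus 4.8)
The plan is to derive Extensionality as a single instance of the Strong Extensionality schema, the whole point being to exhibit a bisimulation that connects two sets having the same elements. First I would fix $a$ and $b$ and assume the hypothesis $H\equiv\fa x~(x\in a\liff x\in b)$, so that it remains to prove $a=b$. The natural candidate is the relation
$$R(u,v)\;\equiv\;u=v\;\lor\;(u=a\land v=b),$$
which carries $a$ and $b$ as parameters (the variables $x_1,\dots,x_n$ of the schema, here $n=2$). I would then instantiate the universally quantified $a$ and $b$ of the axiom with the very sets $a,b$ we fixed.

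With this $R$ the premise $R(a,b)$ is immediate from the second disjunct together with Reflexivity. There remain the two bisimulation clauses. For the forth clause I would argue by cases on $R(u,v)$: if $u=v$ and $u'\in u$, then $u'\in v$ by Membership Right Compatibility, and $R(u',u')$ holds by Reflexivity; if $u=a$, $v=b$ and $u'\in u$, then $u'\in a$ by the compatibility axioms, hence $u'\in b$ by $H$, and therefore $u'\in v$ again by compatibility, while $R(u',u')$ holds as before. In both cases $v'=u'$ witnesses the required existential. The back clause is handled symmetrically, using the converse direction of the biconditional in $H$. Once both clauses are established, the chosen instance of Strong Extensionality yields $a=b$, which closes the proof.

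The only genuine idea here is the shape of $R$: a plain equality relation would fail, since its reflexivity premise $R(a,b)$ would already demand $a=b$, which is exactly what we are trying to establish. So I add the single \emph{seed} pair $(a,b)$ and let $H$ guarantee that this extra pair still satisfies the back-and-forth conditions (its successors on either side lie in $a$ and in $b$, hence are matched by the equality part of $R$). Everything else is routine propagation of the equality-compatibility axioms, and I do not expect any real obstacle; in particular the case analysis on the disjunction and the two directions of the biconditional in $H$ are all intuitionistically valid, so the argument goes through in the intuitionistic theory $\Zst$.
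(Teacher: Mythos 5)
Your proof is correct and uses exactly the same instance of Strong Extensionality as the paper, namely the relation $R(u,v)\equiv u=v\lor(u=a\land v=b)$ (the paper writes it as $(x=a\land y=b)\lor x=y$); you merely spell out the verification of the back-and-forth clauses that the paper leaves implicit.
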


\begin{proof} 
Using the instance of strong extensionality where
the formula $R(x,y)$ is $(x=a \land y=b) \lor x=y$\,.\qed
\end{proof}

\subsection{A conservative extension with a sort for classes}

\begin{definition}[The theory $\Zclass$]
The theory $\Zclass$ is expressed in many-sorted predicate logic. 
It has two sorts $\Set$ and $\Class$.
Its language is formed with two binary predicate symbols $=$ and $\in$
of rank $\<\Set,\Set\>$ and a binary predicate symbol $\mem$ of rank 
$\<\Set,\Class\>$.
The axioms of the theory $\Zclass$ are 
\begin{itemize}
\item the axioms of equality of $\Zst$ and the axiom 
  $$\fa x~\fa y~\fa p~(x = y \land \mem(x,p) \Rightarrow \mem(y,p))$$
\item the strong extensionality scheme, generalized to all formul{\ae}
  possibly containing the symbol $\mem$ and free variables of sort
  $\Class$, but no quantification on classes;
\item the pairing axiom, the union axiom, the powerset axiom, the
axiom of infinity, the axiom of transitive closure;
\item the restricted comprehension scheme, generalized to all
  formul{\ae} possibly containing the symbol $\mem$ and free variables
  of sort $\Class$, but no quantification on classes;
\item and finally, a \emph{class comprehension scheme}
  $$\ex \alpha\ \fa x\ (\mem(x,\alpha) \liff P)$$
  for each formula~$P$ possibly containing the symbol $\mem$ and
  free variables of sort $\Class$, but no quantification on classes.
\end{itemize}
\end{definition}

All the axioms of $\Zst$ are axioms of $\Zclass$, thus $\Zclass$ is an
extension of $\Zst$. To prove that this is a conservative extension,
we use a notion of intuitionistic model where formul{\ae} are valuated
in a Heyting algebra~\cite{Rasiowa}, and we prove that for every
intuitionistic model of $\Zst$ there is an intuitionistic model of
$\Zclass$ validating the same formul{\ae} of the language of $\Zst$.
Conservativity follows from the correctness and completeness of
intuitionistic logic w.r.t.\ to its Heyting algebra valuated models.

\begin{definition}
  --- Let $\M$ be an intuitionistic model of $\Zst$, whose domain is
  still written~$\M$ and whose underlying Heyting algebra is
  written~$B$.
  A function~$E$ from~$\M$ to~$B$ is said to be \emph{definable} if
  there exists a formula~$P$ in the language of $\Zst$ whose free
  variables are among $x, y_{1},\ldots,y_{n}$ and elements
  $b_{1},\ldots,b_{n}$ of~$\M$ such that for all~$a$
  $\llbracket P \rrbracket_{a/x,b_1/y_1,\ldots,b_n/y_n} = E(a)$.
\end{definition}

\begin{definition}
  --- Let $\M$ be a model of $\Zst$, and consider the structure~$\M'$
  (with the same underlying Heyting algebra~$B$) defined as follows:
  $\Int{\Set}=\M$ and $\Int{\Class}$ is the set of definable
  functions from~$\M$ to the underlying algebra~$B$.
  The denotation of the symbols~$=$ and~$\in$ is the same as in~$\M$,
  and the denotation of the symbol~$\mem$ is function application.
\end{definition}

\begin{proposition}
  --- The structure~$\M'$ is a model of $\Zclass$.
\end{proposition}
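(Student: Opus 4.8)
The plan is to exploit the fact that $\M'$ agrees with $\M$ on everything that mentions neither the new sort $\Class$ nor the new symbol $\mem$: the carrier of $\Set$ is $\M$ itself, and the denotations of $=$ and $\in$ are unchanged. Consequently, for every formula in the pure language of $\Zst$ with set parameters in $\M$, its denotation in $\M'$ coincides with its denotation in $\M$. This immediately settles all the axioms that carry no free class variable and do not involve $\mem$ --- reflexivity, the three compatibility axioms for $=$ and $\in$, pairing, union, powerset, infinity and transitive closure --- since each of these is valid in $\M$ by hypothesis.

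For the new compatibility axiom $\fa x\,\fa y\,\fa p\,(x=y \land \mem(x,p) \limp \mem(y,p))$, a class variable $p$ is interpreted by a definable function $E$, say $E(a)=\Int{Q}_{a/z,\,c_1/y_1,\ldots,c_m/y_m}$ for some formula $Q$ of the language of $\Zst$. What must be checked is that $\Int{a=b}\land E(a)\leq E(b)$ in $B$ for all $a,b$, that is, $\Int{a=b}\land\Int{Q(a)}\leq\Int{Q(b)}$. This is precisely the substitutivity (Leibniz) property for the formula $Q$, which is provable in $\Zst$ from reflexivity and the compatibility axioms and therefore holds in $\M$.

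The heart of the argument is a single \emph{unfolding lemma}. Because none of the three remaining schemes (strong extensionality, restricted comprehension and class comprehension) permits quantification over classes, every class term occurring in the relevant formula is a free class variable, interpreted by a definable function. I would prove, by induction on the structure of such a formula $R$, that once its free class variables are assigned definable functions $E_1,\ldots,E_k$, with $E_i$ defined by a set-theoretic formula $Q_i$, there is a formula $\hat R$ in the pure language of $\Zst$, with set parameters only, having the same denotation in $\M'$ as $R$: one simply replaces each atomic subformula $\mem(t,p_i)$ by $Q_i(z\la t)$, leaving $=$, $\in$, the propositional connectives and the set quantifiers untouched. The absence of class quantifiers is exactly what makes this induction go through, since we never have to express a range over all definable functions inside the set language.

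Granting the lemma, the three schemes reduce to facts already available in $\M$. For strong extensionality and restricted comprehension I apply the corresponding scheme of $\Zst$ to $\hat R$ (respectively $\hat P$) in $\M$; since the witnessing set is the same object in $\M'$ and the denotations agree, the instance holds in $\M'$. For class comprehension $\ex\alpha\,\fa x\,(\mem(x,\alpha)\liff P)$, the candidate witness is the function $E\colon a\mapsto\Int{P}_{a/x,\ldots}$; the unfolding lemma shows $E(a)=\Int{\hat P}_{a/x,\ldots}$, so $E$ is definable and hence belongs to $\Int{\Class}$, and taking $\alpha:=E$ makes the equivalence $\mem(x,\alpha)\liff P$ hold by construction. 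The main obstacle is thus the unfolding lemma together with the verification that the class produced for comprehension is genuinely definable; everything else is bookkeeping transported from $\M$.
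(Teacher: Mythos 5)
Your proof is correct and follows essentially the same route as the paper's: the core of both arguments is the same unfolding lemma, proved by induction on formulas without class quantifiers, where the only nontrivial case replaces $\mem(x,p)$ by the formula defining the function interpreting $p$; the remaining axioms are transported directly from $\M$. Your treatment is somewhat more explicit (in particular about the $\mem$-compatibility axiom and the definability of the comprehension witness), but there is no substantive difference in approach.
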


\begin{proof}
  To prove that $\M'$ is a model of the class comprehension
  scheme, of the generalized extensionality scheme and of the the
  generalized restricted comprehension scheme, 
  we prove that for any formula~$P$ containing no quantifiers on
  variable of the sort $\Class$ and assignment $\phi$, there
  exists a formula~$Q$ in the language of $\Zst$ and an assignment
  $\phi'$ such that for all $a$,
  $$\llbracket P \rrbracket_{\phi + a/x} =
  \llbracket Q \rrbracket_{\phi' + a/x}$$
  We proceed by induction over the structure of $P$.
  The only non trivial case is when $P=\mem(x,p)$ where~$p$ and~$x$
  are variables.
  Then, the object $\llbracket p \rrbracket_{\phi}$ is a definable
  function from~$\M$ to~$B$.
  Let~$Q$ and~$\phi'$ be the defining formula and
  assignment, for all $a$, we have
  $$\llbracket P \rrbracket_{\phi + a/x} =
  \llbracket Q \rrbracket_{\phi' + a/x}\eqno\qed$$
\end{proof}

\medbreak
Obviously, a formula of the language of $\Zst$ has the same
denotation in $\M$ and in $\M'$, hence the conservativity of~$\Zclass$
over~$\Zst$.

\subsection{A conservative extension with Skolem symbols}
\label{Skolem}

The language of $\Zskol$ is the following.
Notice that the language of terms expressing sets and classes now
contains binding symbols.
$$\begin{array}{r@{~~}c@{~~}l}
  t,u &::=& x \quad|\quad \bigcup t \quad|\quad \{t_1,t_2\}
  \quad|\quad \P(t) \\
  &|& \{x\in t\mid P'\} \quad|\quad \N \quad|\quad \Cl(t) \\
  \noalign{\medskip}
  T,U &::=& X \quad|\quad \LBrc x\mid P'\RBrc \\
  \noalign{\medskip}
  P,Q &::=& t=u \quad|\quad t\in u \quad|\quad \mem(t,T) \\
  &|& \top \quad|\quad \bot \quad|\quad P \land  Q
  \quad|\quad P\lor Q \quad|\quad P\limp Q \\
  &|& \fa x~P \quad|\quad\ex x~P \quad|\quad
  \fa X~P \quad|\quad\ex X~P \\
  \noalign{\medskip}
  P',Q' &::=& t=u \quad|\quad t\in u \quad|\quad \mem(t,T) \\
  &|& \top \quad|\quad \bot \quad|\quad P' \land  Q'
  \quad|\quad P'\lor Q' \quad|\quad P'\limp Q' \\
  &|& \fa x~P' \quad|\quad\ex x~P' \\
\end{array}\leqno\begin{array}{@{}l}
  \textbf{Terms} \\ \\
  \noalign{\medskip}
  \textbf{Class terms} \\
  \noalign{\medskip}
  \textbf{Formul{\ae}} \\ \\ \\
  \noalign{\medskip}
  \textbf{Restricted} \\ \textbf{formul{\ae}} \\ \\
\end{array}$$

\begin{definition}
  We define three transformations:
  \begin{itemize}
  \item A transformation on \emph{terms}, which maps each term~$t$
    of~$\Zskol$ equipped with a variable~$z$ to a formula of $\Zclass$
    written $z\incirc t$;
  \item A transformation on \emph{class terms}, which maps each class
    term~$T$ of~$\Zskol$ equipped with a variable~$z$ to a formula of
    $\Zclass$ written $\memcirc(z,T)$;
  \item A transformation on \emph{formul{\ae}}, which maps each
    formula~$P$ of $\Zskol$ to a formula of $\Zclass$
    written~$P^{\circ}$.
  \end{itemize}
  These transformations are defined by the following equations:
  $$\begin{array}{l@{~~}c@{~~}l}
    z\incirc x &\equiv& z\in x \\
    z\incirc\bigcup t &\equiv&
    \exists y~(z\in y \land  y\incirc t) \\
    z\incirc\{t_1,t_2\} &\equiv& (z=t_1)^{\circ}\lor(z=t_2)^{\circ} \\
    z\incirc\P(t) &\equiv& \forall y~(y\in z\limp y\incirc t) \\
    z\incirc\{x\in t\mid P'\} &\equiv&
    z\incirc t \land {P'}^{\circ}(x \la z) \\
    z\incirc \N &\equiv& \Nat(z) \\
    z\incirc\Cl(t) &\equiv& \forall x~[
    \fa y_1\fa y_2~(y_1\in y_2 \land  y_2\in x\limp y_1\in x) \land {}\\
    && \hphantom{\forall x~[}
    \fa y~(y\incirc t\limp y\in x)~\limp~z\in x] \\
    \noalign{\medskip}
    \memcirc(z,X) &\equiv& \mem(z,X) \\
    \memcirc(z,\LBrc x \mid P'\RBrc) &\equiv&
    {P'}^{\circ}(x \la z) \\
    \noalign{\medskip}
    (t=u)^\circ &\equiv&
    \forall z~(z\incirc t\liff z\incirc u) \\
    (t\in u)^{\circ} &\equiv&
    \exists x~((x=t)^{\circ} \land  x\incirc u) \\
    (\mem(t,U))^{\circ} &\equiv&
    \exists x~((x=t)^{\circ} \land  \memcirc(x,U)) \\
    (P \land  Q)^{\circ} &\equiv& P^{\circ} \land  Q^{\circ} \\
    && \text{etc.} \\
    (\fa x~P)^{\circ} &\equiv& \fa x~P^{\circ} \\
    (\ex x~P)^{\circ} &\equiv& \ex x~P^{\circ} \\
  \end{array}$$
\end{definition}
Notice that if~$P$ is already in the language of $\Zclass$, then the
equivalence $P\liff P^{\circ}$ is (intuitionistically) provable in
$\Zclass$.  

The notion of provability in $\Zskol$ is defined by $\Zskol \vdash P$
if $\Zclass \vdash P^{\circ}$. An equivalent solution would be to
define provability in $\Zskol$ directly from the expected deduction
rules and from the skolemized versions of the axioms of $\Zclass$.

We shall use the following abbreviations:
$$\begin{array}{r@{~~}c@{~~}l}
  \varnothing &\equiv& \{x \in \N~|~\bot\}\\
  X \cup Y &\equiv& \bigcup \{X,Y\}\\
  \{a\} &\equiv& \{a,a\}\\
  \<a,b\> &\equiv& \{\{a\},\{a,b\}\}\\
  \pi_1(x) &\equiv& \bigcup\{x_1\in\bigcup x\mid
  \ex x_2\ \ x\equiv\<x_1,x_2\>\} \\
  \pi_2(x) &\equiv& \bigcup\{x_2\in\bigcup x\mid
  \ex x_1\ \ x\equiv\<x_1,x_2\>\} \\
  X \times Y &\equiv&
  \{z \in \P(\P(X \cup Y))~|~\ex x~\ex y~
  (x \in X \land y \in Y \land z = \<x,y\>\}\\
  0 &\equiv& \varnothing \\
  1 &\equiv& \{\varnothing\} \\
  f(x) &\equiv& \bigcup \{y \in \bigcup \bigcup f~|~\<x,y\> \in f\}\\
  f_{|D} &\equiv& \{c \in f~|~\pi_1(c) \in D\}\\
\end{array}$$

\subsection{Second-order class quantification}

The model construction of section~\ref{sec:Normalization} (which is
devoted to the normalization proof of $\Zermod$) is not done
relatively to the theory~$\Zskol$, but relatively to the
extension~$\ZskolS$ of $\Zskol$ in which we drop the restriction on
the formul{\ae} that may be used in set$/$class comprehension (thus
allowing class quantification to appear everywhere in the language).

Of course, $\ZskolS$ is definitely not a conservative extension
of~$\Zskol$.
Actually, it is a skolemized presentation of second-order
Zermelo set theory (extended with Strong Extensionality and
Transitive Closure), which is proof-theoretically stronger
than~$\Zst$.
However, $\ZskolS$ has an obvious extensional model in~$\ZF$
which is defined by setting
$$\llbracket\Set\rrbracket=V_{2\omega}\qquad\text{and}\qquad
\llbracket\Class\rrbracket=V_{2\omega+1}\,,$$
where $(V_{\alpha})$ denotes the cumulative hierarchy (indexed by
ordinals).

\subsection{Projective classes}

Let~$A$ be class defined by a formula $A(x)$ with at most one free
variable~$x$, and $\phi(x,y)$ a formula with at most two free
variables~$x$ and~$y$.
We say that $\phi$ is a projection onto~$A$ if the following
formul{\ae} are provable:
\begin{enumerate}
\item $\fa x~\ex y~\phi(x,y)$
\item $\fa x~\fa y~\fa y'~(\phi(x,y)\land\phi(x,y')\limp y=y')$
\item $\fa x~(A(x)\limp\phi(x,x))$
\item $\fa x~\fa y~(\phi(x,y)\limp A(y))$
\end{enumerate}
A class~$A$ is \emph{projective} if there is a projection~$\phi$
onto~$A$.
Notice that in classical set theory every nonempty class $A$ is
projective, by taking
$$\phi(x,y) ~\equiv~ (A(x)\land y=x)\lor(\lnot A(x)\land y=a)$$
where $a$ is an arbitrary object such that $A(a)$.
In intuitionistic set theory, it is not the case anymore.
In some case, the formula $\phi(x,y)$ can be written $y=t(x)$ for some
term~$t$ with at most one free variable~$x$.
In this case, conditions~1 and~2 vanish, and conditions 3 and~4 are
rephrased as:
\begin{enumerate}
\item[$3'$.] $\forall x~(A(x)\limp t(x)=x)$
\item[$4'$.] $\forall x~A(t(x))$
\end{enumerate}
In what follows, the term $t(x)$ will be written $\cast{x}_A^t$, or
simply $\cast{x}_A$ when the term~$t$ is clear in the context.

\section{A theory of pointed graphs}

\subsection{Informal presentation}

The main definition in this paper is the theory $\Zermod$ that is a
presentation of set theory in deduction modulo with rewrite rules
only, \emph{i.e.}\ with no axioms. At a first glance, the theory $\Zermod$
looks more like a theory of pointed graphs, where usual set theoretic
notions such as membership and equality are derived notions.

Informally, a pointed graph is just a pair formed with a directed graph and
a distinguished node, called the {\em root} of the pointed graph.
In the theory $\Zermod$, all pointed graphs share the same nodes, but
may have different roots and edges.
Thus we have a sort~$N$ for nodes and a sort~$G$
for pointed graphs. The main symbol of the theory is a ternary
predicate symbol $\eta$, the formula $x~\eta_{a}~y$ expressing that
there is a edge from $y$ to $x$ in the pointed graph $a$. We also have
a function symbol $\root$ mapping each pointed graph to its root.

The easiest way to represent a set as a pointed graph is to represent
it as a tree whose root is connected to the roots of the trees
representing the elements of the set.
For instance, the set $\varnothing$ is represented as a
pointed graph with no edges. The set $\{\varnothing\}$
is represented as a tree whose root has one child that has no children, etc.

In the figure below, the pointed graph with no
edges and root \texttt{1} is a representation of the set $\varnothing$,
the pointed graph with root \texttt{2} and the plain edge is a
representation of the set $\{\varnothing\}$ and the pointed graph with
root \texttt{4} and the dotted edges is a representation of the set
$\{\varnothing, \{\varnothing\}\}$.
$$\def\circle#1{*+[o][F-]{\scriptstyle#1}}
\xymatrix@!=0pt{
  \circle{1} & \circle{2}\ar[d] &&
  \circle{4}\ar@{.>}[dl]\ar@{.>}[dr] &\\
  & \circle{3} & \circle{5} && \circle{6}\ar@{.>}[d] \\
  &&&& \circle{7} \\
}$$
We extend this idea by considering that any pointed graph represents a
set, namely, the set of objects represented by all the pointed graphs
obtained by shifting the root one level downwards.

Of course, a set may have several and non-isomorphic representations.
For instance, the graph with root \texttt{2} and plain edges and the
graph with root \texttt{6} and dotted edges both represent the set
$\{\varnothing\}$.
To recover the property of extensionality, we have to define equality
in such a way to identify these two pointed graphs.
Thus set equality is defined as bisimilarity.
Introducing a third sort for binary relations on nodes and a predicate
symbol $\rel$ (such that $\rel(x,y,r)$ means that $x$ and $y$ are
related by the relation~$r$), we then
define $a \approx b$ as
$$\begin{array}{l@{}l}
  \ex r~( & \rel(\root(a),\root(b),r) \\
  & \land \quad\fa x\fa x'\fa y~
  (x'~\eta_a~x \land \rel(x,y,r) \limp
  \ex y'~(y'~\eta_b~y \land \rel(x',y',r)))\\
  & \land \quad\fa y\fa y'\fa x~
  (y'~\eta_b~y \land \rel(x,y,r) \limp
  \ex x' ~ (x'~\eta_a~x \land \rel(x',y',r))))
\end{array}$$
In deduction modulo, this definition can be handled by introducing a
predicate symbol $\approx$ and a rule rewriting the atomic
formula $a \approx b$ to this one.

Next, we want to define the membership relation. We first introduce in the
language a binary function symbol $/$ associating a pointed graph to
each pair formed with a pointed graph and a node. The pointed graph
$a/x$ has the same graph as $a$ but its root is $x$.
This is expressed in deduction modulo by the rules
$$\begin{array}{c}
  \root(a/x) \lra x\qquad\quad (a/x)/y \lra a/y \\[3pt]
  x~\eta_{a/z}~y \lra x~\eta_{a}~y \\
\end{array}$$
Now, an object $a$ is a member of a set $b$ if the root of $b$ has a
child $x$ in $b$, such that $a$ is bisimilar to $b/x$.
In deduction modulo, this definition can be handled by introducing a
predicate symbol $\in$ and a rule
$$a \in b \lra \ex x~(x~\eta_{b}~\root(b) \land a \approx (b/x))$$
As equality on pointed graphs is not defined as the smallest
substitutive relation, but as bisimilarity, substitutivity has to be
proved.
In fact, equality is only substitutive with respect to the predicates
$\in$ and $\approx$, but not with respect to the symbol ``$/$'', for
instance.
Fortunately, substitutivity with respect to $\in$ and $\approx$ is
all we need to prove that $\Zermod$ extends $\Zst$. 

Equality on nodes is defined in a more usual way, introducing a
fourth sort for classes of nodes. The comprehension schemes
expressing the existence of classes and relations are handled  by
introducing a function symbol for each formula and the rewrite rules
$$\begin{array}{r>{\quad}c<{\quad}l}
  \mem(x,g_{x, y_{1}, ..., y_{n}, P}(y_{1}, \dots, y_{n}))
  &\lra& P \\
  \rel(x,x,',g'_{x, x', y_{1}, ..., y_{n}, P}(y_{1}, \dots, y_{n}))
  &\lra& P \\
\end{array}$$

Now, we want to build graphs for the usual set theoretic
constructions: pairing, union, powerset, restricted comprehension,
infinity and transitive closure. Let us take the example of the
union. If $a$ is a pointed graph, we want $\bigcup(a)$ to be a pointed
graph with a fresh root~$o$ related to all the grand children of the root
of~$a$. That the root of $\bigcup(a)$ is the node~$o$ can be expressed
in deduction modulo with the rule
$$\textstyle\root(\bigcup(a)) \lra o$$
Then, we want the formula
$x~\eta_{\bigcup(a)}~x'$ to hold if either $x$ and $x'$ are related in
the graph $a$ or $x'$ is $o$ and $x$ is a grand child of the root of
$a$. This could be expressed by the naive rule
$$\begin{array}{l}
  x~\eta_{\bigcup(a)}~x' \lra\\
  \qquad x~\eta_{a}~x' \lor \ex z~
  (x'=o \land  x~\eta_{a}~z \land z~\eta_{a}~\root(a))
\end{array}$$

However, with such as rule, we fail to express that
the root $o$ must be fresh. If it were already a node of $a$, for
instance, the properties of the set $\bigcup(a)$ would not be as
expected.
To build the pointed graph $\bigcup(a)$, we
must first {\em relocate} the graph $a$ in a space where there is no
$o$. This is achieved by introducing in the language a relocation
function $i$, that is injective but not surjective and a node $o$
that is not in the image of $i$.
Then the set $\bigcup(a)$ can be defined by the rule
$$\begin{array}{@{}l@{~~}l@{}}
  \multicolumn{2}{@{}l@{}}{x~\eta_{\bigcup(a)}~x'\ \lra} \\
  & (\ex y\ex y'~(x=i(y) \land  x'=i(y') \land  y~\eta_{a}~y')) \\
  \lor & (\ex y\ex z~(x=i(y) \land  x'=o \land  y~\eta_{a}~z
  \land z~\eta_{a}~\root(a)))\,. \\[3pt]
\end{array}$$
The fact that $i$ is injective is expressed in deduction modulo, following
\cite{DowekWerner2005} by introducing a left inverse $i'$ and
the rule
$$i'(i(x)) \ra x$$
To express that $o$ is not in the image of $i$,
we introduce a predicate $I$ that contains the image of $i$ but not
$o$. This is expressed by the rules
$$I(i(x)) \ra \top\qquad\quad I(o) \ra \bot$$
Some other constructions, such as pairing or powerset, need two relocation
functions $i$ and $j$ such that the images of $i$, $j$ and
$o$ are disjoint. To express the axiom of infinity, we also need a
copy of arithmetic at the level of nodes, thus we introduce
also symbols $Nat$, $0$,  $S$, $\Pred$, $\Null$, and $<$ and related
rules. For the powerset axiom, we need also an injection $\rho$
embedding pointed graphs into nodes.

\subsection{The theory $\Zermod$}

Let us now turn to the formal definition of~$\Zermod$.
The main symbol of this theory is a ternary predicate symbol $\eta$,
the formula $x~\eta_{a}~y$ meaning that there is a edge from~$x$
and~$y$ in the pointed graph~$a$.

The sorts of the theory~$\Zermod$ are the following:
\begin{center}
  \begin{tabular}{|>{$}p{12mm}<{$}|p{40mm}|}\hline
    \textbf{Sort} & \textbf{Usage} \\\hline
    G & pointed graphs \\
    N & Nodes \\
    C & Classes of nodes \\
    R & Binary relations on nodes \\\hline
  \end{tabular}
\end{center}
The predicate (\PRE), function (\FUN) and constant (\CST) symbols with
their arities are given in Table~\ref{tbl:SigZermod}.

The function symbols $f_{x,y_1,\ldots,y_n,P}$ are defined for each
formula $P$ with free variables $x,y_1,\ldots,y_n$ of sort $G$
formed with the predicate symbols $\in$ and $\approx$, and quantifiers
on $G$ only.
The function symbols $g_{x,y_1,\ldots,y_n,P}$ 
(resp. $g'_{x,x',y_1,\ldots,y_n,P}$) 
are defined for each
formula $P$ whose free variables are among 
$x, y_1, \ldots, y_n$ (resp. $x, x', y_1, \ldots, y_n$),
with $x$ (resp. $x,x'$) of sort $N$,
formed in the restriction of the language containing all the symbols,
except $g_{\cdots}$ and $g'_{\cdots}$.
The theory $\Zermod$ contains no axioms but the rewrite rules that
are given in Table~\ref{tbl:ZermodRewriteRules}.

\begin{table}
  \begin{tabular}{c}
    \hline\hline
    \declaresymbols{General}{
      \eta    & \PRE(G,N,N) & Local membership \\
      \root   & \FUN(G)N    & Root of a pointed graph \\
      /       & \FUN(G,N)G  & Change the root of a pointed graph \\
      =       & \PRE(N,N)   & Node equality \\
    } \\
    \declaresymbols{Sets and relations on nodes}{
      \mem    & \PRE(N,C)   & Node membership \\
      \rel    & \PRE(N,N,R) & Node relation \\
      g_{x,y_1,\ldots,y_n,P} & \FUN(N^n)C & Construction of sets of nodes \\
      g'_{x,x',y_1,\ldots,y_n,P} & \FUN(N^n)R &
      Construction of relations on nodes\\
    } \\
    \declaresymbols{Relocations}{
      o       & \CST\ N     & Distinguished node  \\
      i       & \FUN(N)N    & First injection \\
      i'      & \FUN(N)N    & Left-inverse of $i$ \\
      I       & \PRE(N)     & Image of $i$ \\
      j       & \FUN(N)N    & Second injection \\
      j'      & \FUN(N)N    & Left-inverse of $j$ \\
      J       & \PRE(N)     & Image of $j$ \\
      0       & \CST\ N     & zero  \\
      S       & \FUN(N)N    & successor\\
      \Pred   & \FUN(N)N    & Left-inverse of $S$ \\
      \Null   & \PRE(N)     & Singleton $0$ \\
      \Nat    & \PRE(N)     & Natural number nodes \\
      <  & \PRE(N,N)     & Strict ordering over nodes \\\hline
      \rho    & \FUN(G)N    & Injection from pointed graphs to nodes \\
      \rho'   & \FUN(N)G    & Left-inverse of $\rho$ \\
    } \\
    \declaresymbols{Equality and membership}{
      \approx & \PRE(G,G)   & Equality as bisimilarity \\
      \in     & \PRE(G,G)   & Membership as shifted bisimilarity \\
    } \\
    \declaresymbols{Constructions}{
      \bigcup   & \FUN(G)G   & Construction of the union \\
      \{\_,\_\} & \FUN(G,G)G & Construction of the pair \\
      \P        & \FUN(G)G   & Construction of the powerset \\
      f_{x,y_1,\ldots,y_n,P} & \FUN(G^n,G)G &
      Construction of sets by comprehension \\
      \Omega    & \CST\ G    & Pointed graph of Von Neumann numerals \\
      \Cl       & \FUN(G)G   & Construction of the transitive closure \\
    } \\
    \hline\hline
  \end{tabular}
  \caption{The signature of~$\Zermod$}
  \label{tbl:SigZermod}
\end{table}

\begin{table}[p]
  \begin{tabular}{c}
    \hline\hline\\[-4pt]
    \textbf{\underline{General}} \\
    $\begin{array}{r@{~}c@{~}l>{\qquad\quad}r@{~}c@{~}l}
      x~\eta_{a/z}~y &\lra& x~\eta_{a}~y & \root(a/x) &\lra& x \\
      y = z &\lra& \fa p~(\mem(y,p) \limp \mem(z,p)) &
      (a/x)/y &\lra& a/y \\
    \end{array}$ \\
    \noalign{\medskip}
    \textbf{\underline{Sets and relations on nodes}} \\[4pt]
    $\begin{array}{r>{\quad}c<{\quad}l}
      \mem(x,g_{x, y_{1}, ..., y_{n}, P}(y_{1}, \dots, y_{n}))
      &\lra& P \\
      \rel(x,x,',g'_{x, x', y_{1}, ..., y_{n}, P}(y_{1}, \dots, y_{n}))
      &\lra& P \\
    \end{array}$ \\
    \noalign{\medskip}
    \textbf{\underline{Relocations}} \\[4pt]
    $\begin{array}{r@{~}c@{~}l@{~~}r@{~}c@{~}l
        @{~}r@{~}c@{~}l@{~~}r@{~}c@{~}l}
      i'(i(x)) &\ra& x & I(i(x)) &\ra& \top &
      I(j(x)) &\ra& \bot & I(o) &\ra& \bot \\
      j'(j(x)) &\ra& x & J(j(x)) &\ra& \top &
      J(i(x)) &\ra& \bot & J(o) &\ra& \bot \\
     \Pred(S(x)) &\ra& x&
     \Null(0) &\ra& \top&
     \Null(S(x)) &\ra& \bot &
     \rho'(\rho(x)) &\ra& x \\
     \Nat(0) &\ra& \top &
     \Nat(S(x)) &\ra& \Nat(x) &
     x<0 &\ra& \bot & x<S(y) &\ra& x<y \lor x=y \\
    \end{array}$ \\
    \noalign{\medskip}
    \textbf{\underline{Equality and membership}} \\[6pt]
    $\begin{array}{r>{~~}c<{~~}l}
      a \approx b &\lra&
      \ex r~(\rel(\root(a),\root(b),r) \\
      &&\hphantom{\ex r~(} \land \quad\fa x\fa x'\fa y~
      (x'~\eta_a~x \land \rel(x,y,r) \limp
      \ex y' ~ (y'~\eta_b~y \land \rel(x',y',r))) \\
      &&\hphantom{\ex r~(} \land \quad\fa y\fa y'\fa x~
      (y'~\eta_b~y \land \rel(x,y,r) \limp
      \ex x' ~ (x'~\eta_a~x \land \rel(x',y',r)))) \\
      a \in b &\lra&
      \ex x~(x~\eta_{b}~\root(b) \land a \approx (b/x)) \\
    \end{array}$ \\

    \noalign{\medskip}
    \textbf{\underline{Constructions}} \\
    $\begin{array}{>{\qquad}ll}

      \multicolumn{2}{l}{x~\eta_{\bigcup(a)}~x'\ \lra} \\
      & (\ex y\ex y'~(x=i(y) \land  x'=i(y') \land  y~\eta_{a}~y')) \\
      \lor & (\ex y\ex z~(x=i(y) \land  x'=o \land  y~\eta_{a}~z
      \land z~\eta_{a}~\root(a))) \\[3pt]
      
      \multicolumn{2}{l}{x~\eta_{\{a,b\}}~x'\ \lra} \\
      &(\ex y \ex y'~(x = i(y) \land x' = i(y') \land y~\eta_{a}~y')) \\
      \lor& (\ex y \ex y'~(x = j(y) \land x' = j(y')
      \land y~\eta_{b} y')) \\
      \lor& (x = i(\root(a)) \land x' = o) \\
      \lor& (x = j(\root(b)) \land x' = o) \\[3pt]

      \multicolumn{2}{l}{x~\eta_{\P(a)}~x'\ \lra} \\
      &(\ex y \ex y'~(x = i(y) \land x' = i(y') \land y~\eta_{a}~y')) \\
      \lor& (\ex y \ex c~(x = i(y) \land x' = j(\rho(c))  \land 
      y~\eta_{a}~\root(a) \land (a/y) \in c)) \\
      \lor& (\ex c~(x = j(\rho(c)) \land x' = o)) \\[3pt]

      \multicolumn{2}{l}{x~\eta
        _{f_{x,y_{1},\ldots,y_{n},P}(y_{1},\dots, y_{n}, a)}~x' \lra} \\
      & (\ex y \ex y'~(x = i(y) \land x' = i(y')  \land 
      y~\eta_{a}~y')) \\
      \lor& (\ex y~(x = i(y) \land x' = o  \land 
      y~\eta_{a}~\root(a) \land P (x \la (a/y)))) \\[3pt]

      \multicolumn{2}{l}{x~\eta_{\Omega}~x'\ \lra} \\
      &(\ex y \ex y'~(x = i(y) \land x' = i(y') \land y < y')) \\
      \lor& (\ex y~(x = i(y) \land x' = o  \land \Nat(y))) \\[3pt]

      \multicolumn{2}{l}{x~\eta_{\Cl(a)}~x'\ \lra} \\
      & (\ex y\ex y'~(x=i(y) \land  x'=i(y') \land  y~\eta_{a}~y')) \\
      \lor & (\ex y~(x=i(y) \land x'=o \land \\
      & \hphantom{(\ex y~(}\forall c~
      [\forall z~(z~\eta_a~\root(a)\limp\mem(z,c)) \land {}\\
      & \hphantom{(\ex y~(\forall c~[}
       \fa z~\fa z'~((z~\eta_a~z' \land \mem(z',c)) \limp \mem(z,c))
        ~\limp~\mem(y,c)])) \\
    \end{array}$ \\

    \noalign{\smallskip}
    $\begin{array}{r>{~~}c<{~~}l>{\qquad\quad}r>{~~}c<{~~}l}
      \root(\bigcup(a)) &\lra& o &
      \root(\{a,b\}) &\lra& o \\
      \root(\P(a)) &\lra& o &
      \root(f_{x, y_{1}, ..., y_{n}, P}(y_{1}, \dots, y_{n}, a))
      &\lra& o \\
      \root(\Omega) &\lra& o &
      \root(\Cl(a)) &\lra& o \\
    \end{array}$ \\
    \noalign{\medskip}
    \hline\hline
  \end{tabular}
  \caption{Rewrite rules of $\Zermod$}
  \label{tbl:ZermodRewriteRules}
\end{table}

\begin{example}
Let $\varnothing = f_{x, y, \neg (x \in y)}(y,y)$.
\end{example}

\subsection{Translating $\Zst$ into $\Zermod$}
\label{sec:TranslZstZermod}

We prove that $\Zermod$ is an extension of set theory.  To do so, we
define a translation $P\mapsto P^{\dag}$ from $\Zst$ to $\Zermod$ 
which simply maps $\in$ (of $\Zst$) to $\in$ (of $\Zermod$) and
$=$ (of~$\Zst$) to $\approx$ (of~$\Zermod$), the rest of the
structure of the formula being preserved.  We then prove that
$\Zermod$ is an extension of~$\Zst$,
in the sense that for any formula $\phi$ of $\Zst$,
if $\Zst\vdash \phi$, then $\Zermod\vdash\phi^{\dag}$.

To prove this formula, we first prove that all axioms of $\Zst$
are theorems of $\Zermod$. We begin with fifty-three elementary
lemmas.

\begin{table}[p]
  $$\begin{array}{ll}
    \multicolumn{2}{c}{\textbf{\underline{Node identity}}} \\[6pt]
    1. & x = x \\
    2. & y = z \limp (P(x \la y) \limp P(x \la z))\quad(*) \\[12pt]
    \multicolumn{2}{c}{\textbf{\underline{Bisimilarity}}} \\[6pt]
    3. & a \approx a \\
    4. & a \approx b \limp b \approx a \\
    5. & (a \approx b \land b \approx c) \limp a \approx c \\
    6. & a \approx (a/\root(a))\\[12pt]
    \multicolumn{2}{c}{\textbf{\underline{Injectivity and
    non confusion}}} \\[6pt]
    7. & S(x) = S(y) \limp x = y \\
    8. & \neg 0 = S(x) \\
    9. & i(x) = i(y) \limp x = y \\
    10. & j(x) = j(y) \limp x = y \\
    11. & \neg i(x) = o \\
    12. & \neg j(x) = o \\
    13. & \neg i(x) = j(y) \\[12pt]
    \multicolumn{2}{c}{\textbf{\underline{Eta simplification}}} \\[6pt]
    14. & x~\eta_{\bigcup(a)}~i(y') \Leftrightarrow \ex y~(x = i(y)  \land 
    y~\eta_{a}~y')\\
    15. & x~\eta_{\bigcup(a)}~o \Leftrightarrow 
         \ex y~\ex z~(x=i(y) \land y~\eta_{a}~z \land z~\eta_{a}~\root(a))\\
    16. & x~\eta_{\{a,b\}}~i(y') \Leftrightarrow 
         \ex y~(x = i(y) \land y~\eta_{a}~y')\\
    17. & x~\eta_{\{a,b\}}~j(y') \Leftrightarrow 
         \ex y~(x = j(y) \land y~\eta_{b}~y')\\
    18. & x~\eta_{\{a,b\}}~o \Leftrightarrow 
         (x = i(\root(a)) \lor x = j(\root(b)))\\
    19. & x~\eta_{\P(a)}~i(y') \Leftrightarrow 
         \ex y~(x = i(y) \land y~\eta_{a}~y')\\
    20. & x~\eta_{\P(a)}~j(\rho(c)) \Leftrightarrow 
         \ex y ~(x = i(y) \land y~\eta_{a}~\root(a) \land (a/y) \in c)\\
    21. & x~\eta_{\P(a)}~o \Leftrightarrow \ex c~(x = j(\rho(c)))\\
    22. & x~\eta_{f_{x,y_{1},\ldots,y_{n},P}(y_{1},\dots, y_{n},
         a)}~i(y') \Leftrightarrow \ex y~(x = i(y)  \land 
         y~\eta_{a}~y')\\
    23. & x~\eta_{f_{x,y_{1},\ldots,y_{n},P}(y_{1},\dots, y_{n}, a)}~o
          \Leftrightarrow \ex y~(x = i(y) \land 
          y~\eta_{a}~\root(a) \land P (x \la (a/y)))\\
    24. & x~\eta_{\Omega}~i(y') \Leftrightarrow \ex y~(x = i(y)  \land 
         y < y') \\ 
    25. & x~\eta_{\Omega}~o \Leftrightarrow \ex y~(x = i(y)  \land \Nat(y))\\
    26. & x~\eta_{\Cl(a)}~i(y') \Leftrightarrow 
         \ex y~(x = i(y) \land y~\eta_{a}~y') \\
    27. &
      x~\eta_{\Cl(a)}~o \Leftrightarrow \\
      & \ex y~(x=i(y)  \land \\
      & \hphantom{(\ex y~(}\forall c~
      [\forall z~(z~\eta_a~\root(a)\limp\mem(z,c)) \land {}\\
      & \hphantom{(\ex y~(\forall c~[}
       \fa z~\fa z'~((z~\eta_a~z' \land \mem(z',c)) \limp \mem(z,c))
        ~\limp~\mem(y,c)]) \\[12pt]

    \multicolumn{2}{p{120mm}}{
      $(*)$ Where~$P$ is any formula of the language of $\Zermod$
      that contains no function symbol of the form~$g_{...}$
      or~$g'_{...}$.
    } \\

  \end{array}$$
\caption{}
\label{tbl:easy1}
\end{table}

\begin{table}[htp]
  $$\begin{array}{ll}
    \multicolumn{2}{c}{\textbf{\underline{Membership}}} \\[6pt]
    28. & x~\eta_{a}~\root(a) \limp (a/x) \in a \\
    29. & a \approx b \limp \fa x~(x~\eta_{a}~\root(a) \limp \ex
    y~(y~\eta_{b}~\root(b) \land (a/x) \approx (b/y))) \\
    30. & (a \in b \land a \approx c) \limp c \in b \\
    31. & (a \in b \land b \approx c) \limp a \in c \\[6pt]
    \multicolumn{2}{c}{\textbf{\underline{Substitutivity}}} \\[6pt]
    32. & (P(x \la a) \land a \approx b) \limp P(x \la b)\quad(*)
    \\[12pt]
    \multicolumn{2}{c}{\textbf{\underline{Bisimilarity by relocation}}} \\[6pt]
    33. & (\root(b) = i(\root(a)) \land 
    \fa x \fa y'~(y'~\eta_{b}~i(x) \liff 
    \ex x'~(y' = i(x') \land x'~\eta_{a}~x))) \limp a \approx b \\
    34. & (\root(b) = j(\root(a)) \land 
    \fa x \fa y'~(y'~\eta_{b}~j(x) \liff
    \ex x'~(y' = j(x') \land x'~\eta_{a}~x))) \limp a \approx b
    \\[12pt]
    \multicolumn{2}{c}{\textbf{\underline{Embedding}}} \\
    35. & \bigcup(a)/i(y) \approx (a/y) \\
    36. & (\{a,b\}/i(\root(a))) \approx a \\
    37. & (\{a,b\}/j(\root(b))) \approx b \\
    38. & \P(a)/i(y) \approx (a/y) \\
    39. & f_{x, y_{1}, ..., y_{p}, P}(a_{1}, ..., a_{p}, b)/i(y)
    \approx (b/y)\\
    40. & \Cl(a)/i(y) \approx (a/y)\\[12pt]
    \multicolumn{2}{c}{\textbf{\underline{Extensionality}}} \\
    41. & P(c,d) \\
    &  \land ~~(\fa a \fa a' \fa b~((a' \in a \land P(a,b)) \limp
    \ex b' (b' \in b \land P(a',b')))) \\
    &  \land ~~(\fa a \fa b \fa b'~((b' \in b \land P(a,b)) \limp
    \ex a' (a' \in a \land P(a',b')))) \\
    & \limp~~(c \approx d)\quad (*)\\[12pt]
    \multicolumn{2}{c}{\textbf{\underline{Finitary existence axioms}}}
    \\[6pt] 
    42. &  c \in \bigcup(a) \liff \ex b~(c \in b \land b \in a)\\
    43. &  c \in \{a,b\} \liff (c \approx a \lor c \approx b)\\
    44. &  a \in \P(b) \liff \fa c~(c \in a \limp c \in b)\\
    45. &  a \in f_{x, y_{1}, ..., y_{p}, P} (y_{1}, ..., y_{p},b)
    \liff a \in b \land P(x \la a)\quad(*)\\[12pt]
    \multicolumn{2}{c}{\textbf{\underline{Infinity}}} \\
    46. &  \neg a \in \varnothing\\
    47. &  \varnothing \approx (\Omega/i(0))\\
    48. &  (a \approx (\Omega/i(y))) \limp
    \bigcup(\{a,\{a\}\}) \approx (\Omega/i(S(y)))\\
    49. &  \varnothing \in \Omega\\
    50. &  a \in \Omega \limp \bigcup(\{a,\{a\}\}) \in \Omega\\
    51. &  \Ind(\Omega) \\[12pt]
    \multicolumn{2}{c}{\textbf{\underline{Transitive closure}}} \\
    52. &  a \in c \limp a \in \Cl(c)\\
    53. &  a \in b \limp b \in \Cl(c) \limp a \in \Cl(c)
    \\[12pt]
    \multicolumn{2}{p{115mm}}{
      $(*)$ Where $P$ is any formula expressed in the language
      $\approx$, $\in$ and where all the quantifiers are of sort~$G$.
    } \\
  \end{array}$$
\caption{}
\label{tbl:easy2}
\end{table}

\begin{theorem}
\label{extension}
If $\Zst \vdash P$ then $\Zermod \vdash P^{\dag}$. 
\end{theorem}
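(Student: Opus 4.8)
The plan is to argue by induction on the derivation of $P$ in $\Zst$, exploiting the fact that the translation $(\cdot)^{\dag}$ is \emph{compositional}: it commutes with every logical connective and quantifier, altering only the two atoms by sending $x = y$ to $x \approx y$ and $x \in y$ to $x \in y$, and reading each $\Zst$-quantifier as a quantifier over the sort $G$. Since the only terms of $\Zst$ are variables, $(\cdot)^{\dag}$ also trivially commutes with the variable-for-variable substitutions performed by the $\fa$-elimination and $\ex$-introduction rules. Consequently, every deduction rule used in the $\Zst$-derivation is mirrored, step for step, by the same rule applied in $\Zermod$ to the translated premises and conclusion; a modus ponens on $A$ and $A \limp B$, for instance, becomes a modus ponens on $A^{\dag}$ and $(A \limp B)^{\dag} = A^{\dag} \limp B^{\dag}$. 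The whole theorem thus reduces to a single fact: for every axiom $A$ of $\Zst$, the translated formula $A^{\dag}$ is provable in $\Zermod$.

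It then remains to discharge the axioms one by one, which is exactly what the fifty-three lemmas of Tables~\ref{tbl:easy1} and~\ref{tbl:easy2} are designed to supply. The equality axioms translate into the reflexivity, symmetry and transitivity of $\approx$ (Lemmas~3--5) together with the left and right compatibility of $\in$ with $\approx$: the axiom $\fa x \fa x' \fa y~(x = x' \land x \in y \limp x' \in y)$ becomes $\fa x \fa x' \fa y~(x \approx x' \land x \in y \limp x' \in y)$, an instance of Lemma~30, and the right compatibility axiom is an instance of Lemma~31. The Strong Extensionality scheme is precisely Lemma~41, instantiated with $P \equiv R^{\dag}$. The existential axioms are handled by exhibiting the corresponding $\Zermod$ construction as a witness for $\ex e$: for Pairing take $e \equiv \{a,b\}$ and invoke Lemma~43, for Union $e \equiv \bigcup(a)$ with Lemma~42, for Powerset $e \equiv \P(a)$ with Lemma~44, for Restricted Comprehension $e \equiv f_{x,x_1,\ldots,x_n,P^{\dag}}(x_1,\ldots,x_n,a)$ with Lemma~45, for Infinity $e \equiv \Omega$ with Lemma~51, and for Transitive Closure $e \equiv \Cl(c)$ with Lemmas~52--53. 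In each case the body of the quantified biconditional matches the conclusion of the cited lemma verbatim once $=$ has been rewritten as $\approx$.

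The only point requiring care is that $\approx$ is defined as bisimilarity rather than as a primitive substitutive relation, so the substitutivity it enjoys is genuinely a theorem rather than a logical axiom; this is why the compatibility axioms are discharged through Lemmas~30--31 and Strong Extensionality through Lemma~41, instead of by the trivial identity reasoning available in $\Zst$. Here the fragment restriction $(*)$ attached to Lemmas~32, 41 and 45 is essential, and it is respected automatically: for any $\Zst$-formula $P$ the translation $P^{\dag}$ contains only the predicates $\approx$ and $\in$ and only quantifiers of sort $G$, which is precisely the fragment on which those lemmas are stated. I therefore expect the main obstacle to lie not in this final induction, which is routine once the lemmas are granted, but upstream in establishing the starred Lemmas~32 and~41 themselves---the substitutivity of bisimilarity and the derived extensionality principle---whose proofs must reason directly about the bisimulation relations witnessing $\approx$.
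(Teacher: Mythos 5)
Your proposal is correct and follows essentially the same route as the paper, which likewise reduces the theorem to proving the fifty-three lemmas of Tables~\ref{tbl:easy1} and~\ref{tbl:easy2}, deriving from them the translated axioms of $\Zst$, and concluding by induction on proof structure. Your accounting of which lemmas discharge which axioms (3--5 and 30--31 for equality, 41 for Strong Extensionality, 42--45 and 51--53 with the corresponding construction symbols as existential witnesses) and your observation that the fragment restriction $(*)$ is automatically satisfied by formul{\ae} of the form $P^{\dag}$ are accurate elaborations of the paper's compressed argument.
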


\begin{proof}
  We first prove the fifty three easy lemmas of
  tables~\ref{tbl:easy1} and~\ref{tbl:easy2}, from which we
  deduce that the axioms of $\Zst$ are provable in $\Zermod$.
  We conclude with a simple induction on proof structure.\qed
\end{proof}

\subsection{An example}

To understand the benefit of using pointed graphs and not directly
sets, take an arbitrary set~$A$ and consider the set~$C$ (built
using the restricted comprehension scheme) formed by the elements
of~$A$ that are not members of themselves.
The naive computation rule
$$a\in C ~~\lra~~ a\in A \land \lnot a\in a$$
makes the formula $C\in C$ reduce to $C\in A \land \lnot C \in C$.
Consequently, the rewrite system is non terminating, and the
underlying proof system is non normalizing too:
a simple adaptation of the proof of Russell's paradox yields a non
normalizable (but non paradoxical) proof of $\lnot C\in A$.

A simple attempt to solve the problem would be to replace the former
rule by
$$a\in C~~\lra~~\ex b~(b=a\land(b\in A\land\lnot b\in b))\,.$$
This way, the atomic formula $C\in C$ would reduce to the formula
$\ex b~(b=C\land(b\in A\land\lnot b\in b))$, and instead of the
formula $\lnot C\in C$ we would get the formula $\lnot b\in b$
(where~$b$ is a variable).
Using this trick, the rewrite system would be terminating, but we
could still build a non normalizable proof using the method
of~\cite{DowekWerner}.
The reason is that although we know that~$b$ is an element of~$A$ and
that~$A$ is structurally smaller than~$C$ (since~$C$ is built
from~$A$), nothing prevents us from substituting an arbitrary term to
the variable~$b$ in the sub-formula $\lnot b\in b$ during some
deduction step.

In $\Zermod$, in contrast, the formula $C\in C$ reduces to
$$\begin{array}{@{}l@{}l@{}l@{}}
  \ex x~(&(&
  \ex y\ex y'~(x=i(y)\land o=i(y')\land y~\eta_{A}~y')~~\lor\\
  && \ex y~(x = i(y) \land o = o \land  y~\eta_{A}~\root(A)
  \land \lnot (A/y) \in (A/y))) \\
  &\multicolumn{2}{@{}l@{}}{\land~~ C\approx (C/x))} \\
\end{array}$$
During this reduction step, we have evolved from $C\in C$ to
$\lnot(A/y)\in(A/y)$, where~$A$ is structurally smaller than~$C$.
This breaks the circularity and, in the normalization proof, we shall
be able to
interpret~$A$ first and then~$C$ according to this rule.

\section{Translating back $\Zermod$ into $\Zskol$}
\label{sec:TranslZermodZst}

To complete the proof that $\Zermod$ is actually a formulation of set
theory, we prove that it is a conservative extension of~$\Zst$.
Since $\Zskol$ is itself a conservative extension of $\Zst$, all we
need to prove is that $\Zermod$ is a conservative extension
of~$\Zskol$.
This proof is organized in two steps. First, we define a translation
$P\mapsto P^{*}$ from $\Zermod$ to $\Zskol$ and we prove that
if $\Zermod \vdash P$ then $\Zskol\vdash P^*$.
Then, we prove that the formula $P \Leftrightarrow P^{\dag*}$ is
provable in~$\Zskol$.

\subsection{Pointed graphs and reifications}

The translation $P\mapsto P^{*}$ is based on the fact that the
notions of pointed graph and bisimilarity can be defined in set
theory.

\begin{definition}[Pointed graph]
A (directed) \emph{graph} is a set of
pairs. A {\em pointed graph} is a pair $\<A,a\>$ where~$A$ is a graph.
\end{definition}

Notice that we do not include a carrier set in our graphs, since
the carrier~$\Car{A}$ of a graph~$A$ can always be reconstructed as
$$\textstyle\Car{A}=\bigl\{x\in\bigcup\bigcup A\mid\exists y~~
\<x,y\>\in A\lor \<y,x\>\in A\bigr\}\,,$$
whereas the carrier of a pointed graph
can be reconstructed as $\Car{\<A,a\>}=\Car{A}\cup\{a\}$.
The formula `$A$ is a graph' is
$$\Graph(A)~\equiv~\fa c\,{\in}\,A~\ex x~\ex y~c=\<x,y\>$$
and the formula `$g$ is a pointed graph' is
$$\Pgraph(g) \equiv \ex A\ \ex a\ (g = \<A,a\> \land \Graph(A))\,.$$

\begin{definition}[Bisimilarity]\label{def:bisim}
  --- Let $\<A,a\>$ and $\<B,b\>$
  be two pointed graphs.  A set $r$ is called a \emph{bisimulation}
  from $\<A,a\>$ to $\<B,b\>$ if
  \begin{enumerate}
  \item $\<a,b\>\in r$;
  \item for all~$x$, $x'$ and $y$ such that $\<x',x\>\in A$
    and $\<x,y\>\in r$, there exists~$y'$ such that
    $\<x',y'\>\in r$ and $\<y',y\>\in B$;
  \item for all~$y$, $y'$ and $x$ such that $\<y',y\>\in B$
    and $\<x,y\>\in r$, there exists~$x'$ such that
    $\<x',y'\>\in r$ and $\<x',x\>\in A$.
  \end{enumerate}
  Two pointed graphs $\<A,a\>$ and $\<B,b\>$ are said to be
  \emph{bisimilar} if there exists a bisimulation from 
  $\<A,a\>$ to $\<B,b\>$.
\end{definition}

Formally, the formula `$g$ and $g'$ are bisimilar' is
{\footnotesize
  $$\begin{array}{llll}
    g \approx g' & \equiv & \ex A \ex a \ex B \ex b \ex r~(\\
    &&\Graph(A) \land \Graph(B) \land g = \<A,a\> \land g' = \<B,b\> \\
    &&\land \<a,b\> \in r\\
    &&\land \fa x\fa x'\fa y~((\<x',x\> \in A \land \<x,y\> \in r) \limp
    \ex y'~(\<y',y\> \in B \land \<x',y'\> \in r)) \\ 
    &&\land~\fa y\fa y'\fa x~((\<y',y\> \in B \land \<x,y\> \in r) \limp
    (\ex x' ~ \<x',x\> \in A \land \<x',y'\> \in r)))
  \end{array}$$
}\ignorespaces
In the following definition, we will need a shorthand for
`$\phi$ is a function'
$$\begin{array}{r>{\quad}c<{\quad}l}
  \mathrm{Function}(\phi) &\equiv&
  \forall z\ (z\in\phi\ \limp
  \exists x\ \exists y\ \ z=\<x,y\>)\ \  \land {} \\
  &&\forall x\ \forall y\ \forall y'\
  (\<x,y\>\in\phi \land  \<x,y'\>\in\phi\limp y=y') \\
\end{array}$$
as well as terms $\Dom(\phi)$ and $\Cod(\phi)$ defined as
$$\begin{array}{r>{\quad}c<{\quad}l}
  \Dom(\phi) &\equiv& \{x\in\bigcup\bigcup\phi\mid
  \exists y\ \<x,y\>\in\phi\} \\
  \Cod(\phi) &\equiv& \{y\in\bigcup\bigcup\phi\mid
  \exists x\ \<x,y\>\in\phi\} \\
\end{array}$$

\begin{definition}[Collapse]
A \emph{Mostovski collapse} of a graph~$A$ is a function~$\phi$ of
domain $\Dom(\phi)=\Car{A}$ such that for any vertex $i\in\Dom(\phi)$
and for any $x$, we have $x \in \phi(i)$ if and only if there exists
$j\in\Dom(\phi)$ such that $\<j,i\>\in A$ and $x = \phi(j)$.
\end{definition}

Formally, the formula `$\phi$ is a collapse of~$A$' is
$$\begin{array}{l}
  \Collapse(A,\phi)\quad\equiv{} \\
  \qquad\Graph(A)\land \mathrm{Function}(\phi) \land
  \Dom(\phi)=\Car{A} \land {}\\
  \qquad\forall i\ \forall y'\ \fa y\
  [y' \in y \land \<i,y\> \in \phi\ \liff \exists i'\
  (\<i',i\> \in A\ \land \<i',y'\> \in \phi)]\\
\end{array}$$

The collapse of a graph, when it exists, is unique.  In ZF, this
property is a consequence of the Foundation axiom. However, the
weaker Strong Extensionality axiom is sufficient.

\begin{proposition}\label{prop:CollapseUnique} --- The
  formula
  $$\fa A~\fa\phi~\fa\psi~
  (\Collapse(A,\phi) \land \Collapse(A,\psi) \limp \phi=\psi)$$
  is derivable in $\Zskol$.
\end{proposition}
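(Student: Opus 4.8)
The plan is to avoid $\in$-induction—which would require the Foundation axiom, unavailable in $\Zskol$—and instead to exploit Strong Extensionality directly. The key observation is that a collapse sends a vertex $i$ to the set $\{\phi(i') : \<i',i\>\in A\}$, so two collapses of the same graph are, vertex by vertex, related by a bisimulation; Strong Extensionality then forces the related sets to be equal.

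First I would reduce the goal to a pointwise statement. Both $\Collapse(A,\phi)$ and $\Collapse(A,\psi)$ entail that $\phi$ and $\psi$ are functions with $\Dom(\phi)=\Dom(\psi)=\Car{A}$, so by the Extensionality property derived above from Strong Extensionality, proving $\phi=\psi$ amounts to showing $\<i,y\>\in\phi \liff \<i,y\>\in\psi$ for all $i,y$; equivalently, that $\phi$ and $\psi$ take the same value at every vertex $i\in\Car{A}$.

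Next I would fix $i\in\Car{A}$ and apply the Strong Extensionality scheme to the formula
$$R(u,v)~\equiv~\ex j~(\<j,u\>\in\phi \land \<j,v\>\in\psi),$$
with $A,\phi,\psi$ as parameters, checking that $R$ is a bisimulation relating $\phi(i)$ and $\psi(i)$. The witness $j=i$ gives $R(\phi(i),\psi(i))$. For the forth clause, suppose $u'\in u$ and $R(u,v)$, so $\<j,u\>\in\phi$ and $\<j,v\>\in\psi$ for some $j$; the collapse equation for $\phi$ produces $j'$ with $\<j',j\>\in A$ and $\<j',u'\>\in\phi$, whence $j'\in\Car{A}=\Dom(\psi)$, so $\psi$ is defined at $j'$ with some value $v'$; applying the collapse equation for $\psi$ at $i':=j'$ yields $v'\in v$, while $R(u',v')$ holds via the witness $j'$. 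The back clause is symmetric, exchanging the roles of $\phi$ and $\psi$. Strong Extensionality then delivers $\phi(i)=\psi(i)$.

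Finally I would conclude $\phi=\psi$ by extensionality, since the two functions share the same domain $\Car{A}$ and agree at every vertex. The main obstacle is essentially bookkeeping: one must carefully match the quantifier shape of the $\Collapse$ predicate—which phrases its defining property as $y'\in y \land \<i,y\>\in\phi \liff \ex i'~(\<i',i\>\in A \land \<i',y'\>\in\phi)$—to the forth and back clauses of Strong Extensionality, and verify that every vertex produced along the way genuinely lies in $\Car{A}$, so that the other collapse is defined there. Once $R$ is recognized as the correct bisimulation, the remaining verification is routine.
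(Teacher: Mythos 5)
Your proof is correct and takes essentially the same route as the paper: the paper's argument also applies Strong Extensionality to the relation $r(u,v)\equiv\ex i~(\<i,u\>\in\phi\land\<i,v\>\in\psi)$ to conclude that the two collapses agree pointwise, and then identifies them as functions. Your version merely spells out the forth/back verification and the final appeal to (derived) Extensionality, which the paper leaves implicit.
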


\begin{proof} 
  Let $A$ be a graph with two collapse functions $\phi$ and
  $\psi$. As a consequence of the instance of Strong Extensionality
  corresponding to the relation $r$ defined by
  $$r(u,v)\quad\equiv\quad \ex i~(\<i,u\> \in \phi \land
  \<i,v\> \in \psi)$$
  we get $x=x'$ for all $x$, $x'$ and $i$ such that
  $\<i,x\>\in\phi$ and $\<i,x'\>\in\psi$.\qed
\end{proof}

The domain of the collapse $\phi$ of a graph~$A$ is the carrier
$\Car{A}$ of~$A$. We extend it on the whole universe by introducing
the notation
$$\hat{\phi}_A(i) ~\equiv~
\{y\in\Cod(\phi)\mid\exists j~\<j,i\>\in A\land\<j,y\>\in\phi\}$$

\begin{proposition} --- The following formul{\ae} are provable in
  $\Zskol$:
  \begin{enumerate}
  \item $\fa A~\fa\phi~\fa i~
    (\Collapse(A,\phi)\land i\in\Car{A}~\limp~
    \phi(i) = \hat{\phi}_A(i))$
  \item $\fa A~\fa\phi~(\Collapse(A,\phi)~\limp~
    \fa i~\fa y~(y\in\hat{\phi}_A(i)~\liff~
    \ex j~(\<j,i\>\in A\land y=\hat{\phi}_A(j))))$
  \end{enumerate}
\end{proposition}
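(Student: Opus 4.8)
The plan is to derive both formul\ae\ by unfolding the definition of $\Collapse$ and arguing extensionally, establishing~(1) first and then reading off~(2) as a corollary. Throughout I fix $A$ and $\phi$ with $\Collapse(A,\phi)$, which supplies $\Graph(A)$, $\mathrm{Function}(\phi)$, $\Dom(\phi)=\Car{A}$, and the collapse recursion, namely that for $i\in\Car{A}$ the elements of $\phi(i)$ are exactly the values $\phi(j)$ as $j$ ranges over the $A$-predecessors of $i$ (those with $\<j,i\>\in A$). I will use the extensionality principle, which is available in $\Zskol$ since it conservatively extends $\Zst$ where extensionality was already derived, together with the elementary fact that whenever $\<j,i\>\in A$ one has $j\in\Car{A}=\Dom(\phi)$; the latter is a routine unfolding of the Kuratowski coding of pairs against the definition of $\Car{A}$.

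For~(1) I fix $i\in\Car{A}$ and prove $\phi(i)=\hat{\phi}_A(i)$ by extensionality, that is, by showing $y'\in\phi(i)\liff y'\in\hat{\phi}_A(i)$ for every $y'$. Unfolding the left-hand side through the collapse recursion turns $y'\in\phi(i)$ into $\ex j~(\<j,i\>\in A\land\<j,y'\>\in\phi)$, where I pass from the value equation $y'=\phi(j)$ to the membership statement $\<j,y'\>\in\phi$ using that $j\in\Dom(\phi)$ and that $\phi$ is a function. On the other side, $y'\in\hat{\phi}_A(i)$ unfolds, by restricted comprehension, to $y'\in\Cod(\phi)\land\ex j~(\<j,i\>\in A\land\<j,y'\>\in\phi)$. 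The crucial observation is that the side condition $y'\in\Cod(\phi)$ is redundant: any witness $j$ to the existential already delivers $\<j,y'\>\in\phi$, hence $y'\in\Cod(\phi)$. The two formul\ae\ are therefore intuitionistically equivalent and the required equality follows; note also that $\hat{\phi}_A(i)$ is a genuine set for every $i$, being carved out of $\Cod(\phi)$ by comprehension.

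For~(2) I fix arbitrary $i$ and $y$ and argue both directions, reducing each to~(1). If $y\in\hat{\phi}_A(i)$, there is by definition a $j$ with $\<j,i\>\in A$ and $\<j,y\>\in\phi$; since $j\in\Car{A}$, clause~(1) gives $\phi(j)=\hat{\phi}_A(j)$, and functionality turns $\<j,y\>\in\phi$ into $y=\phi(j)=\hat{\phi}_A(j)$, which is exactly the witness demanded by the right-hand side. Conversely, from a $j$ with $\<j,i\>\in A$ and $y=\hat{\phi}_A(j)$, the same membership $j\in\Car{A}$ lets~(1) rewrite $\hat{\phi}_A(j)$ as $\phi(j)$, so $\<j,y\>\in\phi$; as this simultaneously certifies $y\in\Cod(\phi)$, we conclude $y\in\hat{\phi}_A(i)$. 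Hence~(2) holds.

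The whole argument is short and constructive, each biconditional being proved by two direct implications with no case analysis, so the care goes into bookkeeping rather than into any genuine obstacle. The two points that must be handled cleanly are, first, matching the collapse recursion (phrased with the value $\phi(j)$) against the graph-level membership $\<j,y\>\in\phi$ by way of functionality of $\phi$, and second, checking that the $\Cod(\phi)$ restriction built into $\hat{\phi}_A$ never costs anything, because the existence of a predecessor witness already places the relevant element in $\Cod(\phi)$. The only mildly tedious ingredient, repeatedly invoked, is that the endpoints of $A$-edges lie in $\Car{A}$, which I treat as routine.
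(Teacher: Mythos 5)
Your proof is correct and follows essentially the same route as the paper's (much terser) argument: part~(1) by unfolding the collapse recursion against the comprehension defining $\hat{\phi}_A$ and applying extensionality, and part~(2) by noting that $\<j,i\>\in A$ forces $j\in\Car{A}$ so that part~(1) lets you trade $\hat{\phi}_A(j)$ for $\phi(j)$. The only difference is that you spell out the bookkeeping (the redundancy of the $\Cod(\phi)$ restriction, the passage between $y=\phi(j)$ and $\<j,y\>\in\phi$) that the paper leaves implicit.
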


\begin{proof}
  \begin{enumerate}
  \item Assume that $\phi$ is a collapse of $A$ and 
    $i\in\Car{A}$. Then, by definition of $\hat{\phi}_A$, we have
    $\phi(i) = \hat{\phi}_A(i)$.
  \item If $\<j,i\> \in A$ then $j\in\Car{A}$, hence by the first
    part of the proposition, the formula
    $\<j,i\> \in A \wedge y = \hat{\phi}_A(j)$ is equivalent to 
    $\<j,i\> \in A \wedge y = \phi(j)$.\qed
  \end{enumerate}
\end{proof}

\begin{definition}[Reification]
  --- Let $\<A,a\>$ be a pointed graph whose underlying graph has a
  collapse~$\phi$. We say that an object $x$ is a reification
  of $\<A,a\>$ if $x=\hat{\phi}_A(a)$.

Formally, the formula `$x$ is a reification of $g$' is 
$$\Reif(g,x)\quad\equiv\quad \ex A~\ex a~\ex\phi~
(g=\<A,a\>\land\Collapse(A,\phi)\land x=\hat{\phi}_A(a))$$

The formula `$g$ is a reifiable pointed graph is'
$$\Rgraph(g) \equiv{} \ex x~\Reif(g,x)$$
As an immediate corollary of Prop.~\ref{prop:CollapseUnique} we get
the following proposition.
\end{definition}

\begin{proposition}\label{prop:ReifOfBisimPGraphs} The formula
  $$\fa g~\fa x~\fa y~((\Reif(g,x) \land \Reif(g,y)) \limp x = y)$$
  is derivable in $\Zst$.
\end{proposition}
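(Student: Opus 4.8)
The plan is to reduce the statement to the uniqueness of the Mostovski collapse established in Proposition~\ref{prop:CollapseUnique}, where all the genuinely set-theoretic work has already been done; the remainder is bookkeeping about ordered pairs and the defined term $\hat{\phi}_A$. First I would unfold the two reification hypotheses. From $\Reif(g,x)$ I obtain a graph~$A$, a node~$a$ and a function~$\phi$ with $g=\<A,a\>$, $\Collapse(A,\phi)$ and $x=\hat{\phi}_A(a)$; symmetrically, from $\Reif(g,y)$ I obtain $A'$, $a'$, $\psi$ with $g=\<A',a'\>$, $\Collapse(A',\psi)$ and $y=\hat{\psi}_{A'}(a')$.

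Second, from the chain $\<A,a\>=g=\<A',a'\>$ I would derive $A=A'$ and $a=a'$ by injectivity of the Kuratowski pairing $\<u,v\>\equiv\{\{u\},\{u,v\}\}$. This identification is provable from Extensionality together with Pairing, and Extensionality is itself available here since it follows from Strong Extensionality (the first proposition of Section~2.1). After this step we are left with two collapse functions $\phi$ and $\psi$ of one and the same graph~$A$.

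Third, I would invoke Proposition~\ref{prop:CollapseUnique}, which gives $\phi=\psi$ outright. Now the term $\hat{\phi}_A(i)$ introduced just before the definition of reification is built by restricted comprehension from $A$, $\phi$ and the argument~$i$ alone, so it is a genuine functional operation on these data. Hence from $A=A'$, $a=a'$ and $\phi=\psi$ the substitutivity of equality yields $\hat{\phi}_A(a)=\hat{\psi}_{A'}(a')$, and therefore $x=y$ by transitivity.

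The only ingredient with any subtlety is the constructive injectivity of Kuratowski pairs, namely $\<A,a\>=\<A',a'\>\limp(A=A'\land a=a')$: one must check that it goes through intuitionistically, using only Extensionality and case analysis on membership in the doubletons, without any classical step. This is a standard fact, so I expect no real obstacle; the substance of the proposition is entirely carried by Proposition~\ref{prop:CollapseUnique}, which is precisely why the statement can be advertised as an immediate corollary.
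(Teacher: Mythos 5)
Your proof is correct and follows exactly the route the paper intends: the paper states this proposition as an immediate corollary of Proposition~\ref{prop:CollapseUnique}, and your argument simply makes explicit the bookkeeping (unfolding $\Reif$, injectivity of Kuratowski pairs, and substitutivity applied to the defined term $\hat{\phi}_A$) that justifies calling it immediate.
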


\begin{proposition}\label{prop:PGraphsWithSameReif} --- The
  formula 
  $$\fa x~\fa g~\fa h~
  ((\Reif(g,x) \land \Reif(h,x)) \limp\ g \approx h)$$
  is derivable in $\Zst$.
\end{proposition}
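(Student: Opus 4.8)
The plan is to exhibit an explicit bisimulation between the two pointed graphs, built from their collapse functions. First I would unfold the hypotheses $\Reif(g,x)$ and $\Reif(h,x)$: they provide graphs $A,B$, roots $a,b$ and collapses $\phi,\psi$ such that $g=\<A,a\>$, $h=\<B,b\>$, $\Collapse(A,\phi)$, $\Collapse(B,\psi)$ and $x=\hat\phi_A(a)=\hat\psi_B(b)$. The relation I would use to witness $g\approx h$ is the one pairing nodes that collapse to the same set, namely
$$r ~\equiv~ \{\<u,v\>\in(\Car{A}\cup\{a\})\times(\Car{B}\cup\{b\})\mid \hat\phi_A(u)=\hat\psi_B(v)\}\,.$$
This is a genuine set by restricted comprehension, the ambient product being available from pairing, union and powerset.

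Then I would check the three clauses of Definition~\ref{def:bisim} for $r$ viewed as a bisimulation from $\<A,a\>$ to $\<B,b\>$. Clause~1, $\<a,b\>\in r$, is immediate since $\hat\phi_A(a)=x=\hat\psi_B(b)$ and $a,b$ were adjoined to the carriers. For clause~2, assume $\<x',x\>\in A$ and $\<x,y\>\in r$, so $\hat\phi_A(x)=\hat\psi_B(y)$. By the characterization of $\hat\phi_A$ established above (part~2 of the preceding proposition), the edge $\<x',x\>\in A$ gives $\hat\phi_A(x')\in\hat\phi_A(x)$, hence $\hat\phi_A(x')\in\hat\psi_B(y)$; applying the same characterization to $\psi,B$ yields some $y'$ with $\<y',y\>\in B$ and $\hat\psi_B(y')=\hat\phi_A(x')$. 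As $x'\in\Car{A}$ and $y'\in\Car{B}$, we get $\<x',y'\>\in r$ and $\<y',y\>\in B$, as required. Clause~3 is proved symmetrically, exchanging the roles of $\phi,A$ and $\psi,B$. Having verified all three clauses, $r$ is a bisimulation, so $g\approx h$.

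The main obstacle is bookkeeping rather than conceptual. I must ensure $r$ really is a set, which forces me to bound the nodes inside the carriers and to adjoin the roots $a$ and $b$ explicitly, since the root of a pointed graph need not lie in the carrier of its underlying graph; this is also exactly why the relation is phrased through the extended collapse $\hat\phi_A$ rather than $\phi$ itself, as $\phi$ is only defined on $\Car{A}$ whereas $\hat\phi_A(a)$ is the value at the (possibly external) root used by $\Reif$. The second delicate point is to keep the edge orientation and the quantifier alternation straight, so that the ``forth'' clause~2 and the ``back'' clause~3 each match one direction of the membership characterization of $\hat\phi_A$; once the directions are aligned, both clauses follow from the same lemma applied to the two graphs in turn. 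Note finally that, unlike Proposition~\ref{prop:CollapseUnique}, this argument does not invoke Strong Extensionality: we only need to build one bisimulation, not to know that the collapse is unique.
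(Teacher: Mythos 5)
Your proof is correct and follows essentially the same route as the paper: both define $r$ as the set of node pairs whose extended collapses $\hat\phi_A$, $\hat\psi_B$ agree and then verify that $r$ is a bisimulation. Your only deviation is to bound $r$ by $(\Car{A}\cup\{a\})\times(\Car{B}\cup\{b\})$ rather than the paper's $\Dom(\phi)\times\Dom(\psi)$, which is in fact a small improvement, since the roots need not lie in the carriers and clause~1 requires $\<a,b\>\in r$.
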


\begin{proof}
Let $x$ be a set, and $g=\<A,a\>$ and $h=\<B,b\>$ be two pointed
graphs such that $\Reif(g,x)$ and $\Reif(h,x)$.  Assume that
$\phi$ is a collapse of $\<A,a\>$ such that $\phi(a)=x$ and
$\psi$ is a collapse of $\<B,b\>$ such that $\psi(b)=x$.
  We then define the relation $r$ by
  $$r=\{\<y,z\>\in\Dom(\phi)\times\Dom(\psi)\mid
  \hat{\phi}_{A}(y)=\hat{\psi}_{B}(z)\}$$
  and check that this is a bisimulation of~$g=\<A,a\>$ with
  $h=\<B,b\>$.\qed
\end{proof}

By definition, a reifiable pointed graph has a reification. 
We prove that, conversely, every set is the reification of some
pointed graph. This existence property can be proved with the
Replacement Scheme of $\ZF$. However, the weaker Transitive Closure
axiom is sufficient.

\begin{proposition}\label{prop:SetsAreReifs} The formula
$$\fa x\ \ex g~\Reif(g,x)$$
is derivable in $\Zst$.
\end{proposition}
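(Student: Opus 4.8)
The plan is to realise an arbitrary set~$x$ as the reification of the \emph{membership graph} of a transitive set containing~$x$, whose collapse will simply be the identity. First I would invoke the Transitive Closure axiom on the singleton~$\{x\}$ to obtain a set~$T$ with $\{x\}\subseteq T$ (hence $x\in T$) that is transitive, i.e.\ $\fa u\fa v~(u\in v\land v\in T\limp u\in T)$. Using pairing, powerset and restricted comprehension I would then form the product $T\times T$ and carve out the graph
$$A ~\equiv~ \{c\in T\times T\mid \ex u\,\ex v~(c=\<u,v\>\land u\in v)\}\,,$$
so that $\<u,v\>\in A \liff (u\in T\land v\in T\land u\in v)$. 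By construction $A$ is a set of pairs, so $\Graph(A)$ holds, and its carrier $\Car{A}$ is a subset of~$T$.

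Next I would take for the collapse the identity on the carrier,
$$\phi ~\equiv~ \{c\in\Car{A}\times\Car{A}\mid \pi_1(c)=\pi_2(c)\}\,,$$
so that $\<i,y\>\in\phi \liff (i=y\land i\in\Car{A})$. This is clearly a function with $\Dom(\phi)=\Car{A}$, and I would verify the defining biconditional of $\Collapse(A,\phi)$. Since $\phi$ is the identity, its left-hand side $y'\in y\land\<i,y\>\in\phi$ reduces to $y'\in i\land i\in\Car{A}$, while its right-hand side $\ex i'~(\<i',i\>\in A\land\<i',y'\>\in\phi)$ reduces to $\<y',i\>\in A$ (the conjunct $y'\in\Car{A}$ being automatic once $\<y',i\>\in A$). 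These are equivalent precisely because~$T$ is transitive: if $y'\in i$ and $i\in\Car{A}\subseteq T$ then $y'\in T$, whence $\<y',i\>\in A$; conversely $\<y',i\>\in A$ forces $y'\in i$ and makes~$i$ occur in an edge, so $i\in\Car{A}$. Thus $\phi$ is a collapse of~$A$.

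Finally I would set $g\equiv\<A,x\>$ and check $\Reif(g,x)$ by computing the extension of $\hat{\phi}_A(x)$ directly from its definition. Since $\Cod(\phi)=\Car{A}$ and $\phi$ is the identity there, we get $\hat{\phi}_A(x)=\{j\in\Car{A}\mid\<j,x\>\in A\}=\{j\mid j\in x\land j\in T\}$; and because $x\in T$ with~$T$ transitive, every $j\in x$ already lies in~$T$, so by Extensionality $\hat{\phi}_A(x)=x$. Hence $A$, $a=x$ and~$\phi$ witness $\Reif(\<A,x\>,x)$, which proves $\fa x\,\ex g~\Reif(g,x)$.

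The one genuinely non-routine ingredient is the very first step: the existence of a transitive set containing~$x$, which is exactly where the Transitive Closure axiom is needed (in full $\ZF$ one would instead obtain the same graph on the transitive closure by Replacement). Everything after that is a mechanical check; the only points requiring a little care are to compute $\hat{\phi}_A$ on the root~$x$ from its definition rather than from the earlier proposition, since~$x$ need not belong to~$\Car{A}$ (for instance when $x=\varnothing$ has no incoming edges), and to note that it is transitivity of~$T$ — not well-foundedness, which may fail in~$\Zst$ — that makes the identity a collapse.
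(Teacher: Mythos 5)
Your proposal is correct and follows essentially the same route as the paper: the paper's one-line proof takes $A=\Cl(x)\cup\{x\}$ (a transitive set containing $x$) with the membership relation as the graph and $x$ as the root, which is exactly your construction with the identity collapse spelled out in detail. Your additional care about computing $\hat{\phi}_A(x)$ directly (since $x$ may have no incoming edges) and about transitivity rather than well-foundedness being the operative hypothesis is a faithful elaboration of what the paper leaves implicit.
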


\begin{proof} 
  Let $A$ be the set $\Cl(x) \cup \{x\}$ and $r$ the relation on $A$
  defined by  $r(u,v)$ if and only if $u \in v$, the set $x$ is the
  reification of the pointed graph $\<r,x\>$.\qed
\end{proof}

We now want to show that the class $\Rgraph$ is projective.
To do so, we first have to project any set~$A$ to a collapsible graph
$G(A)$.
Intuitively, the graph $G(A)$ is defined as the largest subgraph
of~$A$ that has a collapse.
This relies on the following definition:
\begin{definition}[Initial subgraph]
$$\ISeg(G,A) ~~\equiv ~~ \Graph(G)\land
  \fa x~\fa~y~((\<x,y\> \in A \land y \in\Car{G})
  \limp\<x,y\> \in G)$$
\end{definition}

\begin{proposition}\label{CollapseISeg}
If $\Collapse(A,\phi)$ and $\ISeg(G,A)$ then $\Collapse(G,
\phi_{|\Car{G}})$
\end{proposition}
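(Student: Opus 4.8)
The plan is to write $\psi \equiv \phi_{|\Car{G}}$ for the restriction $\{c\in\phi\mid \pi_1(c)\in\Car{G}\}$ and to verify, one conjunct at a time, the four clauses of $\Collapse(G,\psi)$: that $G$ is a graph, that $\psi$ is a function, that $\Dom(\psi)=\Car{G}$, and the Mostovski equation for $G$ and $\psi$. Two of these are immediate. The clause $\Graph(G)$ is literally part of the hypothesis $\ISeg(G,A)$. And since $\psi\subseteq\phi$, every element of $\psi$ is already a pair and single-valuedness is inherited from $\phi$, so $\mathrm{Function}(\psi)$ follows from $\mathrm{Function}(\phi)$.

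For the domain, I would first note that $\Car{G}\subseteq\Car{A}$, which comes from $G\subseteq A$ (the containment carried by ``$G$ is a subgraph of $A$''): every edge of $G$ is an edge of $A$, hence every vertex of $G$ is a vertex of $A$. The inclusion $\Dom(\psi)\subseteq\Car{G}$ is built into the restriction, since $\langle i,y\rangle\in\psi$ forces $i=\pi_1(\langle i,y\rangle)\in\Car{G}$. Conversely, if $i\in\Car{G}$ then $i\in\Car{A}=\Dom(\phi)$, so some $y$ satisfies $\langle i,y\rangle\in\phi$, and as $i\in\Car{G}$ this pair survives the restriction, giving $i\in\Dom(\psi)$. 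Hence $\Dom(\psi)=\Car{G}$.

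The heart of the argument is the Mostovski equation: for $i\in\Car{G}$ with $\langle i,y\rangle\in\psi$ (so $y=\phi(i)$) and any $x$, one must show that $x\in y$ holds if and only if there is an $i'$ with $\langle i',i\rangle\in G$ and $\langle i',x\rangle\in\psi$. For the forward direction, from $x\in\phi(i)$ the collapse equation for $A$ (applicable since $\langle i,y\rangle\in\psi\subseteq\phi$) yields an $i'$ with $\langle i',i\rangle\in A$ and $\langle i',x\rangle\in\phi$; because $\langle i',i\rangle\in A$ and $i\in\Car{G}$, the initiality clause of $\ISeg(G,A)$ promotes this edge to $\langle i',i\rangle\in G$, and this in turn places $i'$ in $\Car{G}$, so that $\langle i',x\rangle\in\phi$ survives the restriction as $\langle i',x\rangle\in\psi$. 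The backward direction is the dual and easier: from $\langle i',i\rangle\in G\subseteq A$ and $\langle i',x\rangle\in\psi\subseteq\phi$, the collapse equation for $A$ gives $x\in\phi(i)=y$ directly.

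I expect the only genuinely delicate step to be this forward direction, and specifically the role of $\ISeg$: the witness $i'$ is produced by the collapse of the \emph{full} graph $A$, and nothing a priori guarantees that $i'$ is a vertex of $G$ or that the edge $\langle i',i\rangle$ lies in $G$; it is exactly the initial-subgraph condition (every $A$-edge into $\Car{G}$ already belongs to $G$) that closes this gap, after which $i'\in\Car{G}$ is automatic and the restriction $\psi$ discards nothing that is needed. One point worth flagging is that the backward direction uses $G\subseteq A$, so the proof relies on reading $\ISeg(G,A)$ as asserting that $G$ is a subgraph of $A$, as both its name and its intended use in defining the largest collapsible subgraph $G(A)$ require.
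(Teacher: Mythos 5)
Your proof is correct and follows essentially the same route as the paper's: both set $\psi=\phi_{|\Car{G}}$ and reduce the Mostovski equation to the equivalence of $\ex j~(\<j,i\>\in A\land\<j,y\>\in\phi)$ with $\ex j~(\<j,i\>\in G\land\<j,y\>\in\psi)$ for $i\in\Car{G}$, the forward direction resting on the initiality clause of $\ISeg$ and the backward one on $G\subseteq A$; you merely spell out the function-hood and domain checks that the paper dismisses as routine. Your closing remark is apt: the containment $G\subseteq A$ is not literally part of the displayed definition of $\ISeg(G,A)$, and the paper's proof silently uses it too (it is supplied in the intended application by the condition $G\in\P(A)$ in the definition of $G(A)$).
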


\begin{proof}
Let $\psi = \phi_{|\Car{G}}$. It is routine to
check that if $i \in\Car{G}$, then the formul{\ae}
$$\ex j~(\<j,i\> \in A \land \<j,y\> \in \phi)
\qquad\text{and}\qquad
\ex j~(\<j,i\> \in G \land \<j,y\> \in \psi)$$
are equivalent. Thus, if $i \in\Car{G}$, then we have
$y \in \psi(i)$
if and only if 
$y \in \phi(i)$
if and only if 
$\ex j~(\<j,i\> \in A \land \<j,y\> \in \phi)$
if and only if 
$\ex j~(\<j,i\> \in G \land \<j,y\> \in \psi)$.
Thus $\psi$ is a collapse of $G$. \qed
\end{proof}

\begin{proposition}
Let $A$ be a graph, and $G_1$ and $G_2$ two initial subgraphs of $A$
with collapses $\phi_1$ and $\phi_2$. Then $\phi_1$ and $\phi_2$
coincide on $D = \Dom(\phi_1) \cap \Dom(\phi_2)$. 
\end{proposition}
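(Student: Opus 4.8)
The plan is to reproduce, one node at a time over $D$, the argument used for Proposition~\ref{prop:CollapseUnique}, i.e.\ to exhibit a relation to which an instance of the Strong Extensionality scheme applies. Since $\Dom(\phi_1)=\Car{G_1}$ and $\Dom(\phi_2)=\Car{G_2}$, the set $D$ is exactly $\Car{G_1}\cap\Car{G_2}$. I would introduce the relation
$$r(u,v)\ \equiv\ \ex i~(i\in D\land\<i,u\>\in\phi_1\land\<i,v\>\in\phi_2)\,,$$
so that $r(u,v)$ holds precisely when $u=\phi_1(i)$ and $v=\phi_2(i)$ for some common index $i\in D$. Fixing an arbitrary $i_0\in D$, I would instantiate Strong Extensionality with this $r$, $a=\phi_1(i_0)$ and $b=\phi_2(i_0)$; discharging the three hypotheses of the scheme then yields $\phi_1(i_0)=\phi_2(i_0)$, hence the coincidence of $\phi_1$ and $\phi_2$ on all of $D$.

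The base hypothesis $r(\phi_1(i_0),\phi_2(i_0))$ is immediate from $i_0\in D$. The crux is the forth condition: assuming $u'\in u$ and $r(u,v)$, I pick $k\in D$ with $\phi_1(k)=u$ and $\phi_2(k)=v$. From $u'\in\phi_1(k)$ and the collapse equation for $\phi_1$ over $G_1$, I get an index $j\in\Car{G_1}$ with $\<j,k\>\in G_1$ and $u'=\phi_1(j)$. The key move is to show $j\in D$ again: being an initial subgraph of $A$, $G_1$ is in particular a subgraph, so $\<j,k\>\in G_1$ gives $\<j,k\>\in A$; and since $k\in D\subseteq\Car{G_2}$, the property $\ISeg(G_2,A)$ forces $\<j,k\>\in G_2$, whence $j\in\Car{G_2}$ and therefore $j\in D$. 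The collapse equation for $\phi_2$ then gives $\phi_2(j)\in\phi_2(k)=v$, so $v'=\phi_2(j)$ satisfies $v'\in v$ together with $r(u',v')$ (witnessed by $j\in D$). The back condition is symmetric, this time using $\ISeg(G_1,A)$ to push an edge $\<j,k\>\in G_2$ back into $G_1$.

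I expect the main obstacle to be precisely this verification that descending along an edge into a node of $D$ keeps us inside $D$; it genuinely requires \emph{both} $G_1$ and $G_2$ to be initial subgraphs of $A$, since an incoming edge present in one graph at a node shared by both is then automatically present in the other. This is also why the more naive reduction---restricting both collapses to the edge-intersection $G_1\cap G_2$ and invoking Propositions~\ref{CollapseISeg} and~\ref{prop:CollapseUnique}---does not suffice: $\Car{G_1\cap G_2}$ may be strictly smaller than $D$, as a node of $D$ may occur only as a source of edges in each graph and thus contribute no common edge. The node-by-node Strong Extensionality argument avoids this gap, because at each $i_0\in D$ it inspects only the edges \emph{into} $i_0$, which the initial-subgraph hypotheses guarantee to be shared.
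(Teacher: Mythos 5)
Your proof is correct, but it takes a genuinely different route from the paper's. The paper forms the common initial subgraph $G=A\cap(D\times D)$ (which coincides with $G_1\cap G_2$), applies Proposition~\ref{CollapseISeg} to see that ${\phi_1}_{|\Car{G}}$ and ${\phi_2}_{|\Car{G}}$ are both collapses of $G$, concludes that they are equal by Proposition~\ref{prop:CollapseUnique}, and then---precisely because $\Car{G}$ may be a proper subset of $D$, as you correctly observe---extends the equality to all of $D$ by one further unfolding of the collapse equation at each $i\in D$, using the fact that the incoming edges at such an $i$ are the same in $G_1$ and in $G_2$ and have their sources in $\Car{G}$. You instead bypass this reduction and re-run the Strong Extensionality argument of Proposition~\ref{prop:CollapseUnique} with the witnessing relation relativized to indices in $D$; your verification that $D$ is closed under passing to edge-sources is exactly the initial-subgraph argument the paper deploys in its final extension step, so the two proofs hinge on the same key fact. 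What your approach buys is a single, self-contained application of the scheme covering all of $D$ at once; what the paper's buys is reuse of the two previously established lemmas, at the cost of the two-stage bookkeeping and the extra unfolding. One minor remark: both arguments tacitly use $G_1,G_2\subseteq A$ (you to get $\<j,k\>\in A$ from $\<j,k\>\in G_1$, the paper to show $G$ is an initial subgraph of $G_1$ and $G_2$), which the formal definition of $\ISeg$ does not literally guarantee but which is clearly intended, as the construction of $G(A)$ from $\P(A)$ confirms; this is a quirk of the paper's definition rather than a defect of your proof.
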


\begin{proof}
  Let $G = A \cap (D \times D)$. 
  Notice that $D=\Car{G_1}\cap\Car{G_2}$. It is routine to check
  that $G$ is an initial subgraph of $G_1$ (resp. $G_2$).
  Let $D'=\Car{G}\subseteq D$.
  By Prop.~\ref{CollapseISeg}, ${\phi_1}_{|D'}$ and
  ${\phi_2}_{|D'}$ are collapses of $G$ hence
  they are equal by Prop.~\ref{prop:CollapseUnique}.
  We now want to prove that $\phi_1$ and $\phi_2$ coincide on the full
  set $D$. Consider an element $i\in D$.
  We have $y\in\phi_1(i)$ if and only if 
  $\exists j~\<j,i\>\in G_1\land y=\phi_1(j)$ if and only if 
  $\exists j~\<j,i\>\in G_2\land y=\phi_2(j)$ if and only if 
  $y\in\phi_2(i)$.
  The equivalence
  $\exists j~\<j,i\>\in G_1\land y=\phi_1(j)$ if and only if 
  $\exists j~\<j,i\>\in G_2\land y=\phi_2(j)$
  is justified by noticing that the proposition
  $\<j,i\>\in G_1$ and $\<j,i\>\in G_2$ are equivalent when $i\in D$
  (since both $G_1$ and $G_2$ are initial subgraphs of~$A$),
  and that in this case, we have $j\in D'$, hence
  $\phi_1(j)=\phi_2(j)$.\qed
\end{proof}

As an immediate corollary, we get:

\begin{proposition}
The union of all the initial subgraphs of a set $A$ that have a
collapse has a collapse.
\end{proposition}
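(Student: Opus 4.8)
The plan is to produce the union graph together with its collapse explicitly and to verify the three clauses of $\Collapse$, the single delicate point being that the candidate collapse is genuinely a function. Write $\mathcal{F}=\{G\in\P(A)\mid\ISeg(G,A)\land\ex\phi~\Collapse(G,\phi)\}$ for the set of initial subgraphs of~$A$ that have a collapse; this is a set by Separation inside $\P(A)$. Put $G^*=\bigcup\mathcal{F}$, and let $\phi^*$ be the union of the collapses of the members of~$\mathcal{F}$, each of which is unique by Proposition~\ref{prop:CollapseUnique}. I claim $\Collapse(G^*,\phi^*)$.

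First I would check that $G^*$ is again an initial subgraph of~$A$: it is a graph as a union of graphs, and if $\<x,y\>\in A$ with $y\in\Car{G^*}$ then the pair witnessing $y\in\Car{G^*}$ lies in a single $G\in\mathcal{F}$, so $y\in\Car{G}$ and hence $\<x,y\>\in G\subseteq G^*$ because $G$ is an initial subgraph. Next, $\phi^*$ is single valued: this is exactly the content of the preceding proposition (two collapses of members of~$\mathcal{F}$ agree on the intersection of their domains), and it is what makes the present statement an immediate corollary. The domain condition $\Dom(\phi^*)=\Car{G^*}$ then drops out by locating, for each vertex, one member of~$\mathcal{F}$ containing it: if $\<i,y\>\in\phi^*$ then $i\in\Car{G}\subseteq\Car{G^*}$ for the relevant $G$, while conversely $i\in\Car{G^*}$ gives $i\in\Car{G}=\Dom(\phi_G)$ for some $G\in\mathcal{F}$, so $\<i,\phi_G(i)\>\in\phi^*$.

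The substantive step is the collapse equation for $(G^*,\phi^*)$. Fixing $i\in\Car{G^*}$, I would choose $G\in\mathcal{F}$ with $i\in\Car{G}$ and collapse $\phi_G\subseteq\phi^*$, and observe that the in-edges of~$i$ are the same in $G^*$ as in~$G$: the inclusion $G\subseteq G^*$ gives one direction, while any $\<i',i\>\in G^*$ satisfies $\<i',i\>\in A$ with $i\in\Car{G}$, forcing $\<i',i\>\in G$ by the initial-subgraph property. Since the sources $i'$ of these edges lie in $\Car{G}=\Dom(\phi_G)$, where $\phi^*$ and $\phi_G$ coincide, the clause $\ex i'~(\<i',i\>\in G^*\land\<i',y'\>\in\phi^*)$ is equivalent to $\ex i'~(\<i',i\>\in G\land\<i',y'\>\in\phi_G)$, and similarly $\<i,y\>\in\phi^*\liff\<i,y\>\in\phi_G$. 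Thus the equation for $(G^*,\phi^*)$ at~$i$ reduces to the equation already known for $(G,\phi_G)$.

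The main obstacle is really dispatched in advance: the only non-routine ingredient is the coherence of the various collapses, which guarantees that their union $\phi^*$ is a legitimate function rather than a mere relation, and this is precisely the preceding proposition. Once that is in hand the remaining verifications are bookkeeping, the one point deserving attention being the matching of in-edges above, which is exactly where the initial-subgraph condition on the members of~$\mathcal{F}$ is used.
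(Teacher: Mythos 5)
Your proof is correct and is exactly the argument the paper has in mind: the paper states this proposition with no proof at all, presenting it as an ``immediate corollary'' of the preceding coherence result, and your writeup simply makes explicit the intended union construction, the single-valuedness of $\phi^*$ via that coherence result, and the local verification of the collapse equation using the initial-subgraph property to match in-edges. The only point you (like the paper) pass over silently is that forming $\phi^*$ means collecting the family of collapses $\{\phi_G\}_{G\in\mathcal{F}}$ into a set, which is an image under a definable class function and so deserves a word in Zermelo set theory without Replacement; everything else is indeed the bookkeeping you describe.
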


\begin{definition}[Largest collapsible subgraph] --- The largest
  collapsible subgraph of a set~$A$ is given by
  $$G(A) = \bigcup \{G \in \P(A)~|~\ISeg(G,A) \land \ex
  \psi~\Collapse(G,\psi)\}$$
\end{definition}

The projection of any set~$x$ onto the class $\Rgraph$ of reifiable
pointed graphs is then defined as
$$\cast{x}_{\Rgraph} ~=~ \<G(\pi_1(x),\pi_2(x)\>$$

\subsection{Translation}

We are now ready to define a translation from $\Zermod$ to $\Zskol$.
Each sort $s$ of $\Zermod$ is interpreted as a sort of $\Zskol$
written $s_*$ accompanied with a relativization predicate written
$s^*(x)$ (where $x$ is of sort $s_*$).  We take
\begin{itemize}
\item $G_*=\Set$,\ \ with\ \ $G^*(x) \equiv \Rgraph(x)$
\item $N_*=\Set$,\ \ with\ \ $N^*(x) \equiv \top$
\item $C_*=\Class$,\ \ with\ \ $C^*(x) \equiv \top$
\item $R_*=\Class$,\ \ with\ \ $R^*(c) \equiv
  \fa x~(\mem(x,c)\limp\ex y~\ex z~(x=\<y,z\>))$
\end{itemize}

Each term $t$ (resp. formula~$P$) of $\Zermod$ is translated as a
term $t^*$ (resp. formula~$P^*$) of $\Zskol$.  These translations
are defined by mutual induction in Tables~\ref{tbl:AxInterp1}
and~\ref{tbl:AxInterp2}.

\begin{table*}
  $$\begin{array}{l}\hline\hline\\[-6pt]
    \begin{array}{rcl@{\qquad}rcl@{\qquad}rcl}
      \interp{x} &\equiv& x \\
      \interp{(\root(a))} &\equiv& \pi_2(a^*) &
      \interp{(i(a))} &\equiv& \<0,a^*\> &
      \interp{(j(a))} &\equiv& \<1,a^*\> \\
      \interp{(a/b)} &\equiv& \<\pi_1(a^*),b^*\> &
      \interp{(i'(a))} &\equiv& \pi_2(a^*) &
      \interp{(j'(a))} &\equiv& \pi_{2}(a^*) \\
      \interp{o} &\equiv& 0 &
      \interp{S(x)} &\equiv& \interp{x} \cup \{\interp{x}\} &
      \interp{(\rho(a))} &\equiv& a^* \\
      \interp{0} &\equiv& 0 &
      \interp{\Pred(x)} &\equiv& \bigcup \interp{x} &
      \interp{(\rho'(a))} &\equiv& \cast{a^*}_{\Rgraph} \\
    \end{array} \\
    \noalign{\bigskip}
    (g_{x,y_1,\ldots,y_n,P}(b_1,\ldots,b_n))^* \equiv
    \LBrc x\mid P^*(y_1 \la \interp{b_1}, ..., y_n\la\interp{b_n})\RBrc\\
    (g'_{x,x',y_1,\ldots,y_n,P}(b_1,\ldots,b_n))^* \equiv
    \LBrc z \mid \ex x \ex x'~(z = \<x,x'\> \land
    P^*(y_1 \la \interp{b_1}, ..., y_n \la \interp{b_n})) \RBrc\\
    \noalign{\bigskip}
    \interp{(\bigcup (a))} \equiv \<R,0\>~
    \mbox{where}~X \equiv (\{0\} \times \Car{a^*})\cup\{0\}\
    \mbox{and}\\
    \begin{array}{rcl}
      R \equiv \{c \in X \times X &|&
           \ex y \ex y'~(c=\<\<0,y'\>,\<0,y\>\> \land
           \<y',y\>\in \pi_1(a^*))\\
        && \lor \ex y' \ex y~(c=\<\<0,y'\>,0\> \land \<y',y\>\in
            \pi_1(a^*)\  \land  \<y,\pi_2(a^*)\>\in \pi_1(a^*))\} \\
    \end{array} \\[18pt]
    \interp{(\{a,b\})} \equiv \<R,0\>~
     \mbox{where}~X \equiv (\{0\}\times \Car{a^*})\cup(\{1\}\times
     \Car{b^*})\cup\{0\}~
     \mbox{and}\\
     \begin{array}{rcl}
      R \equiv \{c\in X \times X &|&
      \ex y \ex y'~(c=\<\<0,y'\>,\<0,y\>\> \land \<y',y\>\in \pi_1(a^*))\\
      && \lor \ex y \ex y'~(c=\<\<1,y'\>,\<1,y\>\> \land \<y',y\> \in
     \pi_1(b^*)) \\
      && \lor c=\<\<0,\pi_2(a^*)\>,0\> \lor c=\<\<1,\pi_2(b^*)\>,0\>\} \\
    \end{array} \\[18pt]
    (\P(a))^* \equiv \<R,0\>~
    \mbox{where}~ X \equiv (\{0\} \times
    \Car{a^*})\cup(\{1\}\times\P(\Car{a^*}))\cup\{0\}~\mbox{and}\\
     \begin{array}{rcl}
       R \equiv \{c\in X \times X &|&
       \ex y \ex y'\ (c=\<\<0,y'\>,\<0,y\>\> \land \<y',y\>\in \pi_1(a^*))\\
       &&\lor \ex y \ex p\ (c=\<\<0,y\>,\<1,p\>\> \land
       \<y,\pi_2(a^*)\>\in \pi_1(a^*) \land y \in p)\\
       &&\lor \ex p\ (c=\<\<1,p\>,0\>)\} \\
     \end{array}\\[18pt]
    (f_{x,y_1,\ldots,y_n,P}(a_1,\ldots,a_n,a))^* \equiv
     \<R,0\>~
     \mbox{where}~X \equiv (\{0\}\times \Car{a^*})\cup\{0\}~
     \mbox{and}\\
     \begin{array}{rcl}
       R \equiv \{c \in X \times X &|&
       \ex y \ex y'~
       (c=\<\<0,y'\>,\<0,y\>\> \land \<y',y\>\in\pi_1(a^*))\\
       && \lor~~ \ex y~(c=\<\<0,y\>,0\> \land
       \<y,\pi_2(a^*)\>\in\pi_1(a^*)\\
       &&\hphantom{\lor~~ \ex y~(}
       {\land}~P^*(x\la\<\pi_1(a^*),y\>,y_{1..n}\la a_{1..n}^*))\}\\
      \end{array}\\[18pt]
    \Omega^* \equiv \<R,0\>~
    \mbox{where}~
    X\equiv(\{0\}\times\N) \cup \{0\}~ \mbox{and}\\
    \begin{array}{rcl}
      R \equiv \{c \in X \times X &|&
      \ex y \ex y'~(c=\<\<0,y'\>,\<0,y\>\> \land y' \in y)\\
      && \lor \ex y\ (c = \<\<0,y\>,0\>)\} \\
    \end{array} \\[18pt]
    (\Cl(a))^* \equiv \<R,0\>~
    \mbox{where}~
    X\equiv(\{0\}\times\Car{a^*})\cup\{0\}~\mbox{and}\\
    \begin{array}{rcl}
      R \equiv \{c \in X \times X &|&
      \ex y \ex y'~(c=\<\<0,y'\>,\<0,y\>\> \land \<y',y\>\in\pi_1(a^*))\\
      && \lor \ex y~(c=\<\<0,y\>,0\> \land
      \<y, \pi_2(a^*) \> \in\Clos(\pi_1(a^*)))
    \end{array}\\[3pt]
    \text{where $\Clos(r)$ is the term}\\
    \{c\in\Car{r}\times\Car{r}\mid
    \fa r'~(r\subseteq r'~\land~
    \fa x\fa y\fa z~
    (\<x,y\>\in r'\land\<y,z\>\in r'\limp\<x,z\>\in r')
    ~\limp~c\in r')\} \\
    \noalign{\medskip}
    \hline\hline
  \end{array}$$
  \caption{Translation of terms}
  \label{tbl:AxInterp1}
\end{table*}

\begin{table}
  $$\begin{array}{r>{\quad}c<{\quad}l}
    \hline\hline
    (t~\eta_a~u)^* &\equiv& \<t^*,u^*\> \in \pi_1(a^*)\\
    (t = u)^* &\equiv& t^* = u^* \\
    (\mem(t,p))^* &\equiv& \mem(t^*,p^*)\\
    (\rel(t,u,r))^* &\equiv& \mem(\<t^*,u^* \>, r^*)\\
    (I(t))^* &\equiv&\exists y\ t^* =\<0,y\> \\
    (J(t))^* &\equiv&\exists y\ t^* =\<1,y\> \\
    (\Null(t))^* &\equiv&t^* = 0 \\
    (t < u)^* &\equiv&
    t^*\in u^* \land u^*\in\N \\
    (\Nat(t))^* &\equiv& t^*\in\N \\
    (t\approx u)^* &\equiv& t^* \approx u^*\\
    (t \in u)^* &\equiv& \exists z\ (\<z,\pi_2(u^*)\>\in\pi_1(u^*)\
    \land t^*\approx \<\pi_1(u^*), z\>)\\
    \top^* &\equiv& \top\\
    \bot^* &\equiv& \bot\\
    (A \limp B)^* &\equiv& A^* \limp B^*\\
    (A \land B)^* &\equiv& A^* \land B^*\\
    (A \lor B)^* &\equiv& A^* \lor B^*\\
    (\fa x~A)^* &\equiv& \fa x~(s_*(x) \limp A^*)\\
    (\ex x~A)^* &\equiv& \ex x~(s_*(x) \land A^*)\\
    \hline\hline
  \end{array}$$
  \caption{Translation of formul{\ae}}
  \label{tbl:AxInterp2}
\end{table}

\begin{proposition}\label{term} --- If $a$ is a well-formed term of sort~$s$
in~$\Zermod$ with free variables $x_1,\ldots,x_n$ of sorts
$s_1,\ldots,s_n$ respectively, then 
$$\Zskol \vdash \fa x_1\ \cdots\ \fa x_n\
    (s_1^*(x_1) \land  \cdots \land  s_n^*(x_n)
    \limp s^*(a^*))$$
\end{proposition}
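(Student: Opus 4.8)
The plan is to argue by structural induction on the term $a$, uniformly over the four sorts, carrying the hypotheses $s_1^*(x_1)\land\cdots\land s_n^*(x_n)$ along in every case. The sort of $a$ immediately kills most cases: if $a$ has sort $N$ or $C$ then $s^*(a^*)$ is literally $\top$ and there is nothing to do, and if $a$ is a variable $x_i$ then $s^*(a^*)=s^*(x_i)$ is one of the hypotheses. The single term former of sort $R$ is $g'_{\cdots}(b_1,\ldots,b_n)$, whose translation is the class term $\LBrc z\mid\ex x\ex x'(z=\<x,x'\>\land\cdots)\RBrc$; since membership in this class forces $z$ to be a pair, the relativization predicate $R^*$ holds outright. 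So the whole content lies in the terms of sort $G$, for which the goal $s^*(a^*)$ reads $\Rgraph(a^*)$, i.e.\ that $a^*$ is a pointed graph whose underlying graph admits a Mostovski collapse.

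Two of the sort-$G$ cases need no new collapse. For $a/b$ the translation is $\<\pi_1(a^*),b^*\>$; the induction hypothesis gives $\Rgraph(a^*)$, so $a^*=\<A,a_0\>$ with $\Collapse(A,\phi)$ for some $\phi$ and $A=\pi_1(a^*)$, and since $\hat\phi_A(i)$ is defined at an arbitrary root $i$ the pointed graph $\<A,b^*\>$ is again reifiable. For $\rho'(a)$ the translation is $\cast{a^*}_{\Rgraph}=\<G(\pi_1(a^*)),\pi_2(a^*)\>$, and $\Rgraph$ of this term is precisely the fact, already established, that the largest collapsible subgraph $G(\cdot)$ of any set carries a collapse.

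The core of the argument is the six construction symbols $\bigcup$, $\{\_,\_\}$, $\P$, $f_{\cdots}$, $\Omega$ and $\Cl$, each translated as a pointed graph $\<R,0\>$ with $R\subseteq X\times X$ for an explicit carrier $X$. For each I would first check that $X$, and hence $R$, is a set---this uses only Pairing, Union, Powerset and (Restricted) Comprehension in $\Zskol$, the separating conditions being bona fide first-order set formul{\ae} (for $f_{\cdots}$ one needs here that $P^{*}$, produced by the mutual induction on formul{\ae}, is such a formula)---and that $R$ is visibly a graph. Then I would exhibit a collapse $\Phi$ of $R$ built from the collapse $\phi$ of the component graph(s) furnished by the induction hypothesis, following the uniform recipe $\Phi(\<0,y\>)=\phi(y)$ on the copied nodes and reading off the value of $\Phi$ at the fresh root $0$ (and, for pairing and powerset, at the tagged nodes $\<1,\cdot\>$) from the edges of $R$ that point into them. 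The defining biconditional of $\Collapse(R,\Phi)$ then matches the disjuncts of $R$ clause by clause, and $\Dom(\Phi)=\Car{R}$ holds because every non-isolated node of $X$ occurs as an endpoint of some disjunct.

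The step I expect to be the main obstacle is the collapse verification for the two constructions whose fresh nodes do not merely copy a component graph. For the powerset one must check that, as $p$ ranges over $\P(\Car{a^*})$, the value $\Phi(\<1,p\>)=\{\phi(y)\mid y\in p\land\<y,\pi_2(a^*)\>\in\pi_1(a^*)\}$ ranges over exactly the subsets of the children of the root, so that $\Phi(0)$ is the full powerset; the delicate point is that distinct $p$ may collapse to the same set, and one must produce, for each target subset, a witnessing $p$. For the transitive closure the edges into $0$ are governed by the term $\Clos(\pi_1(a^*))$, and one must see that $\Phi(0)=\{\phi(y)\mid\<y,\pi_2(a^*)\>\in\Clos(\pi_1(a^*))\}$ is a set---obtained by Comprehension as a subclass of $\Cod(\phi)$, so that no Replacement is needed---and that the collapse condition at $0$ holds. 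Everything else reduces to routine unfolding of the $\eta$-clauses against the collapse equation.
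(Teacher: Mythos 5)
Your proposal is correct and follows essentially the same route as the paper's own (much terser) proof: structural induction on the term, with the non-$G$ sorts handled trivially by the definitions of $N^*$, $C^*$ and $R^*$, the case $a/b$ by the observation that reifiability does not depend on the root, the case $\rho'(a)$ by the construction of the largest collapsible subgraph, and the six construction symbols by exhibiting a collapse of the translated graph using the corresponding axioms of $\Zskol$. Your additional remarks on avoiding Replacement (carving $\Phi(0)$ out of $\P(\Cod(\phi))$ resp.\ $\Cod(\phi)$ by Comprehension for the powerset and transitive-closure cases) correctly fill in the details the paper leaves implicit.
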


\begin{proof}
By induction on the structure of the term $a$. The only non trivial
case is when $t$ is of sort $G$, in which case we have to check that 
$t^*$ is a term of sort $\Set$ and that the formula $\Reif(t)$ is
provable in $\Zskol$. If $a$ has the form $\bigcup(b)$, $\{b,c\}$, 
$\P(b)$, $f_{x,y_1,\ldots,y_n,P}(b_1,\ldots,b_n,b)$, $\Omega$ or
$\Cl(a)$, then we just apply the induction hypothesis and prove that
the pointed graph built in the translation is reifiable (which needs
to use the corresponding axioms of $\Zskol$). If $a$ has the form 
$b/x$ then we have to prove that the pointed graph built in the
translation is reifiable which is obvious because reifiability
does not depend on the position of the root in the graph.
If the term has the form $\rho'(a)$. 
We have to check that the 
pointed graph built in the
translation is reifiable, and this holds because $G(a)$ is built in
order to have a collapse.\qed
\end{proof}

\begin{proposition}[Correction of rules] 
\label{rules}
--- If $P \lra Q$, where the
  free variables of $P$ are among $x_1,\ldots,x_n$ of sorts
  $s_1,\ldots,s_n$ respectively, then the formula
  $$\Zskol \vdash s_1^*(x_1) \land  \cdots \land  s_n^*(x_n)\ \limp
  (P^*\liff Q^*)$$
\end{proposition}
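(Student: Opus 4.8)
The plan is to prove Proposition~\ref{rules} by a case analysis over the rewrite rules of $\Zermod$ listed in Table~\ref{tbl:ZermodRewriteRules}. For each rule $P \lra Q$, I would compute both translations $P^*$ and $Q^*$ according to Tables~\ref{tbl:AxInterp1} and~\ref{tbl:AxInterp2}, and then verify in $\Zskol$ that, under the relativization hypotheses $s_1^*(x_1) \land \cdots \land s_n^*(x_n)$, the two are provably equivalent. The relativization hypotheses matter only for the sort $G$: whenever a free variable $a$ of sort $G$ occurs, the hypothesis $\Rgraph(a^*)$ guarantees that $a^*$ is a reifiable pointed graph, so that $\pi_1(a^*)$, $\pi_2(a^*)$ and $\Car{a^*}$ behave as expected and the collapse machinery of the previous section applies.

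The rules split naturally into groups. The \emph{relocation} and \emph{general} rules (for $i',j',\Pred,\Null,\Nat,{<},\rho',\root(a/x),(a/x)/y,x~\eta_{a/z}~y$) are purely computational: after translation each side reduces to a trivially equivalent set-theoretic formula, using the definitions $\interp{(i(a))}\equiv\<0,a^*\>$, $\interp{(a/b)}\equiv\<\pi_1(a^*),b^*\>$, and so on, together with elementary facts about pairs. The rule $y=z \lra \fa p~(\mem(y,p)\limp\mem(z,p))$ requires checking that Leibniz equality on the sort $C_*=\Class$ of definable classes coincides with set equality on $\Set$, which follows from the class comprehension scheme of $\Zclass$ (take the singleton class $\{y\}$). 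The $\mem$ and $\rel$ computation rules match the clauses $(g_{\ldots})^*$ and $(g'_{\ldots})^*$ of Table~\ref{tbl:AxInterp1} essentially by definition, after a $\beta$-style substitution of the translated arguments.

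The substantive cases are the $\approx$ rule, the $\in$ rule, and the six \emph{construction} rules. For $a\approx b \lra \cdots$, I would unfold the right-hand side using $(\rel(t,u,r))^* \equiv \mem(\<t^*,u^*\>,r^*)$ and match it against the definition of $g\approx g'$ from the bisimilarity paragraph; the existential witness $r$ of sort $R$ translates to a class of pairs, and the relativization $R^*$ is exactly what makes this a genuine relation. For the $\in$ rule one unfolds $(t\in u)^*$ from Table~\ref{tbl:AxInterp2} and checks it agrees with the translation of $\ex x~(x~\eta_b~\root(b) \land a\approx(b/x))$, using Proposition~\ref{prop:PGraphsWithSameReif}-style reasoning that bisimilar reifiable graphs have equal reifications. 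For each construction rule (union, pair, powerset, comprehension, $\Omega$, transitive closure), the right-hand $\eta$-formula translates into membership in the corresponding relation $R$ defined in Table~\ref{tbl:AxInterp1}; the verification amounts to unfolding the set-builder definition of $R$ over the carrier $X$ and matching disjunct-by-disjunct against the translated disjunctions of the rewrite rule, with the injections $\<0,\_\>$ and $\<1,\_\>$ playing the role of $i$ and $j$.

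\textbf{The main obstacle} will be the construction rules, and among them the powerset and transitive-closure rules, where the relation $R$ is defined over a carrier $X$ that already mentions $\P(\Car{a^*})$ or $\Clos(\pi_1(a^*))$. Here I must carefully verify that membership $\<t^*,u^*\>\in\pi_1(\bigl(\P(a)\bigr)^*)$ unfolds, via the explicit $R$, to exactly the translated disjunction on the right of the rewrite rule---including that the side conditions such as $\<y,\pi_2(a^*)\>\in\pi_1(a^*)$ and $y\in p$ line up with $y~\eta_a~\root(a)$ and $(a/y)\in c$. The delicate point is that the $\in$ occurring inside the construction (e.g.\ $(a/y)\in c$ in the powerset rule) is $\Zermod$-membership, whose translation is the nontrivial reification-based clause, so closing these cases requires the reification lemmas (Propositions~\ref{prop:SetsAreReifs} and~\ref{prop:PGraphsWithSameReif}) rather than raw set membership. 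I expect everything else to be routine bookkeeping once the correspondence between the node-injections $i,j$ and the tags $0,1$ is fixed.
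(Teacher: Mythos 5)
Your proposal follows the same route as the paper: a rule-by-rule verification that $P^*$ and $Q^*$ are provably equivalent in $\Zskol$ after unfolding the translation tables, with the construction rules as the only substantive cases. The paper's own proof is in fact terser than your plan---it only works out the union and node-equality rules as examples---so your identification of the powerset and transitive-closure cases (where $\Zermod$-membership on the right-hand side must be reconciled with raw set membership via the reification machinery) as the delicate ones is a refinement of, not a departure from, the paper's argument.
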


\begin{proof}
We check this rule by rule. Let us give a few examples.
\begin{itemize}
\item The rule 
$$\begin{array}{lll}
x~\eta_{\bigcup(a)}~x'\ &\lra&
(\ex y~\ex y'~(x=i(y) \land  x'=i(y') \land  y~\eta_{a}~y')) \\
&& \lor~~(\ex y~\ex z~(x=i(y) \land  x'=o \land  y~\eta_{a}~z
      \land z~\eta_{a}~\root(a)))
\end{array}$$
Consider an atomic
formula of the form $t~\eta_{\bigcup(a)}~t'$ that reduces to 
$$\begin{array}{l}
(\ex y~\ex y'~(t = i(y) \land  t' = i(y') \land  y~\eta_{a}~y'))\\
\lor~~ (\ex y~\ex z~(t=i(y) \land  t'=o \land  y~\eta_{a}~z \land
z~\eta_{a}~\root(a)))
\end{array}$$
The translation of the formula 
$t~\eta_{\bigcup(a)}~t'$ is 
$\<\interp{t},\interp{t'}\> \in \pi_1(\<R,0\>)$
where
$$\begin{array}{l@{}l}
  R = \{ & c \in X \times X \mid\\
  & \hphantom{\lor}~~ \ex y~\ex y'~(c=\<\<0,y\>,\<0,y'\>\> \land
  \<y,y'\>\in \pi_1(a^*))\\ 
  & \lor~~ \ex y~\ex z~(c=\<\<0,y\>,0\> \land \<y,z\>\in
  \pi_1(a^*)\  \land  \<z,\pi_2(a^*)\>\in \pi_1(a^*))\}
\end{array}$$
where $X=(\{0\} \times \Car{a^*})\cup \{0\}$.
This formula is provably equivalent in $\Zskol$ to 
$$\begin{array}{l}
\interp{t} \in X \land \interp{t'} \in X \land\\
(\ex y~\ex y'~(\interp{t} = \<0,y\> \land \interp{t'} = \<0,y'\> \land
\<y,y'\>\in \pi_1(a^*))\\ 
\lor~~\ex y~\ex z~(\interp{t} = \<0,y\> \land \interp{t'}= 0 \land
\<y,z\>\in \pi_1(a^*)\land\<z,\pi_2(a^*)\>\in \pi_1(a^*)))
\end{array}$$
that is provably equivalent in $\Zskol$ to 
$$\begin{array}{l}
\ex y~\ex y'~(\interp{t} = \<0,y\> \land \interp{t'} = \<0,y'\> \land
\<y,y'\>\in \pi_1(a^*))\\ 
\lor~~ \ex y~\ex z~(\interp{t} = \<0,y\> \land \interp{t'}= 0 \land
\<y,z\>\in \pi_1(a^*)\land\<z,\pi_2(a^*)\>\in \pi_1(a^*))
\end{array}$$
that is the translation of
$$\begin{array}{l}
(\ex y~\ex y'~(t = i(y) \land  t' = i(y') \land  y~\eta_{a}~y'))\\
\lor~~(\ex y~\ex z~(t=i(y) \land  t'=o \land  y~\eta_{a}~z \land
z~\eta_{a}~\root(a)))\,.
\end{array}$$
\item The rule 
$$y = z \lra \fa p~(\mem(y,p) \limp \mem(z,p))$$
Consider an atomic formula $t = u$ that reduces to 
$$\fa p~(\mem(t,p) \limp \mem(t,p))\,.$$
The l.h.s.\ translates to the formula $t^*=u^*$
whereas the r.h.s.\ translates to
$$\fa p~(\top\limp\mem(t^*,p)\limp\mem(u^*,p))\,.$$
Both formul{\ae} are equivalent in $\Zclass$.
\couic{\item The rule
  $$\begin{array}{rcl}
    a \approx b &\lra&
    \ex r~(\rel(\root(a),\root(b),r) \\
    &&\hphantom{\ex r~(} \land \quad\fa x\fa x'\fa y~
    (x'~\eta_a~x \land \rel(x,y,r) \limp
    \ex y' ~ (y'~\eta_b~y \land \rel(x',y',r))) \\
    &&\hphantom{\ex r~(} \land \quad\fa y\fa y'\fa x~
    (y'~\eta_b~y \land \rel(x,y,r) \limp
    \ex x' ~ (x'~\eta_a~x \land \rel(x',y',r)))) \\
  \end{array}$$
  Consider an atomic formula $t\approx u$, which translates
  to $t^*\approx u^*$ (where $\approx$ now expresses bisimilarity in
  set theory in the sense of Def.~\ref{def:bisim}).
  Using the rule above, the formula $t=u$ reduces to a formula
  that translates to (TO FINISH)}
\end{itemize}
\end{proof}

\begin{proposition}[Correction of the translation] 
\label{toto}
--- Let $P$ be a
formula of $\Zermod$ with free variables $x_1,\ldots,x_n$ of
sorts $s_1,\ldots,s_n$ respectively.  If $\Zermod \vdash P$, then 
$$\Zskol \vdash s_1^*(x_1) \land  \cdots \land  s_n^*(x_n)\limp P^*$$
\end{proposition}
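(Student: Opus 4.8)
The plan is to prove a stronger, context-carrying statement by induction on the structure of the natural-deduction derivation of $\Zermod\vdash P$ in deduction modulo. Concretely, I would show that whenever $\Gamma\vdash P$ is derivable in $\Zermod$, and $x_1,\ldots,x_m$ of sorts $s_1,\ldots,s_m$ are the free variables occurring in $\Gamma$ and $P$, then
$$s_1^*(x_1),\ldots,s_m^*(x_m),\ \Gamma^*\ \vdash_{\Zskol}\ P^*,$$
where $\Gamma^*$ is the pointwise translation of the hypotheses; the proposition as stated is the special case $\Gamma=\varnothing$. Two auxiliary facts are used throughout: first, a substitution lemma $(P(x\la t))^*\equiv P^*(x\la t^*)$, proved by a routine induction on $P$ using the clauses of Tables~\ref{tbl:AxInterp1} and~\ref{tbl:AxInterp2}; and second, the observation that the translation commutes with every logical connective, so that the propositional rules translate transparently.

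For the purely propositional rules (introduction and elimination for $\land$, $\lor$, $\limp$, $\top$, $\bot$) the translated conclusion is obtained from the translated premises by the same rule of $\Zskol$, since $(A\land B)^*=A^*\land B^*$, $(A\limp B)^*=A^*\limp B^*$, and so on. The quantifier rules are where the relativization guards enter. For $\fa$-elimination, from $P^*$ of the form $\fa x~(s^*(x)\limp A^*)$ I instantiate $x$ with $t^*$ and discharge the guard $s^*(t^*)$ by Proposition~\ref{term}, whose hypotheses are exactly the guards already present in the context; the substitution lemma then identifies $A^*(x\la t^*)$ with $(A(x\la t))^*$. Dually, $\ex$-introduction supplies the witness $t^*$ together with its guard $s^*(t^*)$ from Proposition~\ref{term}, while $\fa$-introduction and $\ex$-elimination simply add the guard $s^*(x)$ for the fresh eigenvariable $x$ to the context, which is legitimate precisely because $x$ does not occur in the remaining hypotheses.

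The heart of the argument is the conversion step: a rule of deduction modulo may replace a premise or conclusion by a congruent formula, so I must show that $A\equiv B$ implies $\Zskol\vdash \mathcal{G}\limp(A^*\liff B^*)$, where $\mathcal{G}$ collects the relevant guards. Proposition~\ref{rules} gives this for a single top-level rewrite $A\lra B$. To lift it to the full congruence I would close it under the two operations generating $\equiv$: rewriting inside a subformula, handled by the compatibility of the translation with the connectives together with the fact that $\liff$ is a congruence in intuitionistic logic; and rewriting inside a term, handled by the term-level analogue of Proposition~\ref{rules} (that $s\lra t$ entails provable equality, resp.\ bisimilarity, of $s^*$ and $t^*$ in the appropriate sort) combined with the substitutivity of the atomic predicates of $\Zskol$ recorded in Table~\ref{tbl:AxInterp2}. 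Once $A^*\liff B^*$ is available under the guards, each conversion step of the $\Zermod$ derivation is matched by a trivial logical manipulation in $\Zskol$.

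I expect the congruence step to be the main obstacle, for two reasons. First, the rewrite system mixes term rules (such as $\root(a/x)\lra x$ or $i'(i(x))\ra x$) with formula rules that turn atoms into compound formul{\ae}, so closing Proposition~\ref{rules} under subterm rewriting requires the correct substitutivity statement for each atomic predicate---and for sort $G$ this is bisimilarity rather than genuine equality, so the relevant inputs are the reification propositions and the bisimilarity clauses of the translation rather than plain Leibniz equality. Second, the relativization bookkeeping must be kept exactly in step: the guards carried in the context have to be precisely those needed to apply Proposition~\ref{term} at every instantiation, which is why the induction is run on the context-carrying sequent formulation rather than on the bare judgement $\Zermod\vdash P$.
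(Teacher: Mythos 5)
Your proposal is correct and takes essentially the same route as the paper, whose entire proof reads ``by induction over proof structure, using Prop.~\ref{term} to justify the rules of quantifiers and Prop.~\ref{rules} to justify conversion steps''; you have simply made explicit the routine bookkeeping (context-carrying sequents, the substitution lemma, closure of the congruence under subformula and subterm rewriting) that the paper leaves implicit. The only mild over-caution is your worry about bisimilarity at sort~$G$: the sole term rewrite rule of that sort, $(a/x)/y\lra a/y$, translates to provably equal terms of $\Zskol$ (both sides reduce to $\<\pi_1(a^*),y^*\>$), so plain Leibniz equality already handles subterm rewriting and no compatibility of $\eta$ with bisimilarity is needed.
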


\begin{proof}
By induction over proof structure, using Prop.~\ref{term} to
justify the rules of quantifiers and 
Prop.~\ref{rules} to justify conversion steps.\qed
\end{proof}

\subsection{Conservative extension}
\label{subsec:ConsExt}

In Section \ref{sec:TranslZstZermod}, we have proved that 
$\Zermod$ was an extension of $\Zst$. We are now ready to prove that
this extension is conservative. 

\begin{proposition}\label{prop:ReifEquiv}
--- For any formula
  $P(x_1,\ldots,x_n)$ of~$\Zst$, the universal closure of the formula
  (with free variables $x_1,\ldots,x_n,g_1,\ldots,g_n$)
  $$\bigwedge_{i=1}^n
    \Reif(x_i, g_i)  \ \limp \
    \bigl(P(x_1,\ldots,x_n)\liff P^{\dag*}(g_1,\ldots,g_n)\bigr)$$
  is a theorem of $\Zskol$.
\end{proposition}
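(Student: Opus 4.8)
The plan is to argue by induction on the structure of the $\Zst$-formula $P$, strengthening the statement so that the number of free variables is arbitrary; the quantifier case then consumes the reification hypotheses one at a time. Throughout I read the hypothesis $\Reif(x_i,g_i)$ according to the roles forced by the two translations, namely that the \emph{set} $x_i$ is the reification of the reifiable \emph{pointed graph} $g_i$ (i.e.\ $\Reif(g_i,x_i)$ in the notation of the Reification definition), so that in particular $\Rgraph(g_i)$ holds. Since the $\dag$-translation only replaces $=$ by $\approx$ and $\in$ by the $\Zermod$-membership while preserving logical structure, and since $(\cdot)^{*}$ commutes with all connectives and relativizes each sort-$G$ quantifier by $\Rgraph$, the composite $(\cdot)^{\dag*}$ is fully determined by its behaviour on the two atomic shapes, on the connectives, and on the quantifiers.

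First I would settle the two atomic cases, which carry the whole content. For an equality we have $(x_i=x_j)^{\dag*}\equiv g_i\approx g_j$, set-theoretic bisimilarity in the sense of Definition~\ref{def:bisim}. Left-to-right is immediate from Proposition~\ref{prop:PGraphsWithSameReif}: if $x_i=x_j$ then $g_i$ and $g_j$ share a reification, hence $g_i\approx g_j$. The converse --- bisimilar reifiable pointed graphs have equal reifications --- is not among the stated lemmas, and supplying it is the \textbf{main obstacle}. I would prove it as a companion to Proposition~\ref{prop:PGraphsWithSameReif}: given a bisimulation $r$ from $g_i=\langle A,a\rangle$ to $g_j=\langle B,b\rangle$ with collapses $\phi,\psi$, define on sets the relation $R(u,v)\equiv\exists p\,\exists q\,(\langle p,q\rangle\in r\land u=\hat{\phi}_A(p)\land v=\hat{\psi}_B(q))$, verify that it relates $x_i=\hat{\phi}_A(a)$ to $x_j=\hat{\psi}_B(b)$ and satisfies the back-and-forth clauses of the Strong Extensionality scheme for the membership relation (using the two-part proposition characterizing $\hat{\phi}_A$ stated just after its definition), and conclude $x_i=x_j$ by Strong Extensionality, exactly as in the proof of Proposition~\ref{prop:CollapseUnique}.

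For a membership $x_i\in x_j$ we have $(x_i\in x_j)^{\dag*}\equiv\exists z\,(\langle z,\pi_2(g_j)\rangle\in\pi_1(g_j)\land g_i\approx\langle\pi_1(g_j),z\rangle)$. Writing $g_j=\langle B,b\rangle$ with collapse $\psi$ and $x_j=\hat{\psi}_B(b)$, the characterization of $\hat{\psi}_B$ gives $x_i\in x_j$ if and only if there is a child $z$ of $b$ in $B$ with $x_i=\hat{\psi}_B(z)$. Since $\langle B,z\rangle=\langle\pi_1(g_j),z\rangle$ is reifiable with reification $\hat{\psi}_B(z)$ (take the same collapse $\psi$ and root $z$ in the definition of $\Reif$), the equality case applied to the pairs $\Reif(g_i,x_i)$ and $\Reif(\langle\pi_1(g_j),z\rangle,\hat{\psi}_B(z))$ turns $x_i=\hat{\psi}_B(z)$ into $g_i\approx\langle\pi_1(g_j),z\rangle$, matching the translated formula clause for clause. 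Thus the membership case reduces to the equality case together with the characterization of $\hat{\phi}_A$. The propositional cases are then pure bookkeeping: from $(P\wedge Q)^{\dag*}=P^{\dag*}\wedge Q^{\dag*}$ (and likewise for $\lor,\limp,\top,\bot$) the biconditional follows from the induction hypotheses under the same reification hypotheses.

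Finally the quantifier cases, where the two domains --- sets and reifiable pointed graphs --- are bridged. Here $(\fa x\,P)^{\dag*}\equiv\fa g\,(\Rgraph(g)\limp P^{\dag*})$ and dually for $\ex$. For the forward direction of $\fa$, given $\fa x\,P(x)$ and some $g$ with $\Rgraph(g)$, the definition $\Rgraph(g)\equiv\ex x\,\Reif(g,x)$ yields a reification $x$, and the induction hypothesis (now with one more free variable, supplied the extra reification pair) turns $P(x)$ into $P^{\dag*}(g)$; for the backward direction, given an arbitrary set $x$, Proposition~\ref{prop:SetsAreReifs} supplies a pointed graph $g$ with $\Reif(g,x)$ and hence $\Rgraph(g)$, and instantiating the hypothesis at $g$ recovers $P(x)$ through the induction hypothesis. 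The existential cases are symmetric, with Proposition~\ref{prop:ReifOfBisimPGraphs} guaranteeing that the reification pairs produced this way are well-defined. The decisive step remains the converse half of the equality atomic case; once that Strong-Extensionality argument is in place, everything else is a routine induction using Proposition~\ref{prop:SetsAreReifs} and the definition of $\Rgraph$ to pass between sets and reifiable pointed graphs.
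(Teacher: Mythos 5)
Your proof is correct and follows essentially the same route as the paper: structural induction on $P$, with the atomic cases handled via the reification lemmas and the quantifier cases bridged by Proposition~\ref{prop:SetsAreReifs} and the definition of $\Rgraph$. The one point where you go beyond the paper is worth keeping: the converse of the equality case needs ``bisimilar reifiable pointed graphs have equal reifications,'' which the paper attributes to Proposition~\ref{prop:ReifOfBisimPGraphs} even though, as literally stated, that proposition only gives uniqueness of the reification of a \emph{single} pointed graph (its name suggests the stronger statement was intended); your Strong Extensionality argument with the relation $R(u,v)\equiv\exists p\,\exists q\,(\<p,q\>\in r\land u=\hat{\phi}_A(p)\land v=\hat{\psi}_B(q))$ correctly supplies the missing lemma, and the same lemma is also what the paper's membership case silently relies on.
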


\begin{proof}
  By structural induction on~$P$.
  \begin{itemize}
  \item If $P(x,y)$ has the form $x = y$, let us assume 
    $\Reif(x,g)$ and $\Reif(y,h)$. We have to prove
    $$x=y \liff g\approx h$$
    and this is a consequence of Prop.~\ref{prop:PGraphsWithSameReif}
    and~\ref{prop:ReifOfBisimPGraphs}.
  \item If $P(x,y)$ has the form $x \in y$, let us assume 
    $\Reif(x,g)$ and $\Reif(y,h)$. 
    The formula $P^{\dag*}(g,h)$ is 
    $$\ex z~(\<z,\pi_2(h)\> \in \pi_1(h)  \land 
    g\approx \<\pi_1(h), z\>)$$ 
    and we have to prove the formula
    $$x \in y ~~\liff~~ \ex z~
    (\<z,\pi_2(h)\>\in\pi_1(h) \land g \approx \<\pi_1(h), z\>)$$
    Let $B=\pi_1(h)$, $b=\pi_2(h)$, and $a=\pi_2(g)$. We have to prove
    $$x \in y ~~\liff~~ 
    \ex z~(\<z,b\>\in B \land g \approx \<B, z\>)$$
    Let $\phi$ be a collapse of $g$ such that $x=\phi(a)$ and $\psi$ a
    collapse of~$h$ such that $y = \psi(b)$.
    \begin{itemize}
    \item Assume $x\in y$.  Then there exists $z\in\Dom(\psi)$ such that
      $\psi(z)=x$ and $\<z,b\>\in B$.  But $x$ is obviously a reification
      of the pointed graph $\<B,z\>$.  Since the pointed graphs~$g$ and
      $\<B,z\>$ have the same reification~$x$, they are bisimilar
      (Proposition~\ref{prop:PGraphsWithSameReif}).
    \item Conversely, assume $z$ such that $\<z,b\>\in B$ and
      $\<B,z\>\approx g$.  From $\<z,b\>\in B$, we get
      $\psi(z)\in\psi(b)=y$.  Since the pointed graphs~$\<B,z\>$ and~$g$
      are bisimilar, their reifications $\psi(z)$ and~$x$ are equal
      from~Proposition~\ref{prop:ReifOfBisimPGraphs}.
    \end{itemize}
  \item If $P(x_1,\ldots,x_n)$ has the form
    $Q(x_1,\ldots,x_n) \land  R(x_1,\ldots,x_n)$, then, by induction
    hypothesis, under the hypotheses 
    $\Reif(x_i,g_i)$, we have\\
    $Q(x_1,\ldots,x_n)\liff Q^{\dag*}(g_1,\ldots,g_n)$
    and $R(x_1,\ldots,x_n)\liff R^{\dag*}(g_1,\ldots,g_n)$. We
    deduce $(Q(x_1,\ldots,x_n) \land  R(x_1,\ldots,x_n))
    \liff (Q^{\dag*}(g_1,\ldots,g_n) \land 
    R^{\dag*}(g_1,\ldots,x_n))$, {\em i.e.}
    $(Q(x_1,\ldots,x_n) \land R(x_1,\ldots,x_n)) \liff
    (Q(g_1,\ldots,g_n) \land R(g_1,\ldots,g_n))^{\dag*}$.

  \item If $P(x_1,\ldots,x_n)$ has the form
    $Q(x_1,\ldots,x_n)\lor R(x_1,\ldots,x_n)$ or\\
    $Q(x_1,\ldots,x_n) \limp R(x_1,\ldots,x_n)$, the proof is
    similar.

  \item If $P(x_1,\ldots,x_n)$ has the form $\fa x~Q(x,x_1,\ldots,x_n)$,
    then $P^{\dag*}(g_1,\ldots,g_n)$ is
    $$\fa g~[\Rgraph(g)\limp Q^{\dag*}(g,g_1,\ldots,g_n)]\,.$$
    \begin{itemize}
    \item Let us assume $P(x_1,\ldots,x_n)$, \emph{i.e.}
      $\fa x~Q(x,x_1,\ldots,x_n)$, and prove\\
      $(P^{\dag})^*(g_1,\ldots,g_n)$, {\em i.e.}
      $\fa g~[\Rgraph(g)\limp Q^{\dag*}(g,g_1,\ldots,g_n)]$.\\
      Let $g$ be a reifiable pointed graph, and $a$ a reification
      of~$g$.  From our assumption, one has $Q(a,x_1,\ldots,x_n)$
      By induction hypothesis, we have $(Q^{\dag})^*(g,g_1,\ldots,g_n)$.
    \item Conversely, assume $(P^{\dag})^*(g_1,\ldots,g_n)$,
      \emph{i.e.}\\
      $\fa g~[\Rgraph(g)\limp Q^{\dag*}(g,g_1,\ldots,g_n)]$, and
      prove\\
      $P(x_1,\ldots,x_n)$, \emph{i.e.}
      $\fa x~Q(x,x_1,\ldots,x_n)$.  Let $x$ be a set.  From
      Prop.~\ref{prop:SetsAreReifs}, there exists a reifiable pointed
      graph~$h$ such that $\Reif(x,h)$.  By induction hypothesis we
      have $Q(x,x_1,\ldots,x_n)$.
    \end{itemize}
    
  \item If $P$ has the form $\ex x~Q$, the proof is similar.\qed
  \end{itemize}
\end{proof}

\begin{theorem}[Conservativity] --- 
Let $P$ be a closed formula in the language of $\Zst$. 
If $\Zermod \vdash P^{\dag}$, then 
$\Zst \vdash P$. 
\end{theorem}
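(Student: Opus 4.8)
The plan is to chain together three results already established: the correction of the translation (Proposition~\ref{toto}), the reification equivalence (Proposition~\ref{prop:ReifEquiv}), and the conservativity of $\Zskol$ over $\Zst$. Assume $\Zermod\vdash P^{\dag}$. Since $P$ is closed, so is $P^{\dag}$, and Proposition~\ref{toto} applies with an empty list of free variables: the relativization antecedent $s_1^*(x_1)\land\cdots\land s_n^*(x_n)$ degenerates to the empty conjunction $\top$, so I obtain directly
$$\Zskol\vdash P^{\dag*}\,.$$

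Next I would invoke Proposition~\ref{prop:ReifEquiv} in the case $n=0$. Because $P$ has no free variables, its universal closure collapses and the conjunction of reification hypotheses $\bigwedge_{i=1}^n\Reif(x_i,g_i)$ is again empty, so the statement reduces to the bare equivalence
$$\Zskol\vdash P\liff P^{\dag*}\,.$$
Combining the two displayed facts by modus ponens yields $\Zskol\vdash P$. Here the hypothesis that $P$ be \emph{closed} is essential: it is precisely what guarantees that both propositions are applied in their empty-context instances, so that no spurious reifiability side-conditions survive.

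Finally I would descend from $\Zskol$ back to $\Zst$. Since $P$ is a formula in the language of $\Zst$, it is in particular a formula of $\Zclass$, so the equivalence $P\liff P^{\circ}$ is (intuitionistically) provable in $\Zclass$. By the very definition of provability in $\Zskol$, the judgement $\Zskol\vdash P$ means $\Zclass\vdash P^{\circ}$, whence $\Zclass\vdash P$. Conservativity of $\Zclass$ over $\Zst$ then delivers $\Zst\vdash P$, as required.

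I do not anticipate a genuine obstacle in this final step, since all the substantive work has been discharged in the preceding propositions; the argument is a routine composition of the translation-correctness, reification-equivalence, and conservativity lemmas. The only point requiring care is the bookkeeping of the empty-context instances of Propositions~\ref{toto} and~\ref{prop:ReifEquiv}, together with the observation that the chain of conservative extensions $\Zst\subseteq\Zclass\subseteq\Zskol$ lets one transport provability of $P$ all the way down to $\Zst$.
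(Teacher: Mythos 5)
Your proof is correct and follows exactly the same route as the paper: apply Proposition~\ref{toto} to get $\Zskol\vdash P^{\dag*}$, use the closed ($n=0$) instance of Proposition~\ref{prop:ReifEquiv} to conclude $\Zskol\vdash P$, and descend to $\Zst$ by conservativity of $\Zskol$ over $\Zst$. Your added bookkeeping about the empty-context instances and the explicit unfolding of $\Zskol\vdash P$ via $P^{\circ}$ are just elaborations of what the paper leaves implicit.
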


\begin{proof}
Assume $\Zermod \vdash P^{\dag}$.
By Proposition~\ref{toto}, we have $\Zskol \vdash(P^{\dag})^*$,
by Proposition~\ref{prop:ReifEquiv} we get $\Zskol \vdash P$
and we conclude using the fact that $\Zskol$ is a conservative
extension of $\Zst$.\qed
\end{proof}

\section{Normalization}
\label{sec:Normalization}

In this section, we prove that all proofs in the theory~$\Zermod$ are
strongly normalizable.
As this theorem implies the consistency of~$\Zermod$, it cannot be
proved in set theory itself.
In~\cite{realizmod} we have generalized the usual notion of relative
consistency proof to a notion of relative normalization proof.
Technically, our normalization theorem is proved under the assumption
that $\ZskolS$ is $1$-consistent.

\subsection{Reducibility candidates}

To prove normalization, we shall use the result proved in
\cite{realizmod}.
For that, we need to define a translation from $\Zermod$ to $\ZskolS$
associating to each term $t$ of $\Zermod$ a term $t^*$ of $\ZskolS$
and to each atomic formula $P$ of $\Zermod$ a formula  $\pi \real P$
of $\ZskolS$.
This translation is then extended to all formul{\ae} as shown
in~\cite{realizmod}.
To define the formula $\pi \real P$ we shall first define a term
$P^*$ expressing a reducibility candidate and then we shall define
$\pi \real P$ as $\pi \in P^*$. 

We refer to \cite{realizmod} for the definition of all notations
related to reducibility candidates.
In particular, we shall denote
$\Proof$ the set of all proof-terms,
$\CR$ the set of all reducibility candidates,
$\SN$ the set of all strongly normalizable proofs
(which is the largest reducibility candidate),
and  $\tilde{\limp}$, $\tilde{\land}$, $\tilde{\lor}$, etc.
the binary operations on $\CR$ that interpret the corresponding
intuitionistic connectives.

An important property of the class of reducibility candidates is
that it is projective.
Indeed, if we define $\cast{X}_{\CR}$ as the intersection of
all reducibility candidates containing $X\cap\SN$
$$\cast{X}_{\CR} ~=~ \{\pi\in\SN\mid\forall r{\in}\CR~
(X\cap \SN\subseteq r\limp\pi\in r)\}$$
we easily check that $\ZskolS$ proves
\begin{enumerate}
\item For all $X$, $\cast{X}_{\CR}\in\CR$
\item If $X\in\CR$, then $X=\cast{X}_{\CR}$.
\end{enumerate}
Moreover, if $X$ is a set of strongly normalizable proofs,
then $\cast{X}_{\CR}$ is the smallest reducibility candidate
containing~$X$.

\subsection{Saturated pointed graphs}

\begin{definition}[Saturated pointed graph]
  A \emph{saturated graph} is a function $R$ whose
  domain is a set of pairs and whose codomain is $\CR$.
  A \emph{saturated pointed graph} is a pair $\<R,r\>$
  formed by a saturated graph~$R$ and an arbitrary object~$r$.
\end{definition}

The formul{\ae} `$x$ is a saturated graph' and
`$x$ is a saturated pointed graph' are written
$\Sgraph(x)$ and $\Spgraph(x)$, respectively.
Again, it is easy to check that
the class of saturated graphs and
the class of saturated pointed graphs
are projective, using the projections:
$$\begin{array}{rcl}
  \cast{X}_{\Sgraph} &\equiv&
  \{c\in\Dom(X)\times\CR\mid
  \pi_2(c)=\cast{X(\pi_1(c))}_{\CR}\} \\
  \cast{X}_{\Spgraph} &\equiv&
  \<\cast{\pi_1(X)}_{\Sgraph},~\pi_2(X)\> \\
\end{array}$$
We check that $\ZskolS$ proves
\begin{enumerate}
\item For all $X$, $\Spgraph(\cast{X}_{\Spgraph})$
\item If $\Spgraph(X)$, then $X=\cast{X}_{\Spgraph}$.
\end{enumerate}
(and similarly for $\Sgraph$).

The carrier $\Car{a}$ of a saturated pointed graph $a$ is defined as
$$\textstyle\Car{a} ~~\equiv~~
\{x\in\bigcup\bigcup\pi_1(x)~|~\ex y~\ex r~
\<\<x,y\>,r\>\in\pi_1(a)\lor\<\<y,x\>,r\>\in\pi_1(a)\}\,.$$

\subsection{Translation of sorts}

We now define the translation of $\Zermod$ into $\ZskolS$.

Each sort $s$ of $\Zermod$ is translated as a sort of $\ZskolS$
written $s_*$ accompanied with a relativization predicate written
$s^*(x)$ (where $x$ is of sort $s_*$).  We then set:
\begin{itemize}
\item $G_*=\Set$,\ \ with\ \ $G^*(x)\equiv \Spgraph(x)$
\item $N_*=\Set$,\ \ with\ \ $N^*(x)\equiv\top$
\item $C_*=\Class$,\ \ with
  $$C^*(c) ~~\equiv~~
  \fa z~(\mem(z,c) \limp \ex x~\ex r~(z=\<x,r\>\land r\in\CR))$$
\item $R_*=\Class$,\ \ with
  $$R^*(c) ~~\equiv~~
  \fa z~(\mem(z,c) \limp \ex x~\ex y~\ex r~(z=\<\<x,y\>,r\>
  \land r\in\CR))$$
\end{itemize}

If $c$ is an element of $C^*$ and if $x$ is any object, we write
$$\textstyle c[x]~~=~~\cast{\bigcup\{r\in\CR\mid
  \mem(\<x,r\>,c)\}}_{\CR}$$
the candidate associated to $x$ in $c$ (or the smallest candidate if
there is no candidate associated to $x$ in $c$).
Similarly, if $c$ is an element of $R^*$ and if $x,y$ are arbitrary
objects, we write
$$\textstyle c[x,y]~~=~~\cast{\bigcup\{r\in\CR\mid
  \mem(\<\<x,y\>,r\>,c)\}}_{\CR}$$
the candidate associated to $\<x,y\>$ in $c$ (or the smallest
candidate otherwise).

\subsection{Translation of function and predicate symbols}

To each function symbol $f$ of arity $n$ of $\Zermod$, we associate 
a term $\tilde{f}(x_1, ..., x_n)$ possibly containing the free variables
$x_1, ..., x_n$.
These ``macros'' will be used later to translate full terms, setting
$(f(t_1,..., t_n))^* \equiv \tilde{f}(t_1^*, ..., t_n^*)$.

We start by some easy function symbols:
$$\begin{array}{r@{~~}c@{~~}l@{\qquad}r@{~~}c@{~~}l@{\qquad}r@{~~}c@{~~}l}
  \tilde{\root}(x) &\equiv& \pi_2(x) &
  x\tilde{/}y &\equiv& \<\pi_1(x),y\> &
  \qquad \tilde{o}~&\equiv&~0 \\[6pt]
  \tilde{i}(x) &\equiv& \<0,x\> &
  \tilde{j}(x) &\equiv& \<1,x\> &
  \tilde{\rho}(x) &\equiv& x\\[6pt]
  \tilde{i'}(x) &\equiv& \pi_2(x) &
  \tilde{j'}(x) &\equiv& \pi_2(x) &
  \tilde{\rho'}(x) &\equiv& \cast{x}_{\Spgraph} \\[6pt]
  \qquad\tilde{0}~&\equiv&~0 &
  \tilde{S}(x) &\equiv& x \cup\{x\} &
  \tilde{\Pred}(x) &\equiv& \bigcup x \\
\end{array}$$

In the same way, to each predicate symbol $P$ of arity $n$ of
$\Zermod$, we associate a term $\tilde{P}(x_1, ..., x_n)$ 
possibly containing the free variables $x_1, ..., x_n$.
From these macros we will translate atomic formul{\ae}
by setting
$(P(t_1,\ldots,t_n))^*\equiv\tilde{P}(t_1^*,\ldots,t_n^*)$.
We set
$$\tilde{\mem}(x,p) ~~\equiv~~ p(x)$$
$$\tilde{\rel}(x,y,p) ~~\equiv~~ p(x,y)$$
$$\tilde{I}(x) ~~\equiv~~ \tilde{J}(x) ~~\equiv~~ \tilde{\Null}(x)
~~\equiv~~ \tilde{\Nat}(x) ~~\equiv~~ \SN$$  
$$(x~\tilde{\eta}_{a}~y) ~~\equiv~~ \cast{\pi_{1}(a)(x,y)}_{\CR}$$ 

Using both definitions 
$(f(t_1,..., t_n))^* \equiv \tilde{f}(t_1^*, ..., t_n^*)$ and 
$(P(t_1,..., t_n))^* \equiv \tilde{P}(t_1^*, ..., t_n^*)$, 
we can now translate all the terms and formul{\ae} containing the
symbols for which we have already introduced a translation.
In particular, we can translate the formula
$\fa p~(\mem(x,p) \Rightarrow \mem(y,p))$. 
We thus translate the predicate symbol $=$ as 
$$x~\tilde{=}~y \equiv \interp{(\fa p~(\mem(x,p) \Rightarrow \mem(y,p)))}$$
In a similar way we take
$$\begin{array}{@{}r@{~~}c@{~~}l@{}l@{}}
  a~\tilde{\approx}~b &\equiv&
  {[}\ex r~( & \rel(\root(a),\root(b),r) \\
    &&& \begin{array}{@{}l@{~}l@{}l@{}}
      \land & \fa x\fa x'\fa y\,( &
      x'\,\eta_a\,x \land \rel(x,y,r) \limp \\
      && \ex y'\,(y'\,\eta_b\,y \land \rel(x',y',r))) \\
      \land & \fa y\fa y'\fa x\,( &
      y'\,\eta_b\,y \land \rel(x,y,r) \limp \\
      && \ex x'\,(x'\,\eta_a\,x \land \rel(x',y',r)))){]}^* \\
    \end{array} \\
  a~\tilde{\in}~b &\equiv&
  {[\ex x~(} & {x~\eta_{b}~\root(b) \land a \approx (b/x))]^*} \\
\end{array}$$

To define $x~\tilde{<}~y$ and $(t < u)^*$, we proceed as follows.
Fix $x_0$ and $y_0$, and consider the sequence of functions
$(f_n)_{n\in\N}:\Cl(\{y_0\})\to\CR$ defined by induction on $n$ as
follows:
\begin{itemize}
\item $f_0$ is defined as the constant function that maps all the
  elements of $\Cl(\{y_0\})$ to the smallest candidate.
\item $f_{n+1}$ is defined from $f_n$ in two steps as follows:
  \begin{itemize}
  \item First we consider the functional graph $f'_n$ defined as
    $$f'_n=\{(0,\bot)\}\cup
    \{\<s(z),~f_n(z)\tlor x_0=z\>\mid y\in\Cl(\{y\})\}$$
  \item Then we set 
    $$f_{n+1}(z)=\cast{\{\pi\in\Proof\mid
    \exists c~(\<z,c\>\in f'_n \land \pi\in c)\}}_{\CR}$$
    for all $z\in\Cl(\{y\})$.
  \end{itemize}
\end{itemize}
Finally we set\quad
$x_0~\tilde{<}~y_0~~\equiv~~
\cast{\bigcup_{n\in\N}f_n(y_0)}_{\CR}$.

We then translate the symbols $\bigcup$, $\{,\}$, $\P$, 
$f_{x,y_1, ..., y_n}$, $\Omega$ and $\Cl$.

The formula $x~\eta_{\bigcup(a)}~x'$ reduces to the formula $P$
which is:
$$\begin{array}{l}
  (\ex y~\ex y'~(x=i(y) \land  x'=i(y') \land y~\eta_{a}~y')) \\
  \lor~~ (\ex y~\ex z~(x=i(y) \land  x'=o \land  y~\eta_{a}~z
  \land z~\eta_{a}~\root(a)))
\end{array}$$
Consider the translation $P^*$ of this formula.
We let 
$$\textstyle\tilde{\bigcup}(a) \equiv \<R,0\>$$
where
$R=\{c\in(X\times X)\times\CR\mid\ex x~\ex x'~
c=\<\<x,x'\>,P^*\>\}$
and $X=(1\times \Car{a})\cup\{0\}$.
We do the same thing for the other constructions.

Finally, remains to define the translation of the symbols 
$g_{x,y_1,\ldots,y_n,P}$ and $g'_{x,x',y_1,\ldots,y_n,P}$.
We set
$$\tilde{g}_{x,y_1,\ldots,y_n,P}(y_1,\ldots,y_n)~~\equiv~~
\LBrc z ~|~ \ex x~z= \<x, P^*\>\RBrc$$
$$(\tilde{g}'_{x,x',y_1,\ldots,y_n,P}(y_1,\ldots,y_n))^*~~\equiv~~
\LBrc z ~|~ \ex x~\ex x'~z= \<\<x,x'\>, P^*\>\RBrc$$

From \cite{realizmod}, to get normalization, we need to prove the
following two lemmas:

\begin{proposition} --- For any atomic formula $A$ of $\Zermod$,
the formula $A^* \in \CR$ is provable in $\ZskolS$.
\end{proposition}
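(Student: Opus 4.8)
The plan is to argue by a finite case analysis on the predicate symbol heading the atomic formula~$A$, leaning throughout on three facts from~\cite{realizmod}: that $\SN\in\CR$; that the projection property $\cast{X}_{\CR}\in\CR$ holds for \emph{every} object~$X$ (regardless of whether $X$ is already a candidate); and that the candidate operations $\tltop,\tlbot,\tland,\tlor,\tlimp,\tfa,\tex$ send candidates, and families of candidates indexed by the elements of a sort, to candidates. Each fact makes one group of symbols immediate, so the whole statement reduces to inspecting how each $A^*$ is assembled.

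First I would dispatch the symbols whose translation is \emph{directly} a candidate. For $A\equiv(x~\eta_a~y)$ one has $A^*=\cast{\pi_1(a^*)(x^*,y^*)}_{\CR}$, and for $A\equiv(x<y)$ one has $A^*=\cast{\bigcup_{n\in\N}f_n(y^*)}_{\CR}$, both in~$\CR$ by the projection property. For $A\equiv\mem(x,p)$ and $A\equiv\rel(x,y,p)$ the translations are read as the candidates $p^*[x^*]$ and $p^*[x^*,y^*]$, each of the form $\cast{\cdots}_{\CR}$ (this is exactly why the notation $c[x]$, $c[x,y]$ was introduced with an outer projection: it yields a candidate for an \emph{arbitrary} $p^*$, with no need for the relativization predicate $C^*$ or $R^*$). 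Finally, when $A$ is one of $I(t)$, $J(t)$, $\Null(t)$, $\Nat(t)$ we have $A^*=\SN\in\CR$.

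Next I would treat the three symbols $=$, $\approx$ and $\in$, whose macros are defined as the candidate interpretation of a \emph{compound} formula over the symbols already handled. These unfold without circularity---$=$ refers only to $\mem$, $\approx$ only to $\rel$ and $\eta$, and $\in$ refers back to $\approx$ and $\eta$---so I would process them in the order $=$, then $\approx$, then $\in$. In each case $A^*$ is obtained from candidate atoms by finitely many applications of $\tlimp$, $\tland$, $\tfa$ and $\tex$, so the third fact yields $A^*\in\CR$.

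The hard part, and the step I would be most careful with, is the interpretation of the quantifiers occurring in these compound cases. A quantifier over a sort is interpreted by $\tfa$ (resp.\ $\tex$) ranging over \emph{all} elements of the corresponding domain, combined with the relativization predicate; to invoke the third fact I must know that the body is a candidate for \emph{every} instantiation of the bound variable, including values lying outside $C^*$, $R^*$ or $N^*$. This is precisely what the projection property secures: the atomic translations $p^*[x^*]$, $\cast{\pi_1(a^*)(x^*,y^*)}_{\CR}$ and the like are candidates unconditionally, so the indexed family consists entirely of candidates and its $\tfa$ (resp.\ $\tex$) is again a candidate. Establishing this uniform candidacy across the bound variables is the only genuinely non-mechanical point; once it is in place, the case analysis closes and the proposition follows in $\ZskolS$.
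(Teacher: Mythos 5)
Your proposal is correct and matches the paper's proof, which is simply stated as ``by induction on the structure of $A$'': your case analysis on the head predicate symbol, ordered so that the derived macros for $=$, $\approx$ and $\in$ are treated after the primitive atoms $\mem$, $\rel$ and $\eta$ they unfold to, is exactly the natural unfolding of that induction. Your observation that the unconditional projection property $\cast{X}_{\CR}\in\CR$ is what makes the quantifier cases go through uniformly is the right key point, and is consistent with the paper's (terse) treatment.
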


\begin{proof} By induction on the structure of~$A$.\qed
\end{proof}

\begin{proposition} --- If $A \lra B$, then 
$A^* = B^*$ is provable in $\ZskolS$ under the 
assumptions $s_i^{*}(x_i)$ for each variable $x_i$ of sort $s_i$ that 
appears in  one of the formul{\ae}~$A$ and~$B$.
\end{proposition}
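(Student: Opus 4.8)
The plan is to verify the identity $A^*=B^*$ separately for each rewrite rule of Table~\ref{tbl:ZermodRewriteRules}, proceeding by induction on the one-step rewrite $A\lra B$. Since the translation $(\cdot)^*$ is defined compositionally---it commutes with the connectives and quantifiers through the candidate operations $\tland,\tlor,\tlimp,\tfa,\tex$, and with term contexts through substitution of the macros $\tilde{f}$---a rewrite performed deep inside a term or formula propagates by an immediate induction on the surrounding context, so it suffices to treat each rule schema at top level. Throughout I work inside $\ZskolS$, freely using the projectivity of $\CR$, $\Sgraph$ and $\Spgraph$ and the preceding proposition, which guarantees $P^*\in\CR$ for every formula $P$ (hence that $\cast{\cdot}_{\CR}$ is idempotent on every translated formula).

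I would first dispatch the routine rules. The term rules ($\root(a/x)\lra x$, $(a/x)/y\lra a/y$, $i'(i(x))\lra x$, $j'(j(x))\lra x$, $\Pred(S(x))\lra x$ and the equations $\root(\cdots)\lra o$) reduce to plain computations with the macros $\tilde{\root},\tilde{/},\tilde{i'},\dots$; the only one consuming a sort hypothesis is $\rho'(\rho(x))\lra x$, whose left side translates to $\cast{x^*}_{\Spgraph}$ and collapses to $x^*$ precisely because the hypothesis $G^*(x^*)=\Spgraph(x^*)$ lets us invoke the projection property $\Spgraph(X)\limp X=\cast{X}_{\Spgraph}$. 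The rules for $=$, $\approx$ and $\in$ hold \emph{by definition}, since the macros $\tilde{=}$, $\tilde{\approx}$ and $\tilde{\in}$ are literally set equal to the translations of their right-hand sides. Finally the node-level predicates $I,J,\Null,\Nat$ are all interpreted by the top candidate $\SN$; since $\top$ and $\bot$ are likewise realized by $\SN$ in this interpretation, each rule among these symbols (e.g.\ $I(j(x))\lra\bot$, $\Nat(S(x))\lra\Nat(x)$) reduces to the triviality $\SN=\SN$, while the two rules for $<$ are matched against the explicit inductive definition of $\tilde{<}$ built from the sequence $(f_n)$.

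The two comprehension rules require unfolding one layer of the class and relation macros. For $\mem(x,g_{x,\vec y,P}(\vec y))\lra P$, the left side translates to $\tilde{\mem}(x^*,(g(\vec y))^*)=(g(\vec y))^*[x^*]$, i.e.\ the candidate associated with $x^*$ in the class $\LBrc z\mid\ex x~z=\<x,P^*\>\RBrc$; by the definition of $c[x]$ this union equals $\cast{P^*}_{\CR}$, which is just $P^*=B^*$ because $P^*$ is already a candidate. The rule for $\rel$ is handled identically, carrying the extra pairing $\<x,x'\>$.

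The genuine work, and the step I expect to be the main obstacle, is the family of construction rules for $\bigcup$, $\{\_,\_\}$, $\P$, $f_{\cdots}$, $\Omega$ and $\Cl$. Here the left side $x~\eta_{\bigcup(a)}~x'$ translates to $x^*~\tilde{\eta}_{\<R,0\>}~{x'}^*=\cast{\pi_1(\<R,0\>)(x^*,{x'}^*)}_{\CR}=\cast{R[x^*,{x'}^*]}_{\CR}$, where the saturated graph $R$ is built to store the candidate $P^*$ of the right-hand disjunction at every pair of its carrier $X\times X$, with $X=(\{0\}\times\Car{a^*})\cup\{0\}$. The plan is to show that $R[x^*,{x'}^*]$ equals $P^*$ when $\<x^*,{x'}^*\>$ lies in $X\times X$ and the smallest candidate otherwise, and that the translated disjunction $P^*$ takes exactly the same two values---the off-carrier case being forced because each disjunct pins $x$ and $x'$ to the relocated shapes $\<0,\cdot\>$, $\<1,\cdot\>$ or $0$ together with an $\eta_a$-edge that confines the free node to $\Car{a^*}$, so that $\tex$ over the now unsatisfiable body yields the smallest candidate. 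Combining this with the idempotence of $\cast{\cdot}_{\CR}$ on $P^*\in\CR$ gives $A^*=B^*$. The delicate point is purely the bookkeeping of the relocation tags and of the carrier set $X$, so that the saturated-graph application $R[\cdot,\cdot]$ and the existential candidates of the disjunction agree pair by pair; this mirrors, at the level of candidates, the carrier-tracking already carried out for truth values in the proof of Proposition~\ref{rules}, and the remaining constructions differ from $\bigcup$ only in the shape of $X$ and the number of relocation injections involved.
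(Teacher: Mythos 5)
Your overall strategy is exactly the paper's: its proof of this proposition is a one-line observation that one checks each rewrite rule separately and that in most cases the claim is immediate because the denotation of the left-hand side was \emph{defined} to be that of the right-hand side. Your treatment of the term and relocation rules, of the rules for $=$, $\approx$, $\in$, and of the comprehension rules for $\mem$ and $\rel$ is a faithful expansion of that remark, and you correctly isolate the two places where something beyond unfolding is consumed: the projection property $\Spgraph(X)\limp X=\cast{X}_{\Spgraph}$ for $\rho'(\rho(x))\ra x$, and the idempotence of $\cast{\cdot}_{\CR}$ on candidates elsewhere.

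The one step I would not accept as written is your discharge of the off-carrier case for the construction rules. You argue that when $\<x^*,{x'}^*\>$ falls outside $X\times X$ the translated right-hand side collapses to the smallest candidate because its disjuncts become ``unsatisfiable''. But a reducibility-candidate interpretation does not track truth values --- as you yourself exploit one paragraph earlier when you observe that $\bot$ must be interpreted by $\SN$ for the rules $I(o)\ra\bot$, $J(o)\ra\bot$, etc., to validate. If $\bot^*=\SN$, an unsatisfiable equation such as $(x=i(y))^*$ has no reason to be small, and since every candidate is inhabited (by neutral terms at least), a candidate of the form $\tex_y\tex_{y'}(\cdots)$ built over a family of nonempty candidates is in general strictly larger than $\cast{\emptyset}_{\CR}$. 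So ``unsatisfiable, hence smallest candidate'' is not an argument that can coexist with $\bot^*=\SN$; whether the required identity holds off the carrier depends on the exact definitions of $\tex$, $\tland$ and of the translation of $=$ imported from the cited framework. To your credit, you have located a genuine proof obligation that the paper's own proof passes over in silence; but the justification you give for it would need to be replaced by an actual computation of $P^*$ at off-carrier arguments (or by an argument that the carrier restriction in the definition of $R$ can be dropped).
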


\begin{proof}  It suffices to prove the formula for each rewrite
  rule $A\lra B$ (for which~$A$ is always an atomic formula).  In most
  cases, this is obvious, since the denotation of the left-hand side
  has been precisely defined as the denotation of the right-hand
  side.\qed
\end{proof}

Thus we get our final theorem.

\begin{theorem} --- 
  If $\ZskolS$ is $1$-consistent, then the theory $\Zermod$
  has the normalization property.
\end{theorem}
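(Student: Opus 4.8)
The plan is to derive the theorem as an instance of the relative normalization criterion of~\cite{realizmod}. That criterion states, roughly, that a theory presented in deduction modulo a congruence~$\equiv$ enjoys the normalization property provided one can exhibit, inside some background theory~$T$, a model valued in reducibility candidates: an assignment sending each atomic formula~$A$ to a reducibility candidate~$\interp{A}\in\CR$ and each term to an object of the model, in such a way that the interpretation is invariant under the rewrite rules generating~$\equiv$. The conclusion is then obtained provided~$T$ is $1$-consistent. Here the background theory is~$\ZskolS$, and the candidate-valued model is precisely the translation built in the previous subsections.

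First I would recall that the translation of sorts, function symbols and predicate symbols has equipped every sort~$s$ of~$\Zermod$ with a carrier sort~$s_*$ and a relativization predicate~$s^*$, every term~$t$ with an interpretation~$\interp{t}$, and every atomic formula~$P$ with a term~$\interp{P}$ intended to denote a reducibility candidate; the realizability relation is then read off as $\pi\real P \equiv \pi\in\interp{P}$. The two conditions demanded by the criterion of~\cite{realizmod} have just been established: the first proposition shows $\ZskolS\vdash\interp{A}\in\CR$ for every atomic~$A$, so that the assignment genuinely lands in~$\CR$; the second shows that $\ZskolS\vdash\interp{A}=\interp{B}$ (under the relativization hypotheses $s_i^*(x_i)$) whenever $A\lra B$, so that the interpretation is stable under each rewrite rule, hence under the whole congruence. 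Proposition~\ref{term} guarantees that the interpretation of a well-formed term of sort~$s$ satisfies~$s^*$, so that quantifiers and substitutions are interpreted coherently with the relativization predicates.

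Granting these verifications, the proof is a direct application: I would instantiate the meta-theorem of~\cite{realizmod} with $T=\ZskolS$, the translation above as the candidate-valued model, and the rewrite system of Table~\ref{tbl:ZermodRewriteRules} as the generator of~$\equiv$. The hypotheses of the meta-theorem being met, its conclusion is exactly that every proof of~$\Zermod$ is strongly normalizable, under the assumption that~$\ZskolS$ is $1$-consistent.

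The genuinely delicate point, which is why the $1$-consistency of~$\ZskolS$ (rather than mere consistency) is required, is the same as in any Girard-style argument transported into a set-theoretic meta-theory: the candidate assignment is defined inside~$\ZskolS$ by an impredicative construction---most visibly in the fixpoint defining $x~\tilde{<}~y$, in the use of $\cast{X}_{\CR}$ and $\cast{X}_{\Spgraph}$, and in the interpretation of the comprehension symbols $g_{\cdots}$ and $f_{\cdots}$---and the reducibility-candidate model lives genuinely at the second-order level (the denotation of~$\Class$ being~$V_{2\omega+1}$). The meta-theorem of~\cite{realizmod} is precisely designed to convert such an internal candidate model into an external normalization statement, and the arithmetical content of ``every proof normalizes'' is what forces the $1$-consistency hypothesis rather than plain consistency. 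Thus the only real work left for this theorem is to check that the data assembled in this section fits the interface of that meta-theorem; the combinatorial heart of the matter has already been discharged in the two preceding propositions, and I expect matching the interface---especially reconciling the relativization predicates with the quantifier clauses---to be the one place demanding care.
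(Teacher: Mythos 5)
Your proposal is correct and follows essentially the same route as the paper, whose own ``proof'' of this theorem is simply the observation that the two preceding propositions ($A^*\in\CR$ for atomic~$A$, and $A^*=B^*$ whenever $A\lra B$) are exactly the hypotheses of the relative normalization meta-theorem of~\cite{realizmod}, instantiated with $T=\ZskolS$. One small slip: Proposition~\ref{term} concerns the conservativity translation into~$\Zskol$, not the candidate-valued translation into~$\ZskolS$, so the coherence of terms with the relativization predicates that you invoke is an analogous (implicit) fact rather than that proposition itself.
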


\section{Witness properties}

\begin{corollary}[Witness property in $\Zermod$]
  \label{c:WitnessZermod}
  If a closed formula $\exists x~P(x)$ is provable in $\Zermod$,
  then there exists a term~$t$ in~$\Zermod$ (of the same sort as the
  variable~$x$) such that the formula $P(t)$ is provable.
\end{corollary}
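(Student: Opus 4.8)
The plan is to derive the witness property directly from the normalization theorem just established. Working in the intuitionistic natural deduction of deduction modulo, I would take a proof $\pi$ of the closed formula $\ex x~P(x)$ in the empty context, and use the normalization theorem (valid under the hypothesis that $\ZskolS$ is $1$-consistent) to reduce $\pi$ to a \emph{normal} proof $\pi'$ of the same formula. The witnessing term is then read off from the last inference of $\pi'$.

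The heart of the argument is the structural analysis of normal proofs. First I would establish the standard fact that a normal proof whose context contains no open assumption must end with an introduction rule: if its last inference were an elimination rule, then tracing the chain of major premises upward in a normal proof one would necessarily reach a leaf that is an open assumption, which is impossible when the context is empty. It is precisely here that the formulation of $\Zermod$ with rewrite rules \emph{and no axioms} is essential, as stressed in the introduction: there are no axiom-leaves that could terminate such a chain without being an open assumption, so cut-free proofs do end with an introduction rule. Next, since the conclusion $\ex x~P(x)$ has an outermost existential quantifier, and the congruence of $\Zermod$ is generated by rewrite rules whose left-hand sides are terms or \emph{atomic} formul{\ae}, the congruence cannot replace this top-level quantifier by another connective; hence the only introduction rule applicable at the root is the $\ex$-introduction. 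That rule exhibits a term $t$, of the same sort as~$x$, whose premise is a proof of a formula congruent to $P(x \la t)$. Because provability in deduction modulo is closed under the congruence (the conversion rule), this premise is already a proof of $P(t)$, which is the desired witness.

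The main obstacle is not the extraction step itself but the justification of the normal-form characterization in the deduction-modulo setting, where an atomic formula may rewrite to a compound one. Concretely, one must verify that proof reduction commutes suitably with the congruence and that, even though the congruence class of $\ex x~P(x)$ may also contain atomic formul{\ae} (if some atomic predicate rewrites to an existential), confluence of the rewrite system guarantees that the body introduced by the $\ex$-introduction is congruent to $P$, so that the recovered witness genuinely satisfies $P(t)$. These facts are exactly the shape-of-normal-proofs results of deduction modulo~\cite{DowekWerner}; with them in hand, the corollary follows, noting that strong normalization is used only through the weak normalization it entails.
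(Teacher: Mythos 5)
Your proposal is correct and follows essentially the same route as the paper, whose entire proof is the one-line observation that a cut-free proof ends with an introduction rule; you have merely spelled out the normalization step, the ``ends with an introduction'' argument for an axiom-free theory, and the confluence-based check that the introduced connective must be the existential. The extra details you supply are exactly those the paper delegates to the standard shape-of-normal-proofs results for deduction modulo in~\cite{DowekWerner}.
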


\begin{proof} A cut-free proof ends with an introduction rule.
\end{proof}

\begin{corollary}[Non-numerical witness in~$\Zst$]
  If a closed formula $\exists x~P(x)$ is provable in~$\Zst$,
  then there exists a formula $D(y)$ with one free variable~$y$
  such that
  \begin{enumerate}
  \item The formula $\exists!x~D(x)$ is provable in $\Zst$.
  \item The formula $\forall x~(D(x)\limp P(x))$
    is provable in $\Zst$.
  \end{enumerate}
\end{corollary}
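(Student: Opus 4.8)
The plan is to transport the existential statement through the two translations already established and to read a witness off the normalized proof in~$\Zermod$. First I would apply Theorem~\ref{extension} to the hypothesis $\Zst\vdash\ex x~P(x)$: since the translation $\dag$ sends $\in$ to $\in$, $=$ to $\approx$ and commutes with the logical structure (turning the quantifier into one of sort $G$), we obtain $\Zermod\vdash\ex x~P^{\dag}(x)$. By the witness property in~$\Zermod$ (Corollary~\ref{c:WitnessZermod}, which rests on the normalization theorem) there is a closed term~$t$ of sort~$G$ with $\Zermod\vdash P^{\dag}(t)$.

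Next I would move back to set theory through the translation $*$. By Proposition~\ref{toto} the closed formula $P^{\dag}(t)$ gives $\Zskol\vdash(P^{\dag}(t))^{*}$, and since $*$ commutes with substitution this reads $\Zskol\vdash P^{\dag*}(t^{*})$, where $t^{*}$ is a closed term of sort~$\Set$. By Proposition~\ref{term}, $t$ being of sort~$G$ yields $\Zskol\vdash\Rgraph(t^{*})$, so $t^{*}$ is a reifiable pointed graph; its reification exists and is unique by Proposition~\ref{prop:ReifOfBisimPGraphs}, whence $\Zskol\vdash\ex! y~\Reif(t^{*},y)$. Applying Proposition~\ref{prop:ReifEquiv} to the one-variable formula~$P$, with its graph variable instantiated by~$t^{*}$, shows that under the hypothesis $\Reif(t^{*},y)$ the equivalence $P(y)\liff P^{\dag*}(t^{*})$ holds in $\Zskol$; combined with $\Zskol\vdash P^{\dag*}(t^{*})$ this gives $\Zskol\vdash\fa y~(\Reif(t^{*},y)\limp P(y))$.

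It then remains to replace the $\Zskol$-term $t^{*}$ by an honest defining formula of~$\Zst$. Because $t^{*}$ is built solely from the set Skolem symbols $\bigcup$, $\{{-},{-}\}$, $\P$, $\{x\in{-}\mid{-}\}$, $\N$, $\Cl$, each introduced to name a provably unique object, and because $\Reif$ is itself a first-order set formula, the formula $\Reif(t^{*},y)$ is $\Zskol$-provably equivalent to a formula $D(y)$ in the primitive language of~$\Zst$: concretely one unfolds the Skolem symbols through the $\incirc$-translation, whose clauses stay class-free on set terms (the only comprehensions occurring in~$t^{*}$ use restricted formulas built from $\in$ and the set-theoretic $\approx$), to express ``$y$ is the reification of the pointed graph denoted by~$t^{*}$'' with no Skolem symbol and no variable of sort~$\Class$. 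With such a~$D$ we have $\Zskol\vdash\ex! y~D(y)$ and $\Zskol\vdash\fa y~(D(y)\limp P(y))$, both of which are sentences of~$\Zst$; since $\Zskol$ is a conservative extension of~$\Zst$, conditions~1 and~2 follow in~$\Zst$ itself.

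The main obstacle is precisely this last step. The object produced by the witness property lives in the world of pointed graphs, so $t^{*}$ is the pair $\<A,a\>$ rather than the set it is meant to denote; one must pass to its \emph{reification} to recover a genuine set and then verify that this reification is not merely $\Zskol$-definable but definable in the bare language of~$\Zst$. Securing the equivalence $\Zskol\vdash\fa y~(D(y)\liff\Reif(t^{*},y))$ for a~$D$ free of Skolem symbols and of the sort~$\Class$ is the delicate point; everything else is an assembly of Propositions~\ref{term}, \ref{toto}, \ref{prop:ReifEquiv} and~\ref{prop:ReifOfBisimPGraphs} together with the conservativity of~$\Zskol$ over~$\Zst$.
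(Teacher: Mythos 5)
Your proposal follows exactly the paper's route: push $\ex x\,P(x)$ through $\dag$ into $\Zermod$, extract a closed witness term $t$ from a normalized proof via the witness property, translate back to get $P^{\dag*}(t^*)$, and take $D(y)$ to be ``$y$ is the reification of $t^*$'', with existence from Proposition~\ref{term}, uniqueness from Proposition~\ref{prop:ReifOfBisimPGraphs}, and the implication from Proposition~\ref{prop:ReifEquiv}. The only difference is that you spell out the final de-skolemization of $\Reif(t^*,y)$ into a genuine $\Zst$-formula and the appeal to conservativity of $\Zskol$ over $\Zst$, a step the paper leaves implicit; your treatment of it is correct and a useful clarification.
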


\begin{proof}
  The formula $\exists x~P^{\dagger}(x)$ is provable in~$\Zermod$,
  hence by corollary~\ref{c:WitnessZermod} there exists a term~$t$
  such that $P^\dagger(t)$ is provable in~$\Zermod$.
  Hence the formula $P^{\dagger*}(t^*)$ is provable in~$\Zst$.
  Consider the formula $D(y)\equiv\Reif(y,t^*)$.
  From Prop.~\ref{term}, we have $\Rgraph(t^*)$, that
  is: $\exists x~D(x)$.
  Uniqueness follows from Prop.~\ref{prop:ReifOfBisimPGraphs}.
  From Prop.~\ref{prop:ReifEquiv}, we have
  $$\forall x~\forall g~(\Reif(x,g)\limp
  (P(x)\liff P^{\dagger*}(g)))$$
  hence
  $$\forall x~(\Reif(x,t^*)\limp(P^{\dagger*}(t^*)\limp P(x)))$$
  As we have $P^{\dagger*}(t^*)$, we get
  $\forall x~(D(x)\limp P(x))$.\qed
\end{proof}

\begin{corollary}[Numerical witness in~$\Zst$]
  If a closed formula of the form
  $\exists~x(\Nat(x)\land P(x))$ is provable
  in~$\Zst$, then there exists a natural number $n$ such that
  the formula
  $$\exists x~(\mathrm{Is}_n(x)\land P(x))$$
  is provable in $\Zst$, 
  where the formula $\mathrm{Is}_n(x)$ is inductively defined by
  $$\mathrm{Is}_0(x)\equiv\Empty(x)\qquad\text{and}\qquad
  \mathrm{Is}_{n+1}(x)\equiv
  \exists y~(\mathrm{Is}_n(y)\land\Succ(y,x))$$ 
\end{corollary}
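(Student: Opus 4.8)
The plan is to follow the pattern of the two preceding corollaries --- push the witness through~$\Zermod$ --- but then to descend to the \emph{node-level} copy of arithmetic, where normalization yields genuine numerical information. First, since $\Zst\vdash\ex x~(\Nat(x)\land P(x))$, Theorem~\ref{extension} gives $\Zermod\vdash\ex x~((\Nat(x))^{\dag}\land P^{\dag}(x))$, so by Corollary~\ref{c:WitnessZermod} there is a closed term~$t$ of sort~$G$ with $\Zermod\vdash(\Nat(t))^{\dag}\land P^{\dag}(t)$. Now $(\Nat(t))^{\dag}$ is literally $\fa b~(\Ind(b)\limp t\in b)$ (with set equality read as~$\approx$); instantiating $b:=\Omega$ and using lemma~51, namely $\Ind(\Omega)$, I obtain $\Zermod\vdash t\in\Omega$.

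Next I unfold this membership down to the arithmetic on nodes. The formula $t\in\Omega$ rewrites to $\ex x~(x~\eta_{\Omega}~\root(\Omega)\land t\approx(\Omega/x))$, and $\root(\Omega)$ rewrites to~$o$, so Corollary~\ref{c:WitnessZermod} yields a closed node term~$x_0$ with $x_0~\eta_{\Omega}~o$ and $t\approx(\Omega/x_0)$. Lemma~25 turns $x_0~\eta_{\Omega}~o$ into $\ex y~(x_0=i(y)\land\Nat(y))$, and one further application of the witness property produces a closed node term~$u$ with $\Zermod\vdash x_0=i(u)$ and $\Zermod\vdash\Nat(u)$; substitutivity of node equality (lemma~2, applied to the formula $\Omega/\_\approx t$, which contains no symbol $g_{\cdots}$ or $g'_{\cdots}$) then gives $\Zermod\vdash t\approx(\Omega/i(u))$. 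Here $\Nat$ is the \emph{node} predicate, governed by the rules $\Nat(0)\lra\top$ and $\Nat(S(x))\lra\Nat(x)$.

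The heart of the argument is to read off the numeral from~$u$ by normalization. The formula $\Nat(u)$ is atomic and~$u$ is closed, so by the normalization theorem its cut-free proof must end with an introduction rule; for an atomic formula this is possible only when the formula is congruent to~$\top$. Tracing the rewrite rules, $\Nat(u)\equiv\top$ forces the normal form of~$u$ to be a numeral, i.e.\ $u\lra^{*}S^{n}(0)$ for some $n\in\N$: a stuck head such as~$o$ or $i(\cdot)$ would leave $\Nat(u)$ an irreducible atom with no available introduction rule, hence unprovable. This~$n$ is the witness we seek. I expect this extraction to be the main obstacle: one must argue rigorously that a closed normal proof of the atomic formula $\Nat(u)$ can exist only when~$u$ computes to $S^{n}(0)$, which is exactly where the full strength of the normalization theorem, together with the clean computational behaviour of the node arithmetic, is used.

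Finally I transport the result back to~$\Zst$. From $u\lra^{*}S^{n}(0)$ and $t\approx(\Omega/i(u))$ we get $t\approx(\Omega/i(S^{n}(0)))$, and an induction on~$n$ using lemmas~47 and~48 shows $\Omega/i(S^{n}(0))\approx\nu_{n}$, where $\nu_{0}\equiv\varnothing$ and $\nu_{n+1}\equiv\bigcup(\{\nu_{n},\{\nu_{n}\}\})$ is the $\Zermod$-representation of the $n$-th von Neumann numeral. Hence $\Zermod\vdash t\approx\nu_{n}$, and substitutivity (lemma~32) gives $\Zermod\vdash P^{\dag}(\nu_{n})$. By Proposition~\ref{toto} we obtain $\Zskol\vdash P^{\dag*}(\nu_{n}^{*})$ and $\Zskol\vdash t^{*}\approx\nu_{n}^{*}$. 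Writing $x_{n}$ for the set satisfying $\mathrm{Is}_{n}(x_{n})$, one checks $\Reif(x_{n},\nu_{n}^{*})$ directly from the definition of $\nu_{n}^{*}$, so Propositions~\ref{prop:ReifEquiv} and~\ref{prop:ReifOfBisimPGraphs} yield $P(x_{n})$. Therefore $\ex x~(\mathrm{Is}_{n}(x)\land P(x))$ is provable in~$\Zst$, as required.
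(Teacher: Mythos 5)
Your proposal is correct and follows essentially the same route as the paper's proof: extract a witness $t$ in $\Zermod$, show $t\in\Omega$, descend to a node term with $\Nat(u)$, use normalization of the cut-free proof of this atomic formula to force $u$ to reduce to $S^n(0)$, and transport back via the reification machinery. The only (cosmetic) divergence is in the last step, where the paper proves $\mathrm{Is}_n^{\dag}(\Omega/i(S^n(0)))$ directly in $\Zermod$ by induction on~$n$ and then applies Proposition~\ref{prop:ReifEquiv} to the whole formula $\ex x~(\mathrm{Is}_n(x)\land P(x))$, whereas you route through explicit von Neumann numeral terms $\nu_n$ and a concrete reification check in $\Zskol$ --- both rest on the same induction using lemmas 47--48.
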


\begin{proof}
  The formula 
  $\exists~x(\Nat^{\dagger}(x)\land P^{\dagger}(x))$ is provable 
  in $\Zermod$, hence there exists a term $t$ in $\Zermod$ such that 
  $\Nat^{\dagger}(t)$ and $P^{\dagger}(t)$ are provable. 
  We check that the formula 
  $\fa x~(Nat^{\dagger}(x) \limp x \in \Omega)$ is provable in
  $\Zermod$. Hence the formula $t\in\Omega$, i.e.
  $\ex x~(x~\eta_{\Omega}~o)\land t\approx(\Omega/x)$ is provable.
  Again there exists a term~$u$ such that the formulae
  $u~\eta_{\Omega}~o$ and $t \approx (\Omega / u)$ are provable.
  The formula $u~\eta_{\Omega}~o$ is equivalent by elementary means
  to $\ex y~(u = i(y) \land Nat(y))$.
  
  Thus there exists a term $v$ such that $u=i(v)$ and $\Nat(v)$
  are provable.
  
  A cut free proof of the formula $\Nat(v)$ ends with an introduction
  rule. Hence $\Nat(v)$ reduce to a non atomic formula and~$v$ has the
  form $S^n(0)$ for some~$n$.

  Thus the formula $t\approx(\Omega/i(S^n(0)))$ is provable.

  We check, by induction on $n$ that the formula
  $\mathrm{Is}_n^{\dagger}(\Omega/i(S^n(0)))$ is provable
  in~$\Zermod$.
  Hence the formula
  $\ex x~(\mathrm{Is}_n^{\dagger}(x) \land P^{\dagger}(x))$ is
  provable in~$\Zermod$. Hence 
  $\ex x~(\Rgraph(x)\land\mathrm{Is}_n^{\dagger*}(x) \land
  P^{\dagger*}(x))$ is provable in $\Zst$ and by
  Prop. \ref{prop:ReifEquiv}, 
  the formula 
  $\ex x~(\mathrm{Is}_n(x) \land P(x))$ is provable
  in $\Zst$.
\end{proof}

\section{Conclusion}

In this paper we have given a normalization proof for Zermelo set
theory extended with Strong Extensionality and Transitive Closure.

This theorem can also be attained as a corollary of the existence of a
translation of $\Zst$ into type theory~\cite{Miq04}
(using stronger assumptions than the $1$-consistency of~$\ZskolS$).
However, instead of expressing set theory on top of a theory of graphs
defined in type theory, we expressed it on top of a theory of
graphs simply expressed in predicate logic.
The fact that this theory can be expressed with computation rules only
and no axioms is a key element for the cut-free proof to end with an
introduction rule.
This shows that the key feature of type theory used in these
translations is the feature captured by Deduction modulo: the
possibility to mix computation and deduction.

Along the way we have proposed a typed lambda-calculus where all
terms normalize and where all provably total functions of set theory
can be expressed. It should be noticed that the syntax of
lambda-calculus is exactly that of the proofs of predicate logic
(\emph{i.e.} variables, abstractions and applications, pairs and
projections, disjoints union and definition by cases, \dots).
No new construction is needed, only the type system is new.

One striking feature of this expression of set theory in Deduction modulo
is the presence of the extensionality axiom. Extensionality axioms are
usually difficult to transform into computation rules. For instance,
for extensional simple type theory there is, as far as we know, no
expression in Deduction modulo and no normalization proof. The idea
is to define equality in such a way that it is extensional and then
prove that it is substitutive on the considered part of the language.
Whether this method can be generalized to extensional simple type
theory still remains to be investigated.

Our investigation on normalization has lead us to consider an
extension of Zermelo set theory with Strong Extensionality and
Transitive Closure. This raises the question of the interest
\emph{per se} of this theory. In particular, the fact that transitive
closure cannot always be constructed in Zermelo set
theory~\cite{EH99} can be seen as a weakness of this theory,
which is repaired by the transitive closure axiom.
However, we leave open the question of the various
axioms of set theory that can be added or removed from our choice of
axioms: both for weaker theories, for instance without the Transitive
Closure axiom and for stronger theories, for instance with the
collection scheme, the axiom of choice or the continuum hypothesis. 

Finally, the fact that set operations need to be decomposed into
more atomic operations raises the question of the relevance of the
choice of the notion of set for the foundation of mathematics. It might
be the case that founding mathematics directly on the notion of graph
would be more convenient.

\nocite{Acz99,Fri73,Kri98,Kri01,Kri03,McCPhD,Mel97,Myh73,Wer97}
\bibliographystyle{plain}
\bibliography{zerbiblio}

\end{document}